\theoremstyle{plain}
\newtheorem{theorem}{Theorem}
\newtheorem{lemma}{Lemma}
\theoremstyle{definition}
\begin{document}

\begin{frontmatter}

\title{Subgroup analysis for the functional linear model}

\author[1]{Yifan Sun\corref{equalcont}}
\author[1]{Ziyi Liu\corref{equalcont}}
\author[1]{Wu Wang\corref{mycorrespondingauthor}}

\address[1]{Center for Applied Statistics, School of Statistics, Renmin University of China, China}

\cortext[equalcont]{equal contributions from these authors}
\cortext[mycorrespondingauthor]{Corresponding author. Email address: \url{wu.wang@ruc.edu.cn}}

\begin{abstract}
Classical functional linear regression models the relationship between a scalar response and a functional covariate, where the coefficient function is assumed to be identical for all subjects. In this paper, the classical model is extended to allow heterogeneous coefficient functions across different subgroups of subjects. The greatest challenge is that the subgroup structure is usually unknown to us. To this end, we develop a penalization-based approach which innovatively applies the penalized fusion technique to simultaneously determine the number and structure of subgroups and coefficient functions within each subgroup. An effective computational algorithm is derived. We also establish the oracle properties and estimation consistency. Extensive numerical simulations demonstrate its superiority compared to several competing methods. The analysis of an air quality dataset leads to interesting findings and improved predictions. 
\end{abstract}

\begin{keyword} 
ADMM algorithm \sep
B-spline basis \sep
Functional linear regression \sep
Minimax concave penalty \sep
Subgroup analysis
\MSC[2020] Primary 62H12 \sep
Secondary 62F12
\end{keyword}

\end{frontmatter}

\section{Introduction\label{sec1}}

With the rapid development of information acquisition and transmission technologies, more and more data are being collected continuously during a time interval. Such data are coined as ``functional data". As a new branch of statistics in recent decades, functional data analysis (FDA) encompasses statistical methods and techniques for such data. See \cite{Ramsay02, Ramsay05} for a comprehensive introduction. Functional linear models (FLMs) are useful for modeling the linear relationship between a scalar response and a functional covariate: 
\begin{equation}\label{homo_model1} 
	y_i=\int_{\mathcal{X}} X_i(t)\beta(t)dt+\epsilon_i, i\in\{1,\ldots, n\},
\end{equation}
where $X_i(t)$ is a square-integrable predictor function defined on a compact support $\mathcal{X}$ of $\mathbb{R}$, $\beta(t)$ is a square-integrable coefficient function defined on $\mathcal{X}$, and the random errors $\epsilon_i$ are uncorrelated with a zero mean and a constant variance. FLMs have been studied extensively in the literature \citep{cardot99, cardot03, Hilgert2013}.

In traditional FLMs, the relationship between the response and functional covariate is often assumed to be identical for all subjects. However, due to latent factors or unobserved covariates, the subjects may belong to different subgroups that have heterogeneous relationships between the response of interest and an observed functional covariate. For example, \cite{yao12} discovered two subgroups of medflies, where the effect patterns of the early fertility process on the lifespan are different across two subgroups. In the study by \cite{preda05}, eighty-four shares quoted at the Paris stock exchange were divided into several subgroups, where the evolution of the index in a past time period has varied effects on the future index for different subgroups. Also, in the empirical analysis conducted in this paper, we find that different regions in China have distinct patterns of association between the NO$_2$ concentration and average PM$_{2.5}$ concentration. 

For the heterogeneous FLMs, some subgroup analysis approaches have been proposed to identify latent group structures. \cite{preda05} developed an approach that integrates the partial least squares and functional principal components analysis, but the statistical properties of the proposed approach have not been discussed. Recently, \cite{yao12} proposed a functional mixture regression (FMR) where the coefficient functions are allowed to vary for different groups of subjects. However, similar to the well-known finite mixture models for traditional linear regressions, FMR also needs to specify a parametric assumption for the conditional density of the scalar responses, often difficult to do in practice. 

In addition, the aforementioned subgroup analysis approaches must specify the number of subgroups, which can be problematic in application. Recently, the penalized fusion technique \citep{Tibshirani05} has attracted extensive attention in subgroup analysis \citep{Ma2017, Ma2020, Hu2021}
and is advantageous in multiple aspects. Specifically, it has a more intuitive definition, determines the number of groups automatically, and can, in principle, accommodate small groups. However, the existing studies focus on linear regression models with scalar covariates, and cannot be directly applied to FLMs. 

In this article, we propose a penalized fusion approach for heterogeneous FLMs that can divide the subjects into disjoint subgroups with different coefficient functions. This approach applies a concave penalty to pairwise differences of B-spline coefficient vectors. Compared to FMR, it does not need to specify the number of subgroups in advance; instead, it automatically determines the number of subgroups in a data-driven manner. This, thus, allows us to identify the subgroup structure and estimate coefficient functions simultaneously without any prior knowledge about the subgroup structure. 

This work is motivated by the air pollution study in China that focuses on how the average $\text{PM}_{2.5}$ relates to $\text{NO}_{2}$, and exploring distinctive relationship patterns between $\text{PM}_{2.5}$ and $\text{NO}_{2}$ in different geographical regions, based on the data collected from hundreds of air monitoring stations across China. As a result, we observe a clear spatial clustering that is highly consistent with geographical regionalization of China, where each cluster corresponds to a coefficient function. Moreover, some geographically nonadjacent stations with similar climate characteristics are also combined in the same cluster. Compared with alternative methods, the proposed approach gives a very different subgrouping results, and has a better prediction performance.

To implement the proposed approach, we develop an alternating direction method of multipliers (ADMM) algorithm, which is widely used in penalization-based methods and has desirable convergence properties \citep{Boyd2011}. Moreover, we establish the consistency property for the proposed approach in identifying the subgroup memberships and estimating the coefficient functions. Specifically, we, firstly, establish the consistency of the oracle estimator with prior knowledge of the true subgroup structure. Then we show that, under mild regularity conditions, the oracle estimator is a local minimizer of the objective function with a high probability. Extensive simulation and real data analysis demonstrate that the proposed approach outperforms several competitors in identifying subgroups and estimating coefficient functions.   

The rest of this paper is organized as follows. We propose the heterogeneous functional linear model, introduce a penalized fusion estimation approach along with the ADMM algorithm, and establish theoretical properties in Section \ref{sec2}. In Section \ref{sec3}, we evaluate the finite sample properties of the proposed approach via simulation studies. We use the proposed approach to analyze an air quality dataset in Section \ref{sec4}. Section \ref{sec5} gives some concluding remarks. The details of the algorithm, proofs of theorems, and additional results of numerical simulations are provided in the supplementary material. 

\section{Methods\label{sec2}}

\subsection{Penalized fusion estimation}\label{sec31}
Consider a dataset with $n$ independent samples, each with a scalar response and a functional covariate. For the $i$th subject, let $y_i$ be the response, and $X_i(\cdot)$ be the functional covariate defined on a compact set $\mathcal{X}=[0,T]$. Assume that the data have been standardized such that there is no intercept. Consider the 
heterogeneous FLM: 
\begin{equation}\label{model_ori}
	y_i=\int_{\mathcal{X}}X_{i}(t)\beta_{i}(t)dt+\epsilon_i, i\in\{1,\ldots, n\},
\end{equation}
where the random errors $\epsilon_i$'s are uncorrelated with mean 0 and variance $\sigma^2$, and $\beta_{i}(\cdot)$ is the unknown coefficient function 
defined on $\mathcal{X}$. Different from the classical FLM, each subject has its own unique coefficient function. 

We assume that $n$ subjects form $K$ disjoint subgroups and the coefficient functions are the same within each subgroup. Specifically, let $\mathcal{G}=\{\mathcal{G}_1,\ldots,\mathcal{G}_K\}$ be a partition of $\{1,\ldots, n\}$, where $K(K\leq n)$ is the number of subgroups. Then $\beta_i=\beta_j$ for any $i,j\in\mathcal{G}_k$. As such, subgroup analysis amounts to determining which $\beta_i$'s are equal to each other. To this end, we adopt a penalized fusion approach, which is more flexible than the FMR and some other techniques. For example, the number of subgroups does not need to be assumed a prior, but rather is determined fully data-driven. In addition, it can potentially accommodate smaller subgroups. 

Specifically, we first leverage the B-spline basis method to approximate the coefficient function $\beta_i(\cdot)$'s. The $q$th-order B-spline basis functions with a set of $m$ internal knots $\{\kappa_l\,|\,\kappa_{-q+1}=\ldots =\kappa_0<\kappa_1<\ldots <\kappa_{m+1}=\ldots =\kappa_{m+q}, \kappa_l=lT/(m+1), l\in\{0,\ldots,m+1\}\}$ are defined recursively \citep{Boor1978} as 
\begin{equation*}
	B_l^1(t)=\left\{
	\begin{aligned} 
		1 & & \kappa_{l-1} \leq x<\kappa_{l}\\
		0 & & \text{otherwise}
	\end{aligned}
	\right.
\end{equation*}
and 
\begin{equation*}
	B_l^q(t)=\frac{t-\kappa_{l-1}}{\kappa_{q+l-2}-\kappa_{l-1}}B_{l-1}^{q-1}(t)+\frac{\kappa_{q+l-1}-t}{\kappa_{q+l-1}-\kappa_{l}}B_{l}^{q-1}(t), \; l\in\{1,\ldots,m+q\},
\end{equation*}
where $t\in [0,1]$. Let $p=m+q$. Then the coefficient function $\beta_i(t)$ can be approximated through its projection on the B-spline space as 
\begin{equation*}
	\beta_i(t)\approx\sum_{l=1}^{p} B_l^q(t)\theta_{il}\equiv \bm{B}\bm \theta_i, 
\end{equation*}
where $\bm{B}=(B_1^q(t),\ldots,B_{p}^q(t))$ is a B-spline basis vector, $\bm{\theta}_i=(\theta_{i1},\cdots,\theta_{ip})^\top=\underset{\bm{\theta_i}\in\mathbb{R}^p}{argmin}\Vert\beta_i-\bm{B}\bm \theta_i\Vert_2$ is a $p$-dimensional coefficient vector, and $\Vert f\Vert_2 = \sqrt{\int_\mathcal{X}f(t)^2\text{d}t}$ as the $L_2$ norm of a function $f$. Our goal is to identify the latent subgroup structures of $n$ subjects in terms of coefficient functions $\beta_i$'s. This is, thus, equivalent to distinguishing between B-spline coefficient vectors $\bm\theta_i$'s. 

Then, we propose the following objective function 
\begin{equation}\label{lossfunc}
	\begin{split}
		Q_n(\bm{\theta};\bm \lambda)&= \frac{1}{2}\sum_{i=1}^n(y_i-\bm H_i\bm \theta_i)^2 +\frac{1}{2}\lambda_1\sum_{i=1}^n\Vert D^2 (\bm B\bm\theta_i)\Vert_2^2\\
		&+\sum_{i<j}\rho (\bm\theta_i-\bm\theta_j,\omega_{ij}\lambda_2), 
	\end{split}
\end{equation}
where $\bm\theta=(\bm\theta_1^\top,\ldots,\bm\theta_n^\top)^\top$, $\bm H_i=(\int_\mathcal{X}X_i(t)B_1(t)dt,\ldots,\int_\mathcal{X}X_i(t)B_p(t)\text{d}t)$, and $\bm\lambda=(\lambda_1, \lambda_2) $ are tuning parameters. In the first penalty term, we denote $D^2$ as the second-order differential operator. 
In the second penalty term, we take $\rho(\cdot,\cdot)$ as the minimax concave penalty \citep[MCP,][]{Zhang2010}, to obtain a nearly unbiased and sparse solution. Specifically, $\rho (\bm\theta_i-\bm\theta_j,\omega_{ij}\lambda_2)=\rho_\tau(\Vert\bm\theta_i-\bm\theta_j\Vert_2,\omega_{ij}\lambda_2)$, where $\rho_{\tau}(t,\gamma)=\gamma\int_0^t(1-\frac{x}{\tau\gamma})_{+}\text{d}x$ and $\omega_{ij}$'s are pre-specified weights. Note that the Smoothly Clipped Absolute Deviation Penalty \citep[SCAD,][]{Fan2001} and some other concave penalties are also applicable. We obtain the B-spline coefficient vector estimator $\hat{\bm \theta}=(\hat{\bm\theta}_1^\top,\ldots,\hat{\bm\theta}_n^\top)^\top$ by minimizing (\ref{lossfunc}) with respect to $\bm\theta$, and the corresponding coefficient function estimator is $\hat{\beta}_i=\bm B\hat{\bm\theta}_i$ for $i\in\{1,\ldots, n\}$. Subjects $i$ and $j$ belong to the same subgroup if and only if $\hat{\beta}_i=\hat{\beta}_j$. Consequently, the prediction of the $i$th response is given by $\hat y_i=\int_\mathcal{X} X_i(t)\hat\beta_i(t)dt$. Let $\bm{\hat y}=(\hat y_1,\cdots,\hat y_n)^\top$.

The objective function takes a penalized fusion form. The first term of the objective function measures the lack of fit. The second term controls the smoothness of the estimated coefficient functions. In particular, we adopt the smooth penalty based on the 2nd-order differential of $\bm B\bm\theta_i$, i.e., $\beta_i$, which is a popular choice in functional data analysis. The third term takes a pairwise fusion form and promotes zero differences of B-spline coefficient vectors and hence grouping. The weights $\omega_{ij}$'s allow flexible strength of penalization. For example, in our real-data analysis in Section \ref{sec4}, we take $\omega_{ij}$ as the inverse of the distance between locations $i$ and $j$ to encourage locations with a smaller distance to fall into the same subgroups.  

\subsection{Computing algorithm and implementations}\label{sec32}
Let $\bm y=(y_1,\ldots, y_n)^\top$, and $\bm H=\text{diag}\{\bm H_1,\ldots,\bm H_n\}$. Some algebra shows that we can rewrite the objective function $Q_n(\bm\theta;\bm\lambda)$ as 
\begin{equation}\label{lossfunc2}
	Q_n(\bm\theta;\bm\lambda)=\frac{1}{2}(\bm{y}-\bm{H}\bm{\theta})^\top(\bm{y}-\bm{H}\bm{\theta})+\frac{1}{2}\lambda_1\bm{\theta}^\top\bm{G}\bm{\theta}+\sum_{i<j}\rho_{\tau}(\Vert \bm\theta_i-\bm\theta_j\Vert_2,\omega_{ij}\lambda_2). 
\end{equation}
Here $\bm G=\bm I_n\otimes \bm G_0$, where $\otimes$ is the Kronecker product, $\bm I_n$ is the $n\times n$ identity matrix, and $\bm G_0$ is the $p\times p$ matrix with element $(\bm G_{0})_{sl} = \int_\mathcal{X} B''_s(t)B''_l(t)\text{d}t$.  

Directly minimizing the objective function (\ref{lossfunc2}) is challenging because the penalty function is not separable in $\bm\theta_i$'s. As such, we reparameterize by introducing a new set of parameters $\bm \eta_{ij}=\bm \theta_i-\bm\theta_j$. Hence minimizing (\ref{lossfunc2}) is equivalent to 
the constrained optimization problem: 
\begin{equation*}\label{lossfunc3}\centering
	\begin{split}
		\min Q_n(\bm{\theta},\bm{\eta};\bm{\lambda})\equiv&\frac{1}{2}(\bm{y}-\bm{H}\bm{\theta})^\top(\bm{y}-\bm{H}\bm{\theta})+\frac{1}{2}\lambda_1\bm{\theta}^\top\bm{G}\bm{\theta}+\sum_{i<j}\rho_{\tau}(\Vert\bm\eta_{ij}\Vert_2,\omega_{ij}\lambda_2),\\
		&\text{subject to}\quad \bm\theta_i-\bm\theta_j-\bm\eta_{ij}=0,
	\end{split}
\end{equation*}
where $\bm{\eta}=\{\bm\eta_{ij}^\top,i<j\}^\top$. The augmented Lagrangian function reads as
\begin{equation}\label{lossfunc5}\centering
	\begin{split}
		L_n(\bm{\theta}, \bm{\eta},\bm{\zeta}; \bm{\lambda})&=\frac{1}{2}(\bm{y}-\bm{H}\bm{\theta})^\top(\bm{y}-\bm{H}\bm{\theta})+\frac{1}{2}\lambda_1\bm{\theta}^\top\bm{G}\bm{\theta}+\sum_{i<j}\rho_{\tau}(\Vert\bm\eta_{ij}\Vert_2,\omega_{ij}\lambda_2)\\
		&+\sum_{i<j}\bm\zeta_{ij}^\top(\bm\eta_{ij}-\bm\theta_i+\bm\theta_j)+\frac{\delta}{2}\sum_{i<j}\Vert\bm\eta_{ij}-\bm\theta_i+\bm\theta_j\Vert_2^2,
	\end{split}
\end{equation}
where $\bm\zeta=\{\bm\zeta_{ij}, i<j\}^\top$ are the Lagrange multipliers and $\delta$ is the penalty parameter. We adopt the alternating direction method of multipliers (ADMM) algorithm to compute the estimate of $(\bm{\theta}, \bm{\eta},\bm{\zeta})$. More details are provided in the supplementary material. 

Given the estimate $(\bm{\theta}^{(s)}, \bm{\eta}^{(s)},\bm{\zeta}^{(s)})$ at the $s$th iteration, the $(s+1)$th iteration estimates are 
\begin{equation}\label{eq:theta}
	\bm{\theta}^{(s+1)}=\left(\bm{H}^\top\bm{H}+\lambda_1\bm{G}+\delta\bm{A}^\top\bm{A}\right)^{-1}\left[\bm{H}^\top\bm{y}+\delta\bm{A}^\top(\bm{\eta}^{(s)}+\frac{1}{\delta}\bm{\zeta}^{(s)})\right], 
\end{equation}
where $\bm A=\Delta \otimes \bm I_p$. Here $\Delta=\{(\bm e_i-\bm e_j),i<j\}^\top$, with $\bm e_i$ being the $n$-dimensional column vector whose $i$th element is 1 and others are 0, $\bm I_p$ is a $p\times p$ identity matrix and $\otimes$ is the Kronecker product. 
\begin{equation}\label{eq:eta}
	\bm\eta_{ij}^{(s+1)}=
	\begin{cases}
		\bm u_{ij}^{(s+1)}\quad& \text{if}\ \Vert \bm u_{ij}^{(s+1)}\Vert_2\geq\tau\omega_{ij}\lambda_2,\\
		\frac{\tau\delta}{\tau\delta-1}\left(1-\frac{\lambda_2}{\delta\Vert \bm u_{ij}^{(s+1)}\Vert_2}\right)_+\bm u_{ij}^{(s+1)} \quad&\text{if}\ \Vert\bm u_{ij}^{(s+1)}\Vert_2<\tau\omega_{ij}\lambda_2,
	\end{cases}
\end{equation}
where $\bm u_{ij}^{(s+1)}=\bm \theta_i^{(s+1)}-\bm \theta_j^{(s+1)}-\frac{\bm\zeta_{ij}^{(s)}}{\delta}$, 

and 
\begin{equation}\label{eq:zeta}
	\bm\zeta^{(s+1)}=\bm\zeta^{(s)}+\delta(\bm A\bm\theta^{(s+1)}-\bm\eta^{(s+1)}).
\end{equation}

The overall algorithm is summarized in Algorithm \ref{alg1}. 
\begin{algorithm}[h]
	\caption{ADMM for minimizing (\ref{lossfunc5})}
	\label{alg1}
	\begin{algorithmic}[1]
		\State Initialize $\bm \theta^{(0)}$, $\bm\eta^{(0)}=\bm A \bm \theta^{(0)}$, and $\bm \zeta^{(0)}=\bm 0$. Set $s=0$.
		\Repeat
		\State Update $ \bm{\theta}^{(s+1)}$ via (\ref{eq:theta});
		\State Update $\bm\eta^{(s+1)}$ via (\ref{eq:eta});
		\State Update $\bm\zeta^{(s+1)}$via (\ref{eq:zeta});
		\State $s=s+1$.
		\Until the convergence criterion is met.
		\State \Return the estimates of $\bm\theta$ and $\bm\eta$ at convergence.
	\end{algorithmic}
\end{algorithm}

\noindent\textbf{Remark 1}. In non-convex optimization, it is important to assign appropriate initial values to obtain a good solution. In the numerical study, we use estimates from a homogeneous functional linear model as the initial $\bm\theta^{(0)}$. Specifically, we assume that all subjects have the same coefficient function, and hence, the same B-spline coefficient vector $\bm{\tilde\theta}$, which is a $p$-dimensional column vector. We apply the penalized B-spline (without fusion penalization) for all $n$ subjects, and obtain $\bm{\tilde\theta}$ by minimizing the following objective function: 
\[(\bm y-\bm{\tilde H}\bm{\theta})^\top (\bm y-\bm{\tilde H}\bm{\theta})+\lambda_1 \bm\theta^\top\bm G_0\bm{\theta},\]
where $\bm{\tilde H}=(\bm{H}_1^\top,\ldots,\bm{H}_n^\top)^\top$, $\lambda_1$ is determined by minimizing GCV, which will be explicitly defined in the following. The solution is $\bm{\tilde\theta}=\left(\bm{\tilde H}^\top\bm{\tilde H}+\lambda_1 \bm G_0\right)^{-1}\bm{\tilde H}\bm y$. Then we assign $\bm\theta^{(0)}=(\underbrace{\bm{\tilde\theta}^\top,\ldots,\bm{\tilde\theta}^\top}_{n})^\top$. 

\noindent\textbf{Remark 2}. As the standard ADMM algorithm, we track the progress of the algorithm based on the primal residual $\bm R^{(s+1)}=\bm A\bm\theta^{(s+1)}-\bm\eta^{(s+1)}$ and dual residual $\bm D^{(s+1)}=\delta \bm A^\top (\bm \eta^{(s+1)}-\bm \eta^{(s)})$. The algorithm is terminated when $\Vert \bm R^{(s+1)} \Vert_2\leq \epsilon^p$ and $\Vert \bm D^{(s+1)} \Vert_2\leq \epsilon^d$, where $\epsilon^p$ and $\epsilon^d$ are some small values according to \citep{Boyd2011}: 
$$\epsilon^p=\sqrt{np}\epsilon^{abs}+\epsilon^{rel}\Vert\bm{A}^\top\bm\zeta^{(s)}\Vert_2,\ \ \epsilon^d=\sqrt{\frac{n(n-1)p}{2}}\epsilon^{abs}+\epsilon^{rel}\underset{ }\max\{\Vert\bm A\bm\theta^{(s)}\Vert_2,\Vert\bm\eta^{(s)}\Vert_2\}$$
Here $\epsilon^{abs}$ and $\epsilon^{rel}$ are pre-specified small constants.

\noindent{\textbf{Remark 3}}. For a sufficient large $\lambda_2$, the MCP penalty can force $\bm\eta_{ij}=0$. We put subjects $i$ and $j$ into the same subgroup if $\bm\eta_{ij}=0$. As a result, we have $\hat{K}$ estimated subgroups $\hat{\mathcal{G}}_1,\ldots,\hat{\mathcal{G}}_{\hat{K}}$, and let the estimated B-spline coefficient function for the $k$th subgroup be $\hat{\bm \alpha}_k=\sum_{i\in \hat{\mathcal{G}}_{\hat{k}}}\hat{\bm\theta_i}/|\hat{\mathcal{G}}_k|$, where $|\hat{\mathcal{G}}_k|$ is the cardinality of $\hat{\mathcal{G}}_k$.  

\noindent{\textbf{Tuning procedure}}. The proposed approach involves two tuning parameters $\lambda_1$ and $\lambda_2$, where $\lambda_1$ controls the smoothness of the estimated coefficient functions and $\lambda_2$ controls the structure of subgroups. To reduce the computational cost, we adopt a two-step procedure \citep{Zhu2018} instead of a traditional grid search for $\lambda_1$ and $\lambda_2$. This approach, firstly, searches for an optimal value of $\lambda_2$ while fixing $\lambda_1$ at a small value (in our numerical analysis, $\lambda_1=0.005$), and then selects $\lambda_1$ given the optimal $\lambda_2$ from the first step. 
Following \cite{Wang2009} and \cite{Ma2017}, we choose $\lambda_2$ in the first step by minimizing a modified BIC
\begin{equation}
	\label{eq:bic}
	\text{BIC}=\log\left(\frac{\Vert\bm{y}-\bm{\hat y}\Vert_2^2}{n}\right)+C_{n,p}\frac{\log(n)}{n}\text{d}f,
\end{equation}  
where $\text{d}f=\hat Kp$ and $C_{n,p}=\log(\log(n+p))$. 
And in the second step, we determine the optimal $\lambda_1$ by minimizing 
\begin{equation}
	\label{eq:gcv}
	\text{GCV}=\frac{\Vert\bm{y}-\bm{\hat y}\Vert_2^2}{\left(1-tr\{\bm{\tilde H}(\bm{\tilde H}^\top\bm{\tilde H}+\lambda_1\bm{G})^{+}\bm{\tilde H}^\top\}/n\right)^2},
\end{equation}
where $(\bm{\tilde H}^T\bm{\tilde H}+\lambda_1\bm{G})^{+}$ is the Moore-Penrose inverse of $\bm{\tilde H}^T\bm{\tilde H}+\lambda_1\bm{G}$ and $\bm{\tilde H}$ is a variant version of $\bm{H}$. More precisely, $\bm{\tilde H}=\text{diag}\{\tilde{H_1},\cdots,\tilde{H}_{\hat K}\}$, where $\tilde{H_i}=\{H_i^\top|i\in\mathcal{G}_{\hat k}\}^\top$. The tuning procedure can be iterated more than one round until satisfactory parameters are found.

\noindent{\bf Realization}
To implement the proposed approach, we have developed an \textsf{R} package and made it publicly available at \url{https://github.com/breakerkun/FLM-clustering}. 
Also, we have provided a demo for two sample datasets with two and three subgroups, respectively. The proposed approach is computationally affordable. For example, for a simulated dataset with $n=200$, $q=4$, $m=8$, and two subgroups, an overall analysis can be accomplished within 75 seconds using a laptop with 2 Intel(R) Core(TM) i5-8250U CPU @ 1.60 GHz CPU cores and 8G RAM. 
To speed up the computation, all the ADMM related procedures are coded in \textsf{C++}.  

\subsection{Statistical properties}\label{sec:theory}
We assume that $n$ subjects form $K$ disjoint subgroups. Denote $\mathcal{G}=(\mathcal{G}_1,\ldots,\mathcal{G}_K)$ as the subgroup set, and $|\mathcal{G}_{\min}|=\min_{1\leq k\leq K}|\mathcal{G}_k|$. Let $\bm{\beta}^0=(\beta_1^0,\ldots,\beta_n^0)^\top$ be the true coefficient functions corresponding to the true subgroup set $\mathcal{G}$, and $\bm\xi^0=(\xi^0_1,\ldots,\xi^0_K)$ be the distinct values of $\bm{\beta}^0$. We assume the following conditions: \\
(C1) The error terms in model (\ref{model_ori}) are uncorrelated with a mean 0 and a variance $\sigma^2>0$.\\
(C2) For each coefficient function $\beta_i$ ($i\in\{1,\ldots, n\}$), $\beta_i\in C^{q-2}(\mathcal{X})$ is a $(q-2)$-th order ($q \geq 4$) continuously differentiable function defined on a compact set $\mathcal{X}=[0,T]$.\\
(C3) $\Vert{X}_i\Vert_2 \leq C_1< \infty$ for $i\in\{1,\ldots, n\}$, where $C_1$ is a constant.\\
(C4) $|\mathcal{G}_{k}|=O(n)$ for $k\in\{1,\ldots, K\}$, where $n=\sum_{k=1}^K|\mathcal{G}_{k}|$.\\
(C5) The penalty function $\rho(x,\lambda)$ is symmetric, non-decreasing and concave in $[0,\infty)$. It is a constant for $x>a\lambda$ for some constant $a>0$, and $\rho(0,\lambda)=0$. $\rho'(x,\lambda)$, the derivative of $\rho(x,\lambda)$, exists and is  continuous except for a finite number of $x$, and $\lambda^{-1}\rho'(0^{+},\lambda)=1$.\\
\indent Conditions (C1) and (C2) are standard assumptions for nonparametric B-spline smoothing functions \citep{Zhou1998, Claeskens2009, Lan2010}. Note that $\beta_i$ is assumed in $C^{q-2}(\mathcal{X})$ so as to be consistent with the fact that B-splines of order $q$ are $(q-2)$-th order continuously differentiable. Condition (C3) assumes that the covariate function is bounded. Similar conditions have been considered by \cite{Bosq2000} and \cite{cardot03} to ensure the identifiability of penalized functional linear model, that is, the existence and unicity of the coefficient functions. 
Condition (C4) gives the diverge rate of the subgroup size, that is, the subgroup size grows as the sample size increases. Condition (C5) is commonly assumed in a high-dimensional model, and some concave penalties such as MCP and SCAD satisfy it. 

If the underlying subgroups $\mathcal{G}_1,\ldots, \mathcal{G}_K$ are known, the oracle estimator of B-spline coefficients, denoted as $\bm{\hat\theta}^{\text{or}}$, can be defined as 
\[\bm{\hat\theta}^{\text{or}}=\underset{\bm{\theta}\in\mathcal{M}_{\mathcal{G}}}{\arg\min}(\bm{y}-\bm{H}\bm{\theta})^\top(\bm{y}-\bm{H}\bm{\theta})+\lambda_1\bm{\theta}^\top\bm{G}\bm{\theta},\]
where $\mathcal{M}_\mathcal{G}=\{\bm\theta\in\mathbb{R}^{np}: \bm\theta_i=\bm\theta_j, \text{for any}\ i,j\in\mathcal{G}_k, 1\leq k\leq K\}$. 
Then we can obtain the oracle estimator of the coefficient functions by $\hat{\bm{\beta}}^{\text{or}}=\bm{\mathcal{B}}\bm{\hat\theta}^{\text{or}}$, where $\bm{\mathcal{B}}=\bm I_n\otimes \bm B$. 
Note that all the expectations, variance, and probabilities mentioned below are conditional on covariate functions $X_1,\ldots, X_n$, and the symbolic notations of condition are omitted for simplicity. 

\begin{theorem}\label{thm1}
	Let $\lambda_1\sim |\mathcal{G}_{\min}|^{-(1-\sigma_0)/2}$ for some $0<\sigma_0\leq \frac{3}{5}$, and suppose that the number of knots $m\to \infty$ as $n\to \infty$ and $m=o(|\mathcal{G}_{\min}|^{(1+\sigma_0)/4})$. Under Conditions (C1)-(C4), for any subgroup $\mathcal{G}_k$ ($1\leq k\leq K$) and any $i\in\mathcal{G}_k$, we have 
	$$\Vert\hat\beta_i^{\text{or}}-\beta_i^0\Vert_2=O_p(\phi_{|\mathcal{G}_{k}|}^{1/2}), $$
	where $\phi_{|\mathcal{G}_{k}|}=m\lambda_1^{-1}|\mathcal{G}_k|^{-1}+m^{-2q}+\lambda_1m^{4-2q}+\lambda_1$. 
\end{theorem}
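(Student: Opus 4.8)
The plan is to decompose the oracle estimation error into a bias term coming from the B-spline approximation of $\beta_i^0$ and a stochastic term coming from the errors $\epsilon_i$, and to control each with the knot count $m$, the penalty size $\lambda_1$, and the subgroup size $|\mathcal{G}_k|$. First I would introduce the subgroup-level design. Since $\bm\theta\in\mathcal{M}_{\mathcal{G}}$ forces equality within each group, the oracle problem reduces to $K$ separate ridge-type regressions: for group $k$ with design stacked over $i\in\mathcal{G}_k$, the oracle coefficient $\hat{\bm\alpha}_k^{\text{or}}$ minimizes $\sum_{i\in\mathcal{G}_k}(y_i-\bm H_i\bm\alpha)^2+\lambda_1|\mathcal{G}_k|\,\bm\alpha^\top\bm G_0\bm\alpha$, giving the closed form $\hat{\bm\alpha}_k^{\text{or}}=\bigl(\bm H_{(k)}^\top\bm H_{(k)}+\lambda_1|\mathcal{G}_k|\bm G_0\bigr)^{-1}\bm H_{(k)}^\top\bm y_{(k)}$, where $\bm H_{(k)}$ and $\bm y_{(k)}$ stack the rows for $i\in\mathcal{G}_k$. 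Writing $\beta_i^0=\bm B\bm\alpha_k^* + r_k$ with $\bm\alpha_k^*$ the $L_2$-projection coefficient and $r_k$ the B-spline approximation remainder, and using $y_i=\langle X_i,\beta_i^0\rangle+\epsilon_i=\bm H_i\bm\alpha_k^*+\langle X_i,r_k\rangle+\epsilon_i$, I would split $\hat{\bm\alpha}_k^{\text{or}}-\bm\alpha_k^*$ into (i) a ridge-shrinkage bias $-\lambda_1|\mathcal{G}_k|\,\Sigma_k^{-1}\bm G_0\bm\alpha_k^*$, (ii) an approximation-propagation term $\Sigma_k^{-1}\bm H_{(k)}^\top(\langle X_i,r_k\rangle)_{i}$, and (iii) a noise term $\Sigma_k^{-1}\bm H_{(k)}^\top\bm\epsilon_{(k)}$, where $\Sigma_k=\bm H_{(k)}^\top\bm H_{(k)}+\lambda_1|\mathcal{G}_k|\bm G_0$.

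The core of the argument is then the spectral analysis of $\Sigma_k$ and of $\bm G_0$. Standard B-spline theory (as in the cited Zhou--Shen--Wolfe and Claeskens et al. work) gives that, after suitable scaling, the Gram-type matrix $\bm H_{(k)}^\top\bm H_{(k)}$ behaves like $|\mathcal{G}_k|\cdot m^{-1}$ times a well-conditioned matrix under Condition (C3), so its eigenvalues are of order $|\mathcal{G}_k| m^{-1}$; likewise $\bm G_0$, the second-derivative Gram matrix, has largest eigenvalue of order $m^{3}$ (this is the standard $m^{4}$ scaling of $\|D^2(\bm B\bm c)\|_2^2$ for $O(m)$ knots divided by $m$). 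Combining these, $\Sigma_k^{-1}$ has operator norm of order $(|\mathcal{G}_k| m^{-1}+\lambda_1 |\mathcal{G}_k| m^{3})^{-1}$, but for the variance calculation the relevant quantity is $\mathrm{tr}\bigl(\Sigma_k^{-1}\bm H_{(k)}^\top\bm H_{(k)}\Sigma_k^{-1}\bigr)\sigma^2$, which is $O\bigl(p/(|\mathcal{G}_k| m^{-1})\bigr)=O(m\lambda_1^{-1}|\mathcal{G}_k|^{-1})$ after accounting for the $\lambda_1$-ridge, matching the first piece of $\phi_{|\mathcal{G}_k|}$; here I would use $p=m+q\asymp m$ and $\mathbb{E}\|$noise term$\|_2^2=\mathrm{tr}(\cdots)\sigma^2$. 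For the squared bias term (i), $\|\lambda_1|\mathcal{G}_k|\Sigma_k^{-1}\bm G_0\bm\alpha_k^*\|_2^2$ is bounded using $\|\bm G_0\bm\alpha_k^*\|$ related to $\|D^2\beta_i^0\|_2$ (finite by (C2), since $q\ge 4$ makes $\beta_i^0$ at least $C^{2}$) and the lower eigenvalue bound on $\Sigma_k$; this yields the $\lambda_1 m^{4-2q}+\lambda_1$ contributions after translating the $\bm G_0$-weighted norm back into an $L_2$ bound. For term (ii), the approximation remainder obeys $\|r_k\|_\infty=O(m^{-q})$ by Condition (C2) and de Boor's bound, and propagating this through $\Sigma_k^{-1}\bm H_{(k)}^\top$ gives the $m^{-2q}$ piece. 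Adding the pure approximation error $\|r_k\|_2^2=O(m^{-2q})$ and converting the coefficient-space bounds to function-space via $\|\bm B(\hat{\bm\alpha}_k^{\text{or}}-\bm\alpha_k^*)\|_2^2\asymp m^{-1}\|\hat{\bm\alpha}_k^{\text{or}}-\bm\alpha_k^*\|_2^2$ (another standard B-spline equivalence) then assembles all four terms of $\phi_{|\mathcal{G}_k|}$, and Markov's inequality upgrades the $L_2$-mean bounds to the $O_p$ statement.

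The main obstacle I anticipate is getting the interplay of the two penalties and the norm conversions exactly right: the smoothing penalty $\lambda_1\bm\theta^\top\bm G\bm\theta$ acts on the $p$-dimensional coefficient vector but is designed to control $\|D^2\beta_i\|_2^2$, so one must be careful about whether eigenvalue bounds on $\bm G_0$ are stated in the "raw coefficient" metric or the "normalized B-spline" metric, and the conditions $\lambda_1\sim|\mathcal{G}_{\min}|^{-(1-\sigma_0)/2}$ and $m=o(|\mathcal{G}_{\min}|^{(1+\sigma_0)/4})$ are presumably exactly what is needed to make the ridge bias negligible against the noise and to keep $\Sigma_k$ well-conditioned (so that $\lambda_1|\mathcal{G}_k|\bm G_0$ does not dominate $\bm H_{(k)}^\top\bm H_{(k)}$); verifying that these rate constraints are consistent with every term in $\phi_{|\mathcal{G}_k|}$ being the claimed order — rather than being swamped by a cross term — is the delicate bookkeeping step. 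A secondary technical point is justifying the eigenvalue equivalences for $\bm H_{(k)}^\top\bm H_{(k)}$ uniformly over $k$: this requires that the empirical covariance operator of the $X_i$ within each group is suitably nondegenerate when restricted to the B-spline space, which should follow from (C3) together with the identifiability-type assumptions invoked after (C5), but needs to be spelled out (or cited) carefully since the bound must hold for every group simultaneously.
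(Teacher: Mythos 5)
Your proposal takes a genuinely different route from the paper, and it contains a gap at the decisive step. The paper does not re-derive the mean-squared-error rate at all: its entire proof consists of (a) citing Theorem 3.1 of Cardot, Ferraty and Sarda (2003) to assert $\mathrm{E}\bigl(\Vert\hat\beta_i^{\text{or}}-\beta_i^0\Vert_2^2\bigr)=O(\phi_{|\mathcal{G}_k|})$ under (C1)--(C3), and then (b) a Chebyshev-inequality argument converting that second-moment bound into $\Vert\hat\beta_i^{\text{or}}-\beta_i^0\Vert_2=O_p(\phi_{|\mathcal{G}_k|}^{1/2})$. Your closing remark about Markov's inequality reproduces step (b); everything before it is an attempt to prove from scratch what the paper outsources to the citation.

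The gap is your spectral claim that $\bm H_{(k)}^\top\bm H_{(k)}$ ``behaves like $|\mathcal{G}_k|m^{-1}$ times a well-conditioned matrix under Condition (C3).'' The references you invoke (Zhou--Shen--Wolfe, Claeskens et al.) concern nonparametric regression with designs $B_l(t_i)$ at scattered points, where such two-sided eigenvalue bounds hold under a design-density condition. Here the design is $\bm H_i=(\int X_iB_1,\ldots,\int X_iB_p)$, so the Gram matrix is governed by the empirical covariance operator of the $X_i$ restricted to the spline space; that operator is compact-like, its small eigenvalues are not bounded below, and (C3) supplies only an \emph{upper} bound on $\Vert X_i\Vert_2$. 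This is exactly the ill-posedness of the functional linear model, and it is why the variance term in $\phi_{|\mathcal{G}_k|}$ is $m\lambda_1^{-1}|\mathcal{G}_k|^{-1}$ rather than a ratio of dimension to effective sample size: the inverse $\Sigma_k^{-1}$ must be controlled through the penalty $\lambda_1|\mathcal{G}_k|\bm G_0$, not through the design. Indeed, under your well-conditioning assumption your own trace computation gives $p/(|\mathcal{G}_k|m^{-1})\asymp m^2|\mathcal{G}_k|^{-1}$, which does not match the target term, and the phrase ``after accounting for the $\lambda_1$-ridge'' papers over precisely the step where the $\lambda_1^{-1}$ must enter (note also that $\bm G_0$ is singular on linear splines, so the naive bound $\Vert\Sigma_k^{-1}\Vert\le(\lambda_1|\mathcal{G}_k|\sigma_{\min}(\bm G_0))^{-1}$ fails too). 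You flag this issue as a ``secondary technical point,'' but it is the central difficulty; without resolving it the variance term of $\phi_{|\mathcal{G}_k|}$ is not established. The bias/approximation pieces of your decomposition ($m^{-2q}$, $\lambda_1 m^{4-2q}$, $\lambda_1$) and the norm equivalence $\Vert\bm B\bm c\Vert_2^2\asymp m^{-1}\Vert\bm c\Vert_2^2$ are in line with standard spline theory and with what the cited Cardot et al. analysis delivers.
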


Theorem \ref{thm1} establishes the convergence rate of the oracle estimators of coefficient functions when the true subgroup information is known. Since $|\mathcal{G}_{\min}|=O(n)$ and $m\to \infty$ as $n\to \infty$, then $\lambda_1\to 0$, $\lambda_1m^{4-2q}\to 0$, $m^{-2q}\to 0$, and $m\lambda_1^{-1}|\mathcal{G}_k|^{-1}\to 0$ as $n\to \infty$, and thus $\phi_{|\mathcal{G}_{k}|}\to 0$ as $n\to \infty$. The proof of Theorem \ref{thm1} is provided in the supplementary material. 

Let $$b=\underset{i\in \mathcal{G}_k, j\in \mathcal{G}_{k'}, k\neq k'}{\min}\Vert \beta_i^0-\beta_j^0\Vert_2= \underset{k\neq k'}{\min}\Vert\xi_{k}^0-\xi_{k'}^0\Vert_2$$
be the minimum distance of the common coefficient functions between two different subgroups. Similar to \cite{Ma2017} and \cite{Zhu2018}, the minimum distance $b$ needs to be lower bounded to ensure the model with heterogeneous subgroups is identifiable.

\begin{theorem}\label{thm2}
	Suppose the conditions in Theorem \ref{thm1} and Condition (C5) hold. If $cb>a\lambda_2$ and $\lambda_2\gg\frac{n^{-(1-\sigma_0)/2} p^{4}}{|\mathcal{G}_{\min}|}$ for some constant $c>0$, where $\sigma_0$ is given in Theorem \ref{thm1}. Then, there exists a local minimizer $\bm{\hat\theta}$ of the objective function $\bm{Q_n}(\bm{\theta};\bm{\lambda})$ such that 
	$$\Pr(\bm{\hat\theta}=\bm{\hat\theta}^{\text{or}})\to 1,$$
	and 
	$$\Pr(\bm{\hat\beta}=\bm{\hat\beta}^{\text{or}}) \to 1,$$
	where $\bm{\hat\beta}=\bm{\mathcal{B}}\bm{\hat\theta}$, and $\hat{\bm{\beta}}^{\text{or}}=\bm{\mathcal{B}}\bm{\hat\theta}^{\text{or}}$. 
\end{theorem}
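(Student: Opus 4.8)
The plan is to follow the now-standard two-part strategy for establishing oracle properties of penalized-fusion estimators (as in Ma and Huang, 2017; Zhu et al., 2018), adapted to the B-spline / functional-linear setting. The target is to show that the oracle estimator $\bm{\hat\theta}^{\text{or}}$, viewed as a point in $\mathbb{R}^{np}$ via the natural embedding of $\mathcal{M}_{\mathcal{G}}$, is with probability tending to one a local minimizer of the full objective $Q_n(\bm\theta;\bm\lambda)$. Since $\bm{\hat\beta}=\bm{\mathcal B}\bm{\hat\theta}$ is a deterministic linear image of $\bm{\hat\theta}$, the second conclusion $\Pr(\bm{\hat\beta}=\bm{\hat\beta}^{\text{or}})\to1$ follows immediately from $\Pr(\bm{\hat\theta}=\bm{\hat\theta}^{\text{or}})\to1$, so all the work is in the first display.

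The first step is to define the neighborhood in which $\bm{\hat\theta}^{\text{or}}$ will be shown optimal: let $T=\{\bm\theta\in\mathbb{R}^{np}: \sup_{i}\Vert\bm\theta_i-\bm{\hat\theta}_i^{\text{or}}\Vert_2\le r_n\}$ for a suitable radius $r_n$ (shrinking to $0$ but slower than the oracle error). I would then split the argument into two events. On the first, I show that restricted to $\mathcal{M}_{\mathcal{G}}\cap T$, the unique minimizer of $Q_n$ is $\bm{\hat\theta}^{\text{or}}$ itself; this is essentially because, for $i,j$ in the same true subgroup, $\bm\theta_i=\bm\theta_j$ makes the fusion term vanish and the MCP is flat there, so $Q_n$ reduces on $\mathcal{M}_{\mathcal{G}}$ to exactly the ridge-penalized least-squares criterion that defines $\bm{\hat\theta}^{\text{or}}$ (a strictly convex function once $\lambda_1>0$ and $\bm H^\top\bm H+\lambda_1\bm G$ is positive definite on the relevant subspace). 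On the second event, I show that no point of $T\setminus\mathcal{M}_{\mathcal{G}}$ can do better: for $\bm\theta\in T$ with some $\bm\theta_i\ne\bm\theta_j$ for $i,j$ in the same true group, one uses the concavity and the condition $\lambda^{-1}\rho'(0^+,\lambda)=1$ in (C5) to lower-bound the directional increase of the fusion penalty by a term of order $\lambda_2\,\Vert\bm\theta_i-\bm\theta_j\Vert_2$, while the increase in the data-fit-plus-smoothness part is controlled from above using $\Vert\bm H_i\Vert$, $\Vert\bm G_0\Vert$ bounds (which, via (C3) and standard B-spline norm estimates, scale like powers of $p$) times the oracle rate from Theorem~\ref{thm1}. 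The inequality $\lambda_2\gg n^{-(1-\sigma_0)/2}p^4/|\mathcal{G}_{\min}|$ is exactly what makes the penalty gain dominate, forcing the minimizer over $T$ to lie in $\mathcal{M}_{\mathcal{G}}$, hence to equal $\bm{\hat\theta}^{\text{or}}$.

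Next I must rule out "accidental" fusion across different true subgroups: for $i\in\mathcal{G}_k$, $j\in\mathcal{G}_{k'}$ with $k\ne k'$, one needs $\Vert\bm{\hat\theta}_i^{\text{or}}-\bm{\hat\theta}_j^{\text{or}}\Vert_2$ to stay above $a\lambda_2$ (the MCP threshold), so that the local behavior of $Q_n$ near $\bm{\hat\theta}^{\text{or}}$ is genuinely that of a smooth, penalty-flat function in those coordinates and the KKT/subgradient conditions for a local minimizer are met. This is where the separation condition $cb>a\lambda_2$ enters: from Theorem~\ref{thm1}, $\Vert\bm B(\bm{\hat\theta}_i^{\text{or}}-\bm{\hat\theta}_j^{\text{or}})-(\beta_i^0-\beta_j^0)\Vert_2=O_p(\phi^{1/2})\to0$, and translating the $L_2$-function separation $b$ into a separation of the spline coefficient vectors (using a lower Riesz bound for the B-spline basis, which contributes another polynomial-in-$p$ factor) shows $\Vert\bm{\hat\theta}_i^{\text{or}}-\bm{\hat\theta}_j^{\text{or}}\Vert_2$ is bounded below by a constant multiple of $b$ with probability tending to one, hence exceeds $a\lambda_2$.

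Assembling these: with probability tending to one, $\bm{\hat\theta}^{\text{or}}$ is the minimizer of $Q_n$ over the neighborhood $T$, all intra-group differences are zero and all inter-group differences exceed the MCP threshold, so $\bm{\hat\theta}^{\text{or}}$ satisfies the first-order conditions for being a local minimizer of $Q_n$ on all of $\mathbb{R}^{np}$; taking $\bm{\hat\theta}$ to be this local minimizer gives $\Pr(\bm{\hat\theta}=\bm{\hat\theta}^{\text{or}})\to1$. I expect the main obstacle to be the bookkeeping of the polynomial-in-$p$ constants: carefully tracking how the operator norms of $\bm H_i$ and $\bm G_0$, the Riesz bounds of the B-spline basis, and the dimension $p=m+q$ combine, so that the stated rate condition $\lambda_2\gg n^{-(1-\sigma_0)/2}p^4/|\mathcal{G}_{\min}|$ is precisely the one that makes the fusion-penalty gain beat the least-squares perturbation uniformly over $T$. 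The convexity reduction on $\mathcal{M}_{\mathcal{G}}$ and the subgradient argument off $\mathcal{M}_{\mathcal{G}}$ are routine once those constants are pinned down.
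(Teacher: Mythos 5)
Your proposal is correct and follows essentially the same route as the paper's proof: the paper likewise compares each $\bm\theta$ in a shrinking neighborhood to its within-group average projection $\bm\theta^*\in\mathcal{M}_{\mathcal{G}}$, uses the flatness of the MCP beyond $a\lambda_2$ (guaranteed by $cb>a\lambda_2$ and the B-spline Riesz bounds) to reduce $Q_n$ on $\mathcal{M}_{\mathcal{G}}$ to the ridge criterion minimized by $\bm{\hat\theta}^{\text{or}}$, and then shows via a Taylor expansion that the fusion-penalty gain $\lambda_2\rho'(4\Lambda_n)\Vert\bm\theta_i-\bm\theta_j\Vert_2$ dominates the least-squares and smoothness perturbations, which are bounded by exactly the $p^4/|\mathcal{G}_{\min}|$-type quantities you anticipate. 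The only cosmetic difference is that the paper centers its neighborhood at the true $\bm\theta^0$ and shows the oracle lies inside, rather than centering at the oracle itself.
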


Theorem \ref{thm2} shows that the oracle estimator $\bm{\hat\theta}^{\text{or}}$ is a strictly local minimizer $\bm{\hat\theta}$ of the objective function $Q_n(\bm{\theta};\bm{\lambda})$ with a high probability, and accordingly, the estimator of coefficient functions $\bm{\hat\beta}$ is the oracle estimator of coefficient functions $\bm{\hat\beta}^{\text{or}}$. Let $\hat{\bm\xi}$ be the distinct values of $\bm{\hat\beta}$ and $\bm{\hat\xi}^{\text{or}}$ be the distinct values of $\bm{\hat{\beta}}^{\text{or}}$. By the oracle property in Theorem \ref{thm2}, we have $\Pr(\bm{\hat\xi}=\bm{\hat\xi}^{\text{or}})\to1$. 
Consequently, the true subgroups can be recovered with the estimated common coefficient function for subgroup $k$ given as $\hat{\xi}_k={\hat\beta}_i^{\text{or}}$ for $i\in \mathcal{G}_k$. This result holds given that $b\gg \frac{n^{-(1-\sigma_0)/2} p^{4}}{|\mathcal{G}_{\min}|}$. Under Condition (C4), we require $b\gg C^{*}n^{-\frac{3-\sigma_0}{2}}p^{4}$ for the constant $0<C^*<\infty$. That is, the L2-distance of the true coefficient functions  between subgroups should not be too small, otherwise, the subgroups can not be recovered. The proof of Theorem \ref{thm2} is given in the supplementary material. With Theorems \ref{thm1} and \ref{thm2}, we have established that $\bm{\hat\beta}$ converges to $\bm{\beta}^{0}$ in probability. 

\section{Numerical simulation}\label{sec3}
In this section, we conduct two simulation studies to illustrate the finite-sample performance of the proposed approach. In the first study, we consider two subgroups, and in the second study, there are three subgroups. In our simulation, we let $\mathcal{X}=[0,1]$ and weights $\omega_{ij}=1$. We fix the B-spline with order $q=4$ and the number of knots $m=8$ for all subjects. Following the suggestion of \cite{Zhu2018}, we fix $\delta=2$ and $\tau=1$. 

Besides the proposed approach, we also consider the following alternatives for comparison: (i) Oracle approach (referred as "Oracle"), under which the subgrouping structure is known, and a penalized B-spline estimator is adopted to estimate the coefficient function in each subgroup.  The objective function of the penalized B-spline is 
\[Q_h(\bm\theta;\lambda)=\frac{1}{n}\sum_{i=1}^n(y_i-\bm H_i\bm\theta)^2+\lambda\Vert D^2(\bm B\bm\theta)\Vert_2^2\]
where $\bm\theta$ is the common $p$-dimensional B-spline coefficient vector for all subjects within the same subgroup. (ii) We consider the finite mixture regression (FIMR) approach of \cite{yao12}, which first treats the heterogeneous coefficient functions $\beta_i$'s as $p$-dimensional vectors $\bm\theta_i$'s, and then applies FIMR to the heterogeneous linear regression model: $y_i=\bm H_i\bm \theta_i+\epsilon_i$. 
It is realized by using the R package \emph{flexmix}. (iii) We consider two Kmeans-based clustering methods. First, we consider Response-based clustering (referred to as ``Resp"), which first clusters the subjects based on response variables via Kmeans, and then applies a penalized B-spline to each subgroup. Second, we consider Residual-based clustering (referred as ``Resi"), which applies penalized B-spline under the homogeneity assumption $y_i=\int_0^1X_i(t)\beta(t)\text{d}t+\epsilon_i$, clusters subjects based on the residuals $y_i-\int_0^1X_i(t)\hat\beta(t)\text{d}t$ by using Kmeans, and then applies the penalized B-spline estimator again to each subgroup. 
The FIMR and two Kmeans-based approaches need to determine the number of subgroups. We set the true number of subgroups as the true value. We note that this is not required with the proposed approach and is often not practical in data analysis. For Resp and Resi approaches, the tuning parameter $\lambda$ is determined by using a grid search with $\lambda\in\{0.0001, 0.001, 0.005, 0.01, 0.025, 0.05, 0.1, 0.5, 1, 5\}$ and GCV criterion.

To assess  the subgrouping performance, we consider two measures, namely the number of identified subgroups ($\hat{K}$) and the Adjusted Rand Index (ARI), where the Adjusted Rand Index
measures the agreement between the structure of the estimated subgroups and that of the true subgroups. To assess the estimation performance, we consider the mean squared error (MSE) of the estimated coefficient functions, which is defined as $\sum_{i=1}^{n}\int_0^1(\beta_i(t)-\hat\beta_i(t))^2\text{d}t/n$. The results are based on 100 independent runs. 

\subsection{Example 1: two subgroups} 	
We simulate the heterogeneous data with $n$ samples where $n=40$ and $200$. The $n$ subjects belong to two subgroups, and two subgroup structures are considered: (a) balanced (referred as ``B"), where the two subgroups have the same sizes; and (b) unbalanced (referred as ``UB"), where the two subgroups have sizes in a ratio of 1:3. The continuous response $y_i$ for subject $i$ from the $k$th subgroup is generated by 
\begin{equation}\label{eq:sim}
	y_i=\int_0^1X_i(t)\xi_k(t)\text{d}t+\epsilon_i,
\end{equation}
where the error terms $\epsilon_i$ are independently generated from the normal distribution with mean 0 and variance 1. Each prediction function $X_i(t)$ is generated as a linear combination of B-spline basis functions. Specifically, $X_i(t)=\sum_{l=1}^{\tilde p}a_{il}\tilde B_l(t)$, where $\tilde p=\tilde m+\tilde q$ with $\tilde m=15$ and $\tilde q=5$. These $a_{il}$'s are independently generated from the following distributions: (1) Norm: $N(2,1)$ and (2) Unif: $U(0,4)$. $\xi_k(t)$ represents the coefficient function of the $k$th subgroup, for which we consider the following two scenarios: \\
\noindent (S1) Nonlinear: $\xi_1(t)=4\sin(\pi t)-1$ and $\xi_2(t)=10(t-0.5)^2-2$. The distance between two functions in $L^2[0,1]$ is 3.36, \\
\noindent (S2) Linear: $\xi_1(t)=3t+2$ and $\xi_2(t)=3t-2$. The distance between two functions in $L^2[0,1]$ is 4. \\

Figure \ref{fig:ex1} and Figure C1 in the supplementary material show the distribution of the number of identified subgroups $\hat{K}$ for Scenarios 1 and 2, respectively. The proposed approach can satisfactorily identify the number of true subgroups for all settings. As $n$ increases, the percentage of correctly determining the number of subgroups becomes larger. Subgrouping and estimation results are summarized in Table \ref{tab: table1} for Scenario 1 and Table C1 in the supplementary material for Scenario 2. Since the Oracle approach assumes the true subgroup structure, the proposed approach behaves inferior to the Oracle as expected. But it has significant advantages over the FIMR, Resp and Resi approaches. Consider for example Scenario 1, $n=40$, balanced structure, and $a_{il}$'s are generated from the normal distribution $N(2,1)$. The proposed approach has (ARI, MSE) equal to (0.960, 1.679), compared to (0.889, 41.156) for the FIMR approach, (0.935, 1.934) for the Resp approach, and (0.861, 2.136) for the Resi approach. We observe that, for 100 independent replicates, FIMR either can accurately recover the subgroup structure with ARI$>0.95$, or completely fail with ARI$\approx 0$, leading to a low average ARI and large standard error of ARI. Meanwhile, once FIMR fails in identifying subgroup structure, a tremendous MSE ($100\sim 400$) of the estimated coefficient functions is generated. Note that the true number of subgroups is known as \emph{a prior} for the FIMR, Resi and Resp approaches, whereas it is unknown for the proposed approach. The values of ARI and MSE increase as $n$ grows for the proposed approach. This supports the estimation  consistency established in Theorem \ref{thm1} and \ref{thm2}. To better visualize the estimation results of the proposed approach, we present estimators of the coefficient functions in Figure \ref{fig:coef1} for Scenario 1 and in Figure C2 in the supplementary material for Scenario 2. It can be observed that the underlying true coefficient function curves can be recovered well. 

\begin{figure}[H]
	\centering  
	\subfigure[]
	{
		\includegraphics[width=2.7cm]{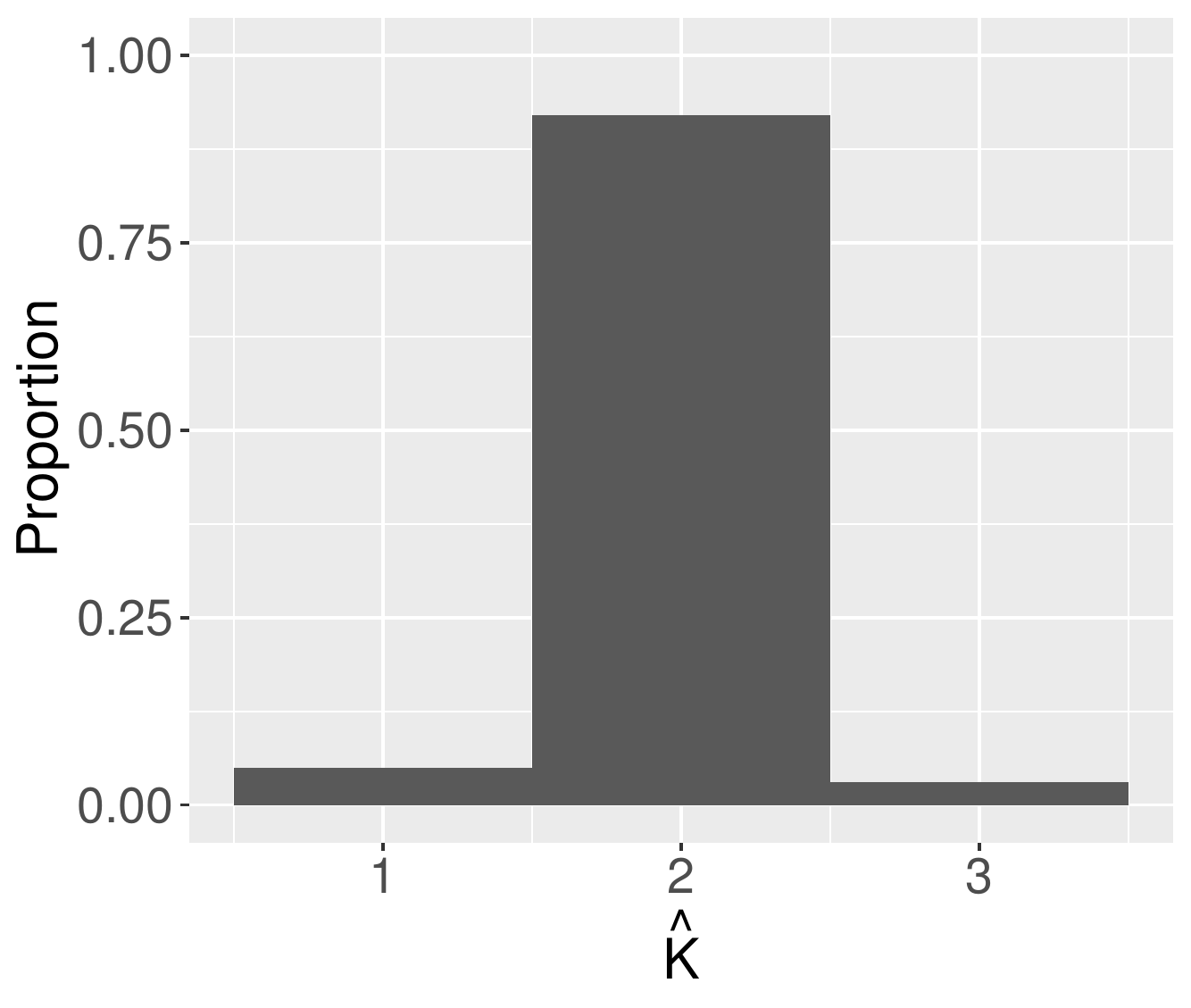}
	}
	\subfigure[]
	{
		\includegraphics[width=2.7cm]{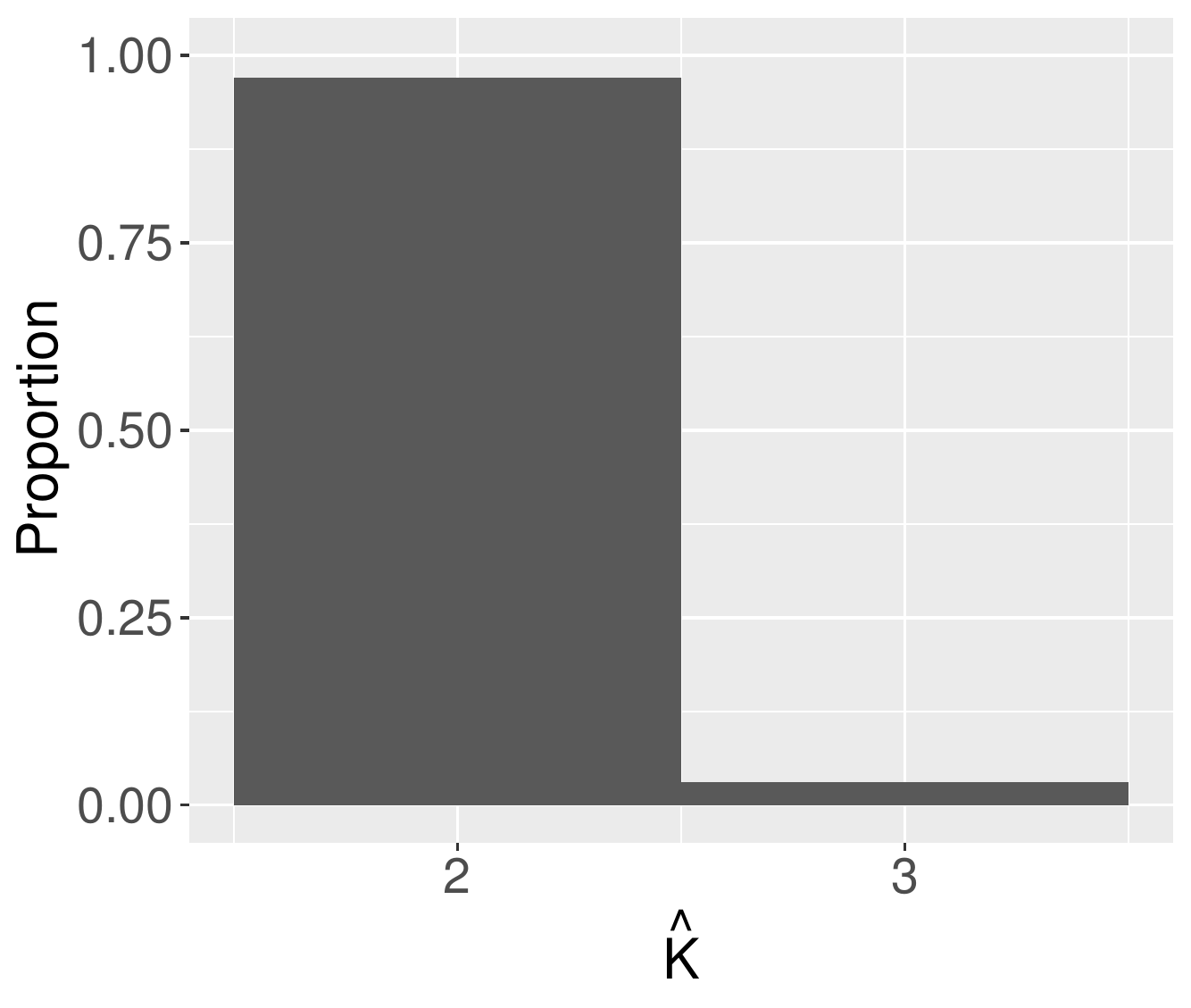} 
	}
	\subfigure[]
	{
		\includegraphics[width=2.7cm]{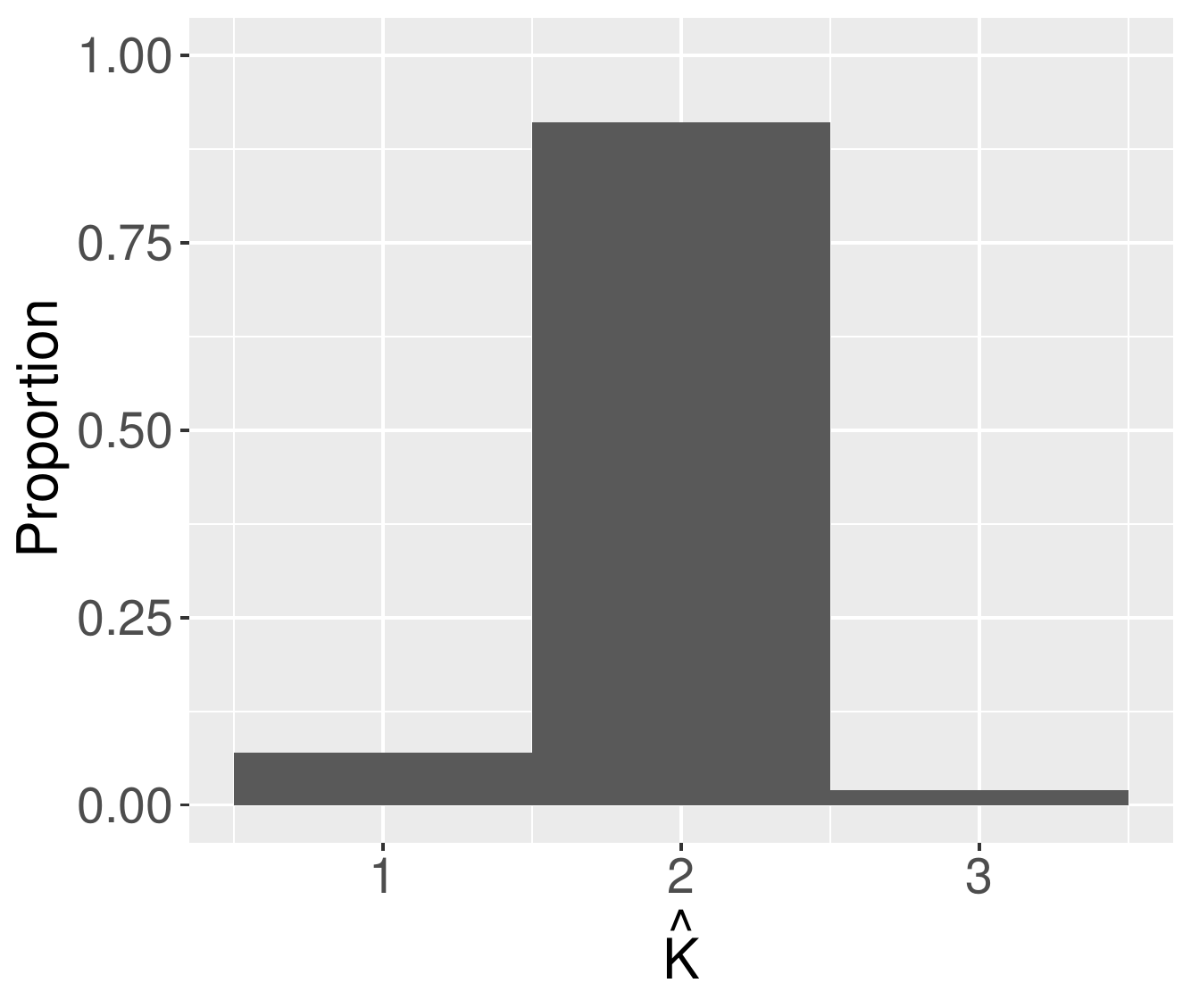}
	}
	\subfigure[]
	{
		\includegraphics[width=2.7cm]{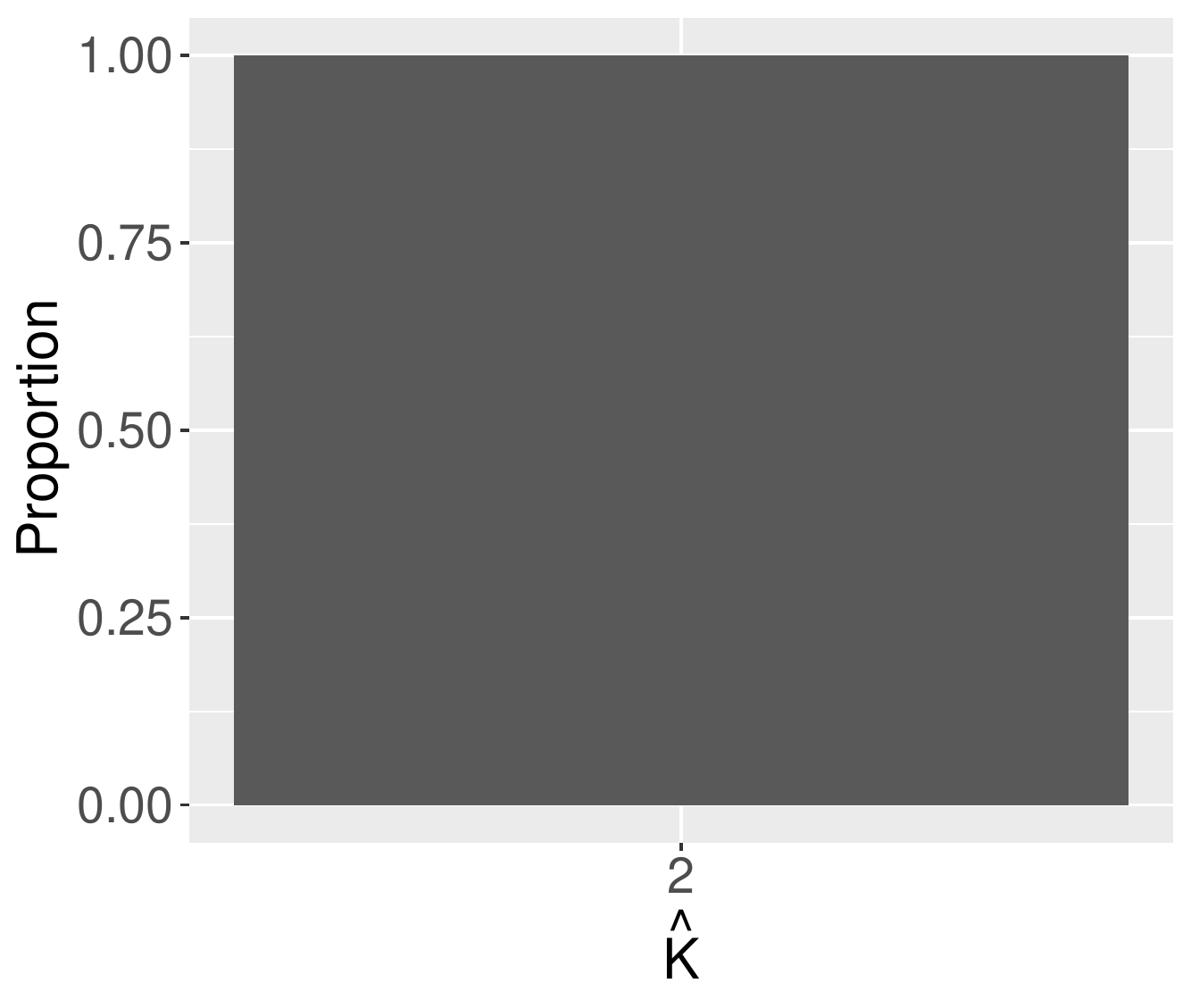} 
	}
	\subfigure[]
	{
		\includegraphics[width=2.7cm]{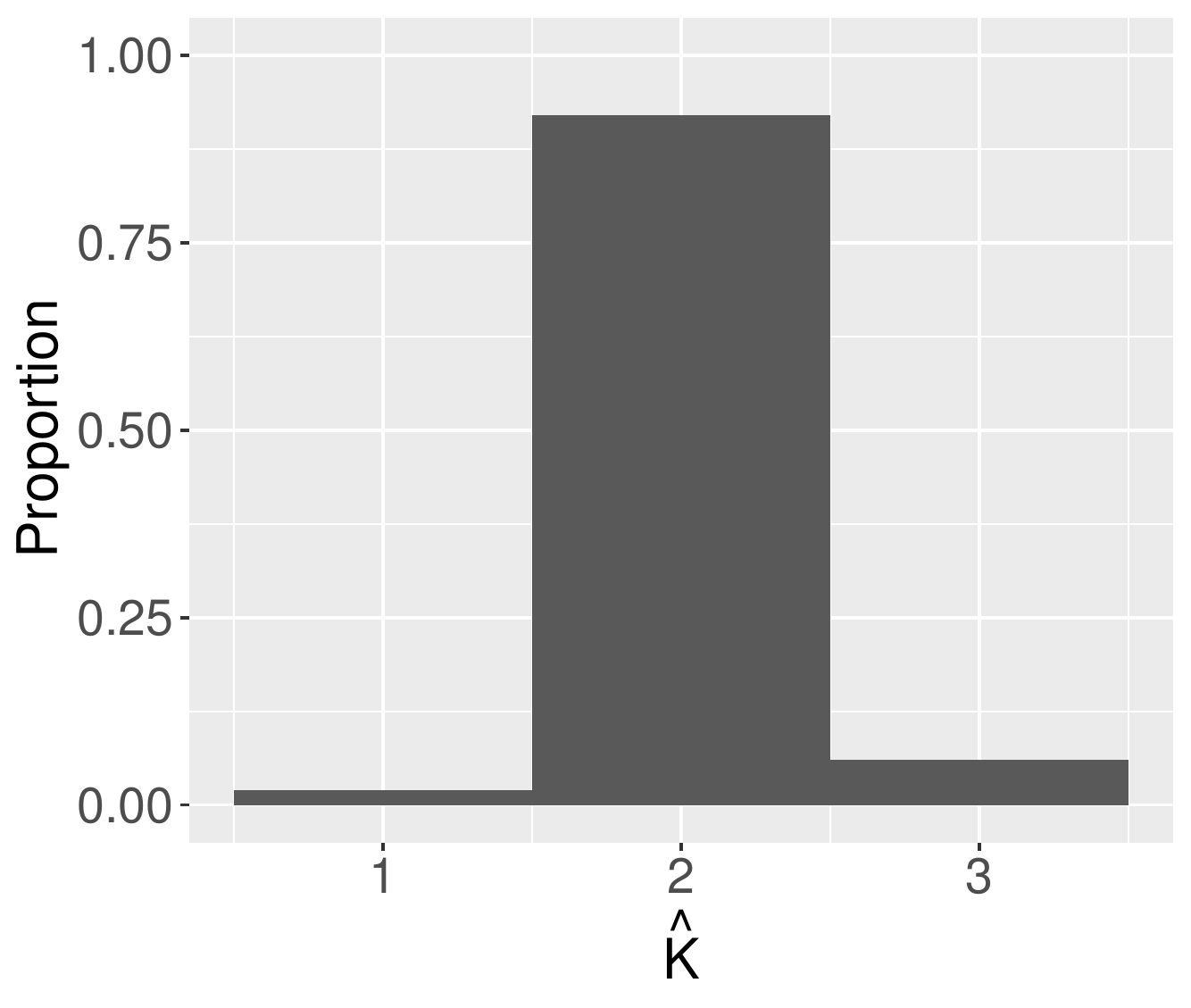}
	}
	\subfigure[]
	{
		\includegraphics[width=2.7cm]{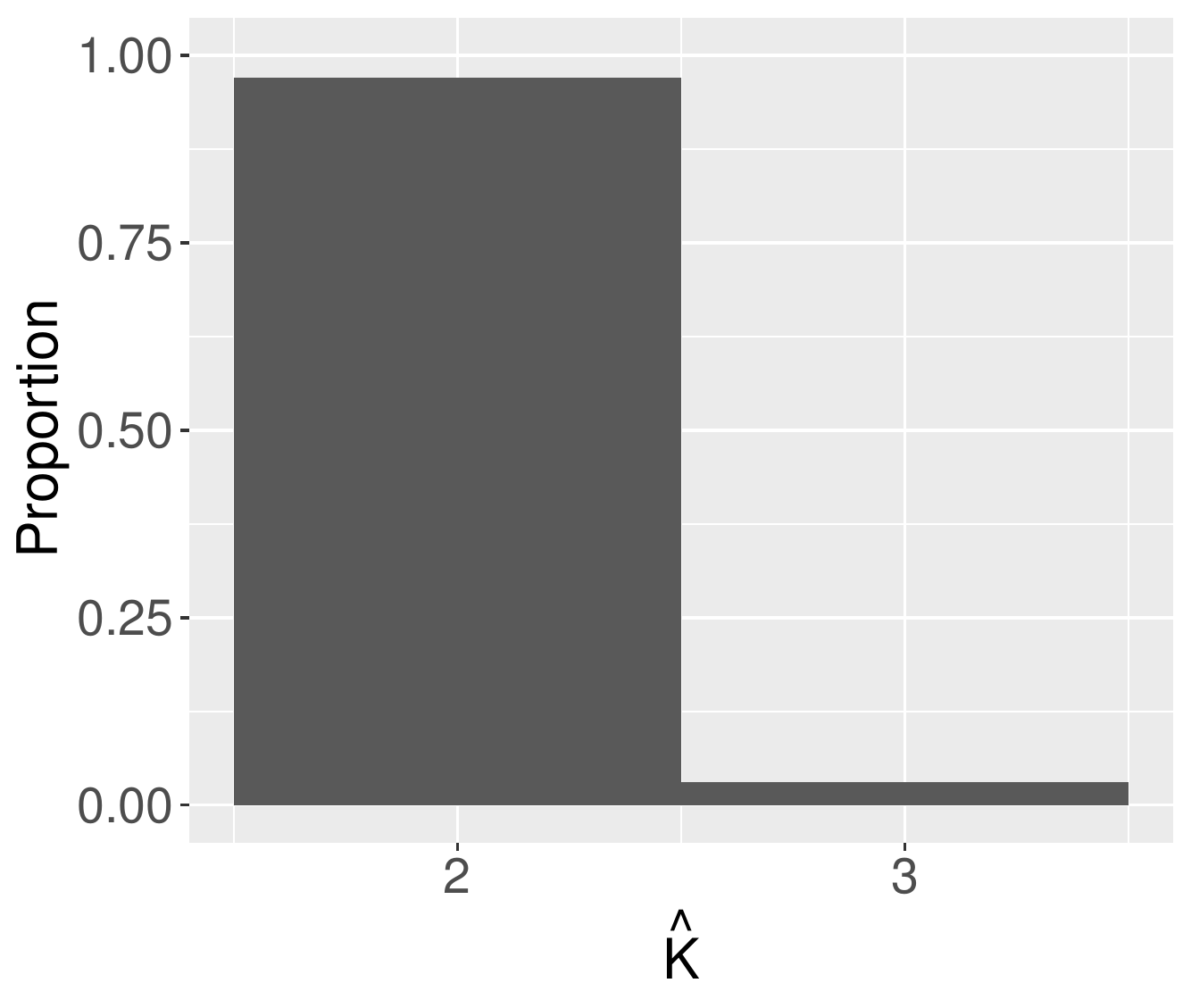} 
	}
	\subfigure[]
	{
		\includegraphics[width=2.7cm]{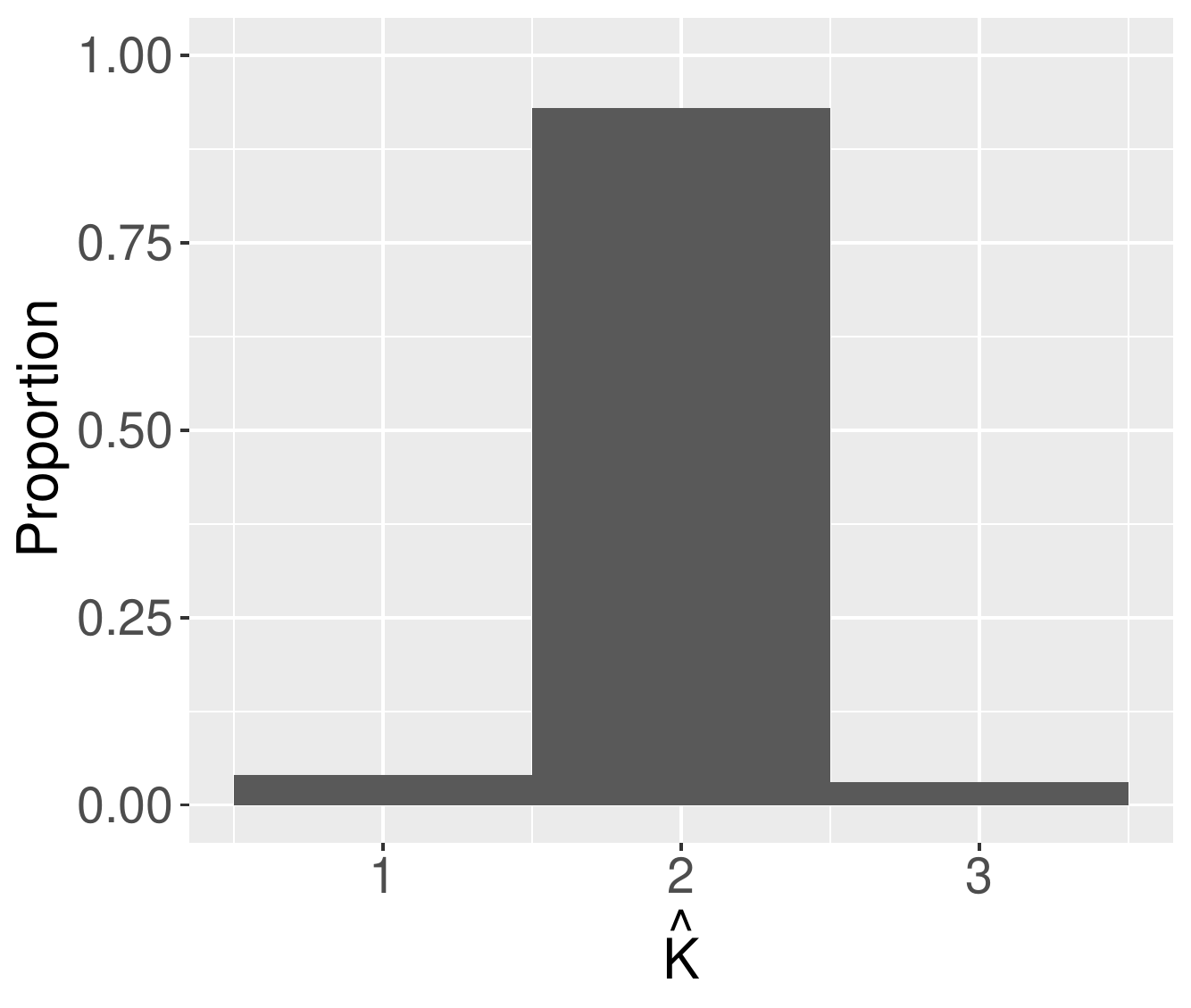}
	}
	\subfigure[]
	{
		\includegraphics[width=2.7cm]{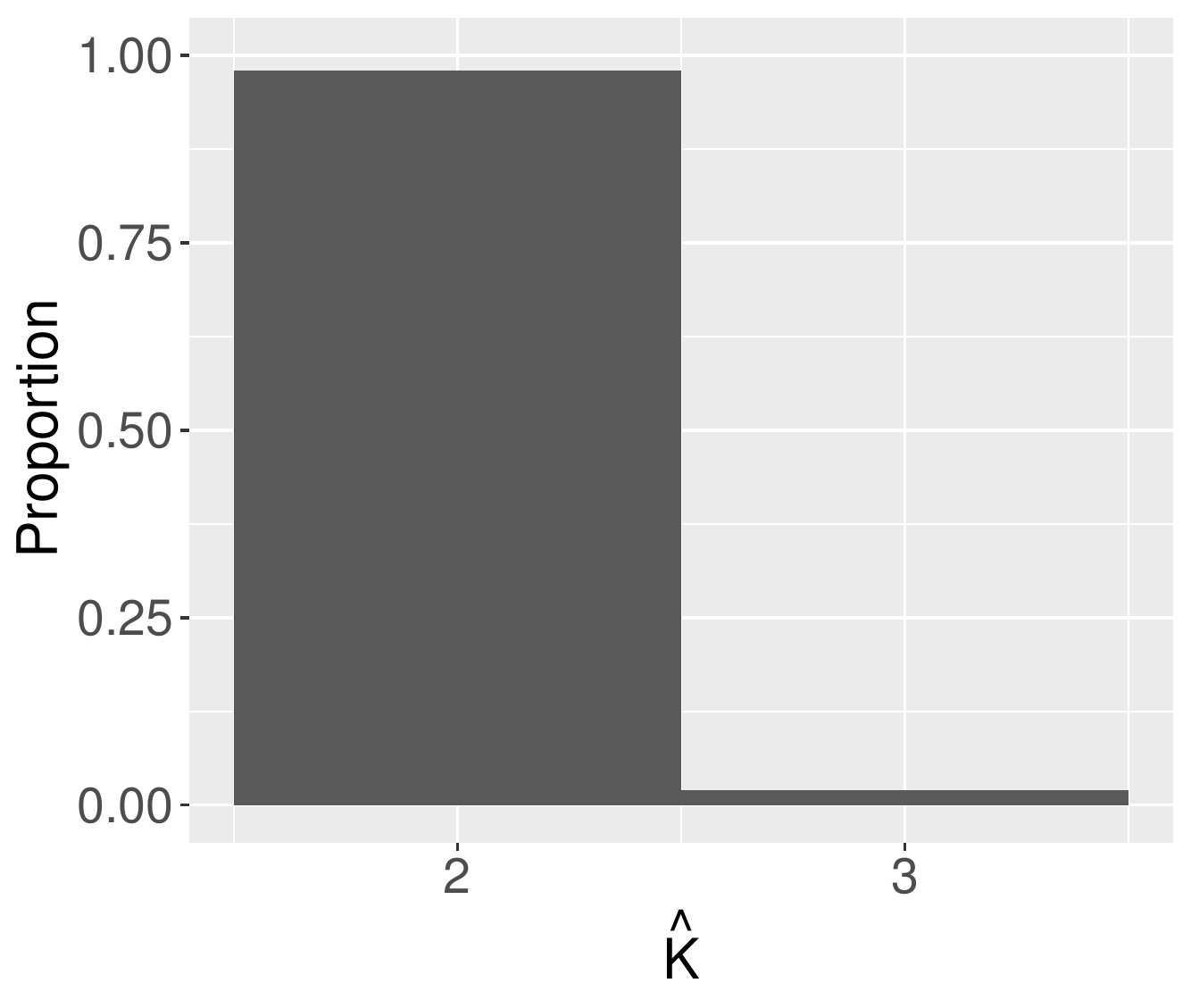} 
	}	
	\caption{Simulation results in Example 1: histograms of the estimated number of subgroups $\hat{K}$ by the proposed approach under Scenario 1. (a) and (e) balanced structure with $n=40$, (b) and (f) balanced structure with $n=200$, (c) and (g) unbalanced structure with $n=40$, (d) and (h) unbalanced structure with $n=200$. (a)-(d) $a_{il}\sim N(2,1)$, and (b)-(h) $a_{il}\sim U(0,4)$.} 
	\label{fig:ex1}
\end{figure}

\begin{table}[]
	\centering
	\caption{Simulation results in Example 1: ARI and MSE under Scenario 1. Each cell shows the mean(s.d.).}
	\resizebox{\linewidth}{!}{\begin{tabular}{ccc|cc|cc}
			\\
			\hline
			\multicolumn{3}{c|}{\multirow{2}{*}{}}            & \multicolumn{2}{c|}{$n=40$} & \multicolumn{2}{c}{$n=200$} \\ \cline{4-7} 
			Structure & $a_{il}$ & Method            & ARI                            & MSE            & ARI                             & MSE         \\ \hline
			B & Norm&    Proposed &   0.960(0.002)        & 1.679(0.105)                    & 0.993(0.001)    & 0.392(0.026)                 \\
			&                       & Oracle  &      & 1.348(0.083)                     &                    &   0.327(0.022)             \\
			&                       & FIMR &  0.889(0.037)                                      &        41.156(11.154)              &    0.933(0.006)     & 8.217(1.838)\\
			&                       & Resp   & 0.935(0.007)       & 1.934(0.187)                   & 0.935(0.005)      & 0.733(0.039)                   \\
			&                       & Resi  & 0.861(0.013)       & 2.136(0.248)                  & 0.987(0.002)      & 0.499(0.047)                  \\
			
			& Unif  & Proposed  & 0.964(0.005)    & 1.369(0.121)                     & 0.992(0.002)      & 0.309(0.014)                    \\
			&                       & FIMR &    0.853(0.045)      &                               42.294(10.365) &   0.921(0.008)         & 6.248(1.530)\\
			&                       & Oracle     &           & 1.008(0.089)                    &             & 0.255(0.016)                                \\ 
			&                       & Resp  & 0.916(0.008)      & 1.531(0.123)               & 0.926(0.004)   & 1.123(0.053)               \\
			&                       & Resi  & 0.845(0.015)  & 1.703(0.149)                 & 0.987(0.002)    & 0.408(0.021)                      \\\hline
			
			UB &Norm & Proposed  & 0.954(0.004)    & 1.337(0.097)                       & 0.981(0.002)          & 0.477(0.025)                 \\
			&                       & Oracle   & & 1.187(0.080)             & & 0.353(0.019)         \\
			&                       & FIMR &  0.908(0.037)      & 12.426(2.170)                               &    0.975(0.001)        & 2.094(0.175) \\
			&                       & Resp  & 0.935(0.005)    & 1.319(0.139)                       & 0.969(0.002)     & 0.618(0.035)                    \\
			&                       & Resi    & 0.924(0.010)   & 2.011(0.363)                  & 0.977(0.004)          & 0.492(0.023)                    \\
			
			& Unif & Proposed & 0.962(0.004)     & 1.278(0.136)                    & 0.995(0.001)             & 0.374(0.029)                  \\
			&                       & Oracle   & &1.018(0.085)&  &0.279(0.014)       \\ 
			&                       & FIMR &   0.913(0.041)          &                               9.371(2.794) & 0.985(0.002)                   & 3.754(0.159) \\
			&                       & Resp  & 0.931(0.007)  & 1.286(0.129)                          & 0.952(0.004)     & 0.594(0.052)                   \\
			&                       & Resi    & 0.927(0.007)  & 1.493(0.214)                       & 0.969(0.002)          & 0.476(0.028)                    \\ \hline
			
			\label{tab: table1}
	\end{tabular}}
\end{table}

\begin{figure}[H]
	\centering    
	\subfigure[]
	{
		\includegraphics[width=3.8cm]{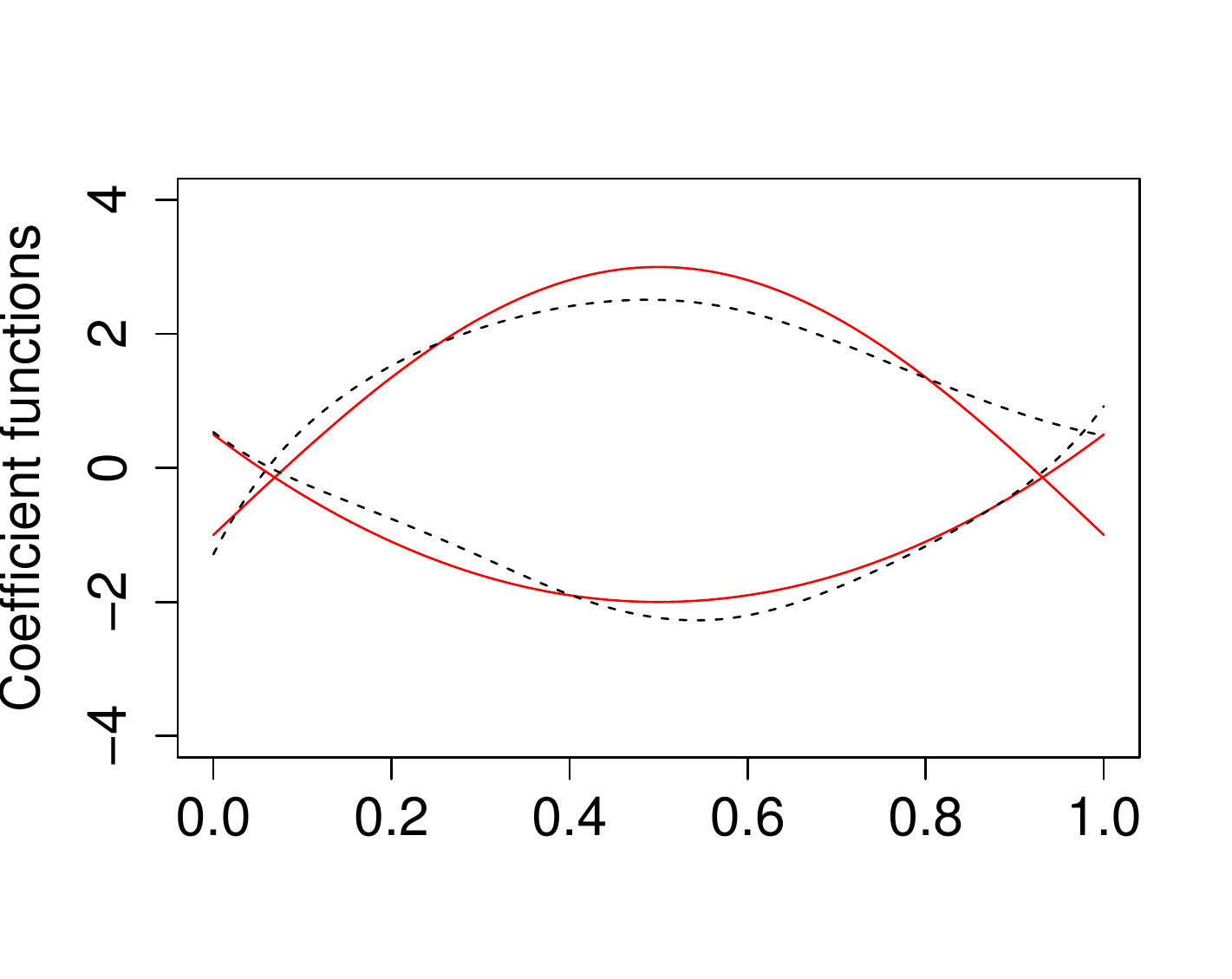}
	}
	\subfigure[]
	{
			\includegraphics[width=3.8cm]{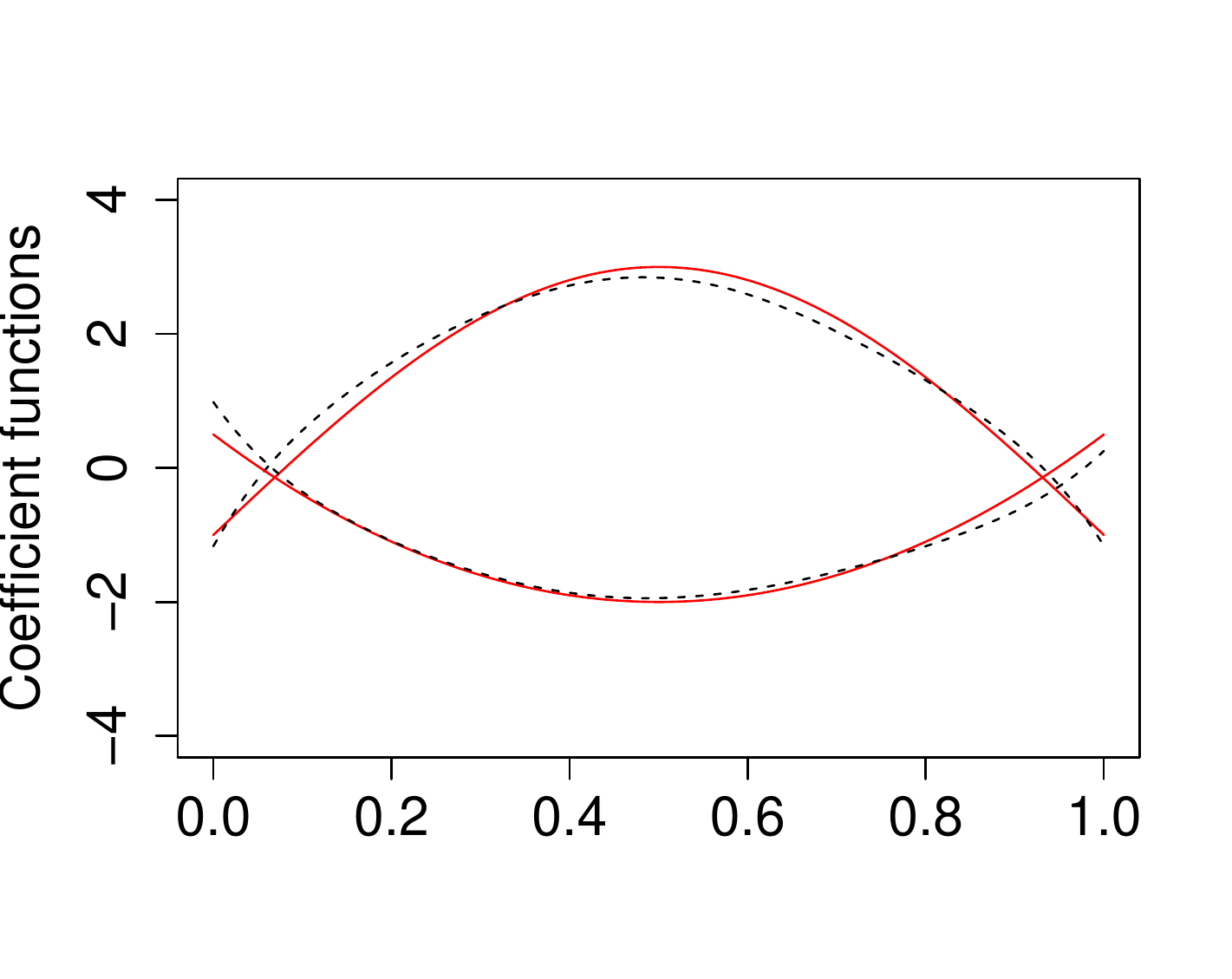} 
		}
		\subfigure[]
		{
			\includegraphics[width=3.8cm]{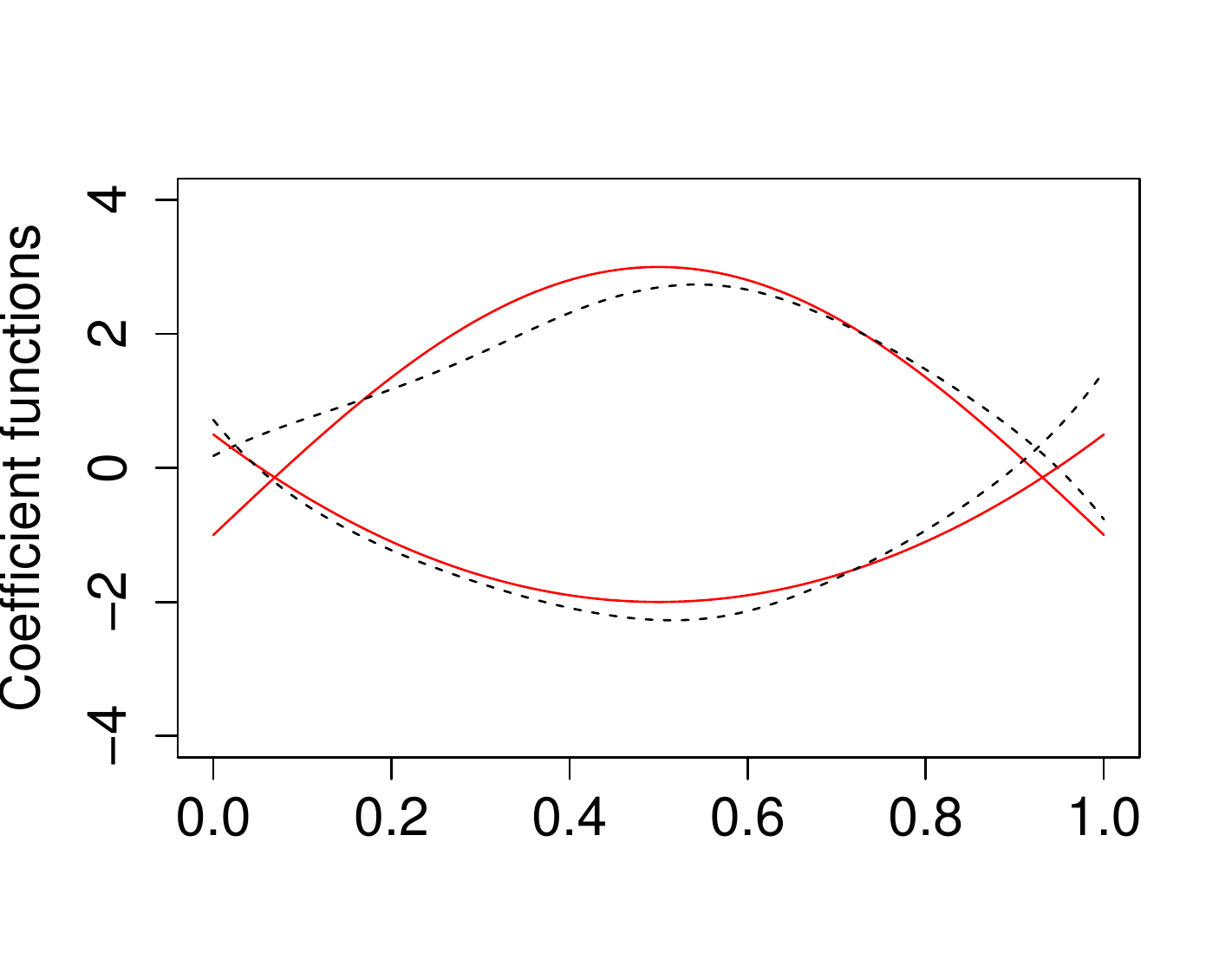}
		}
		\subfigure[]
		{
			\includegraphics[width=3.8cm]{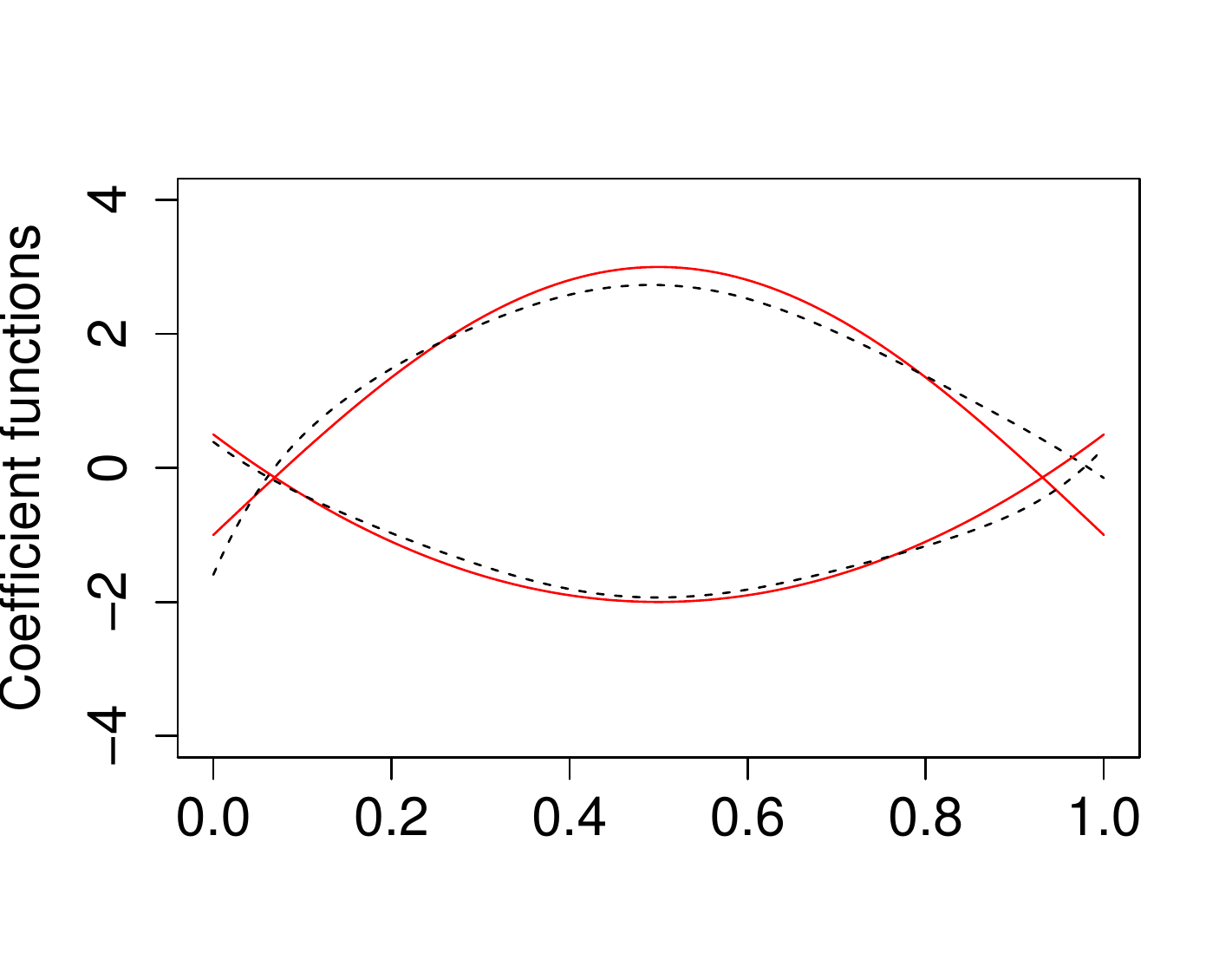} 
		}
		\subfigure[]
		{
			\includegraphics[width=3.8cm]{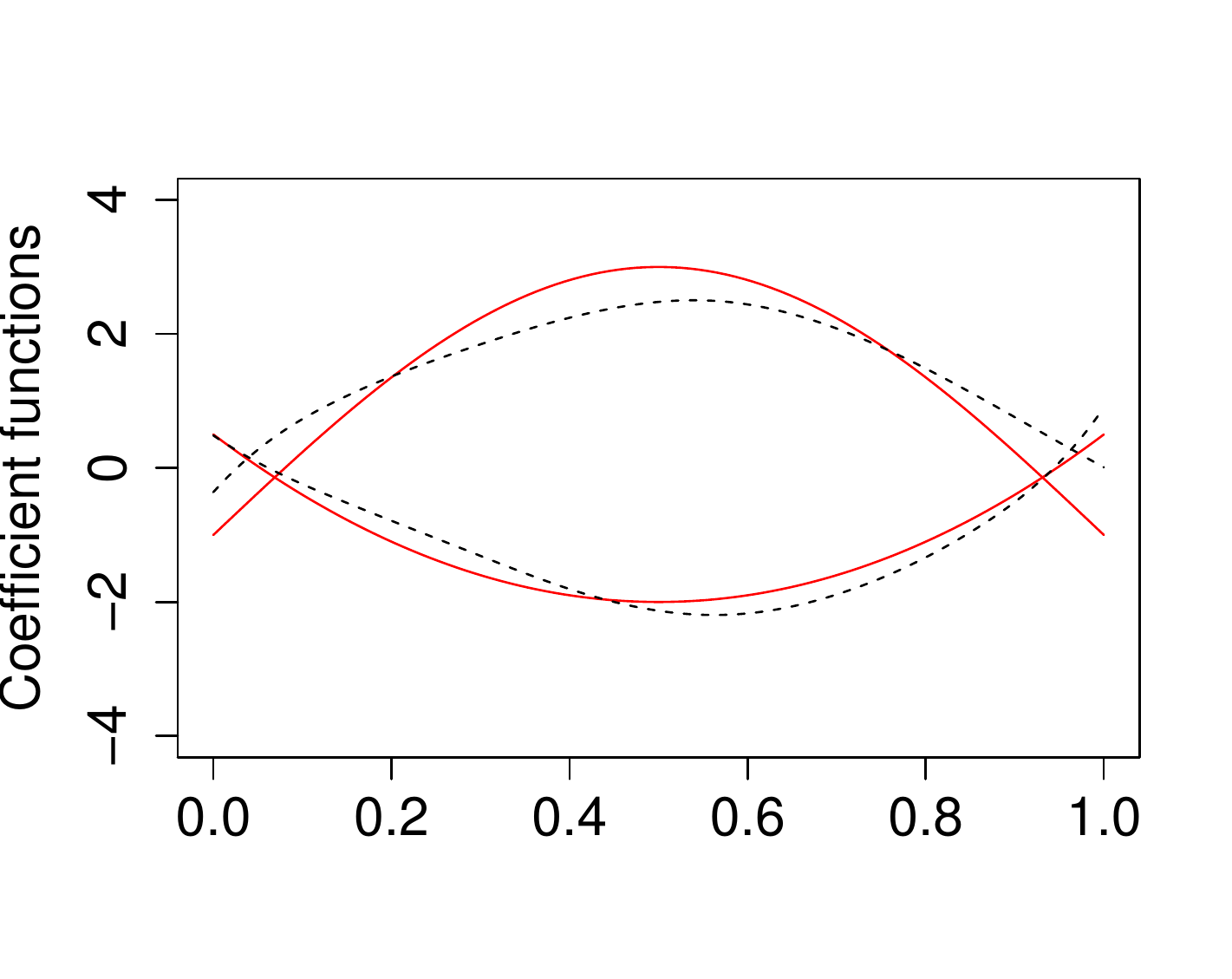}
		}
		\subfigure[]
		{
			\includegraphics[width=3.8cm]{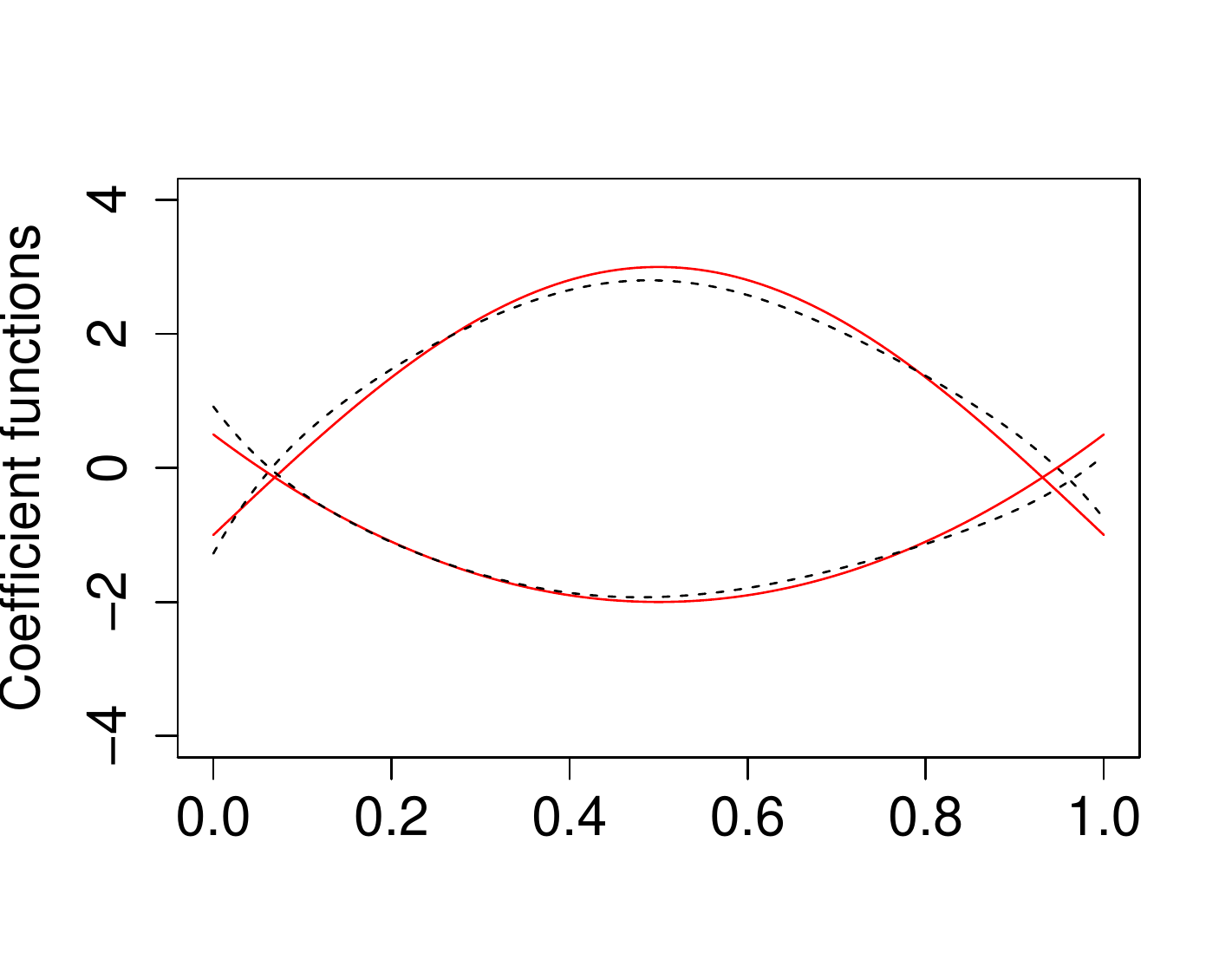} 
		}
		\subfigure[] 
		{
			\includegraphics[width=3.8cm]{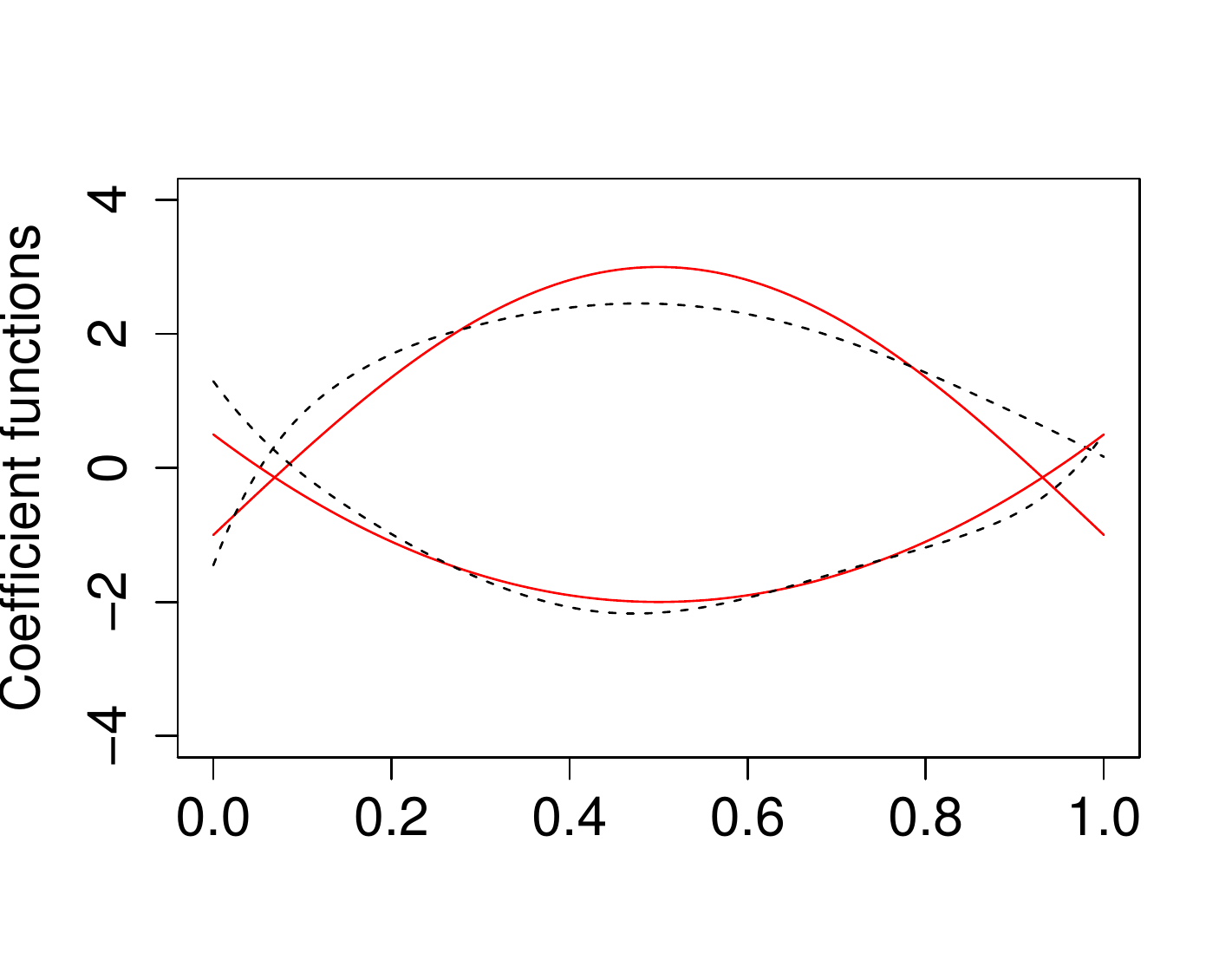}
		}
		\subfigure[]
		{
			\includegraphics[width=3.8cm]{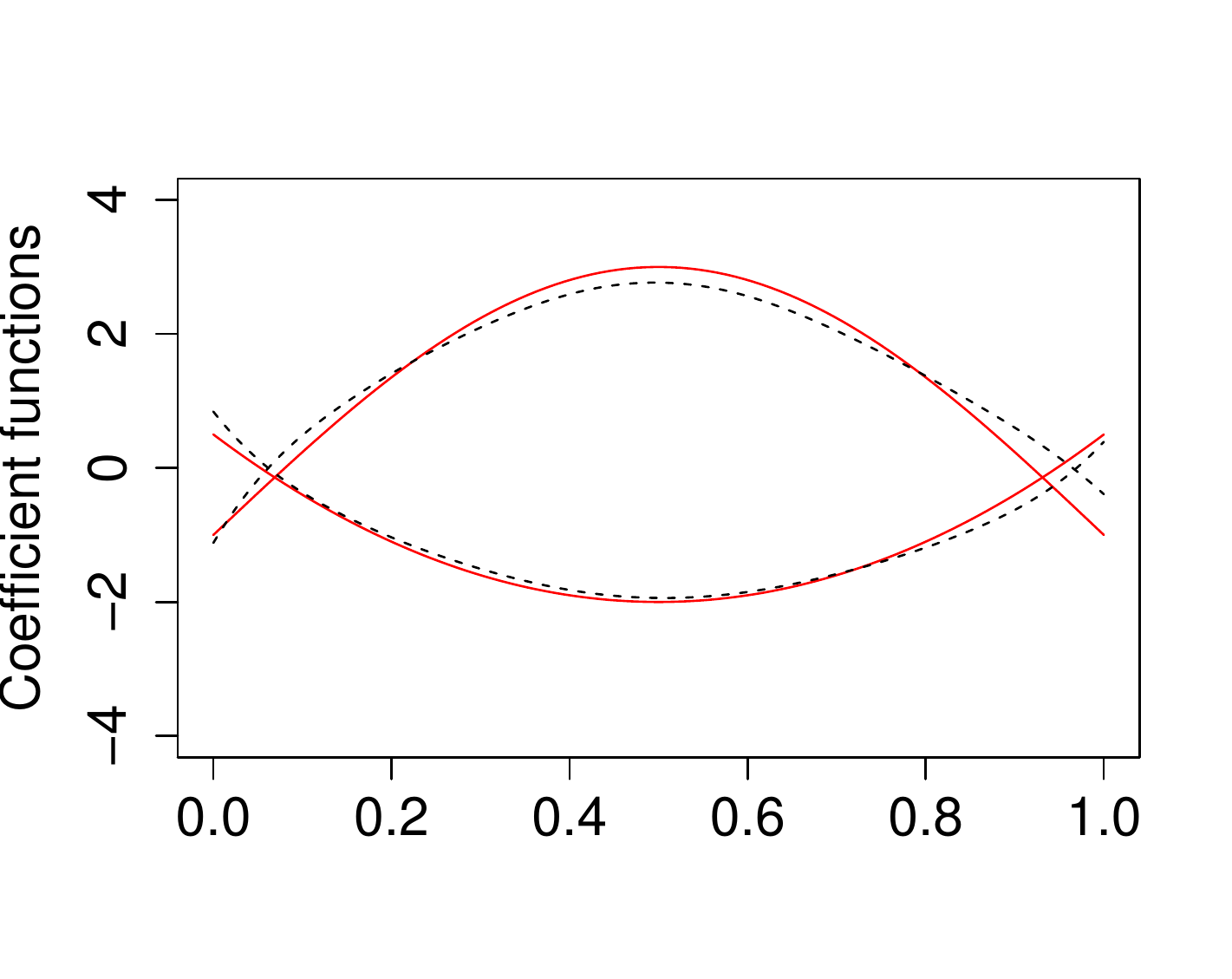} 
		}	
		\caption{Simulation results in Example 1: estimation of the coefficient functions by the proposed approach under Scenario 1. (a) and (e) balanced structure with $n=40$, (b) and (f) balanced structure with $n=100$, (c) and (g) unbalanced structure with $n=40$, (d) and (h) unbalanced structure with $n=200$. (a)-(d) $a_{il}\sim N(2,1)$, and (b)-(h) $a_{il}\sim U(0,4)$. Red solid lines represent the true coefficient functions, and black dashed lines represent the estimated coefficient functions.} 
		\label{fig:coef1}
	\end{figure}
	
	\subsection{Example 2: three subgroups} 
	In this simulation study, we investigate the performance of the proposed approach when there are three subgroups. Let the sample size $n=60$ and $300$. Under the balanced structure, three subgroups have equal sizes; and under the unbalanced structure, four subgroups have sizes in a ratio of 2:3:5. For the $k$th subgroup, we simulate data from model (\ref{eq:sim}) with the predictor function and the error terms generated from the same distribution as given in Example 1. The three subgroups have distinct coefficient functions: $\xi_1(t)=-6t+2$, $\xi_2(t)=6t^3+10t^2-10t-3$, and $\xi_3(t)=-e^{2t}+3\sin(2t)+1.5$, of which the minimum distance (measured in $L^2[0,1]$) between them is 1.84. 
	
	Figure C3 in the supplementary material illustrates the distribution of the estimated number of subgroups by the proposed approach. We obtain similar patterns as those in Figure \ref{fig:ex1} and Figure C1 in the supplementary material for Example 1. Again, the proposed approach can accurately identify the true number of subgroups. Table C2 in the supplementary material summarizes the results of other statistics. Similar to Example 1, aside from the Oracle approach, the ARI of the proposed approach is always the highest; therefore, it outperforms alternatives in the accuracy of identifying subgroup memberships. In terms of estimation accuracy, the proposed approach is again observed to have favorable performance. Compared to Example 1 where there are only two subgroups, here, the advantages of the proposed approach in terms of subgrouping and estimation are more significant. Consider for example $n=60$, balanced structure, and $a_{il}$'s are generated from norm distribution $N(2,1)$. The proposed approach has (ARI, MSE)=(0.948, 1.969), which is superior to the alternatives: (0.533, 395.569) for the FIMR approach, (0.834, 4.230) for the Resp approach, and (0.781, 7.304) for the Resi approach. FIMR has the worst performance. Figure \ref{fig:coef3} displays the estimation of coefficient functions by the proposed approach. It can be observed again that the fitting curves produced by the proposed approach are close to the true ones regardless of the simulation settings, indicating that the proposed approach can recover accurately the underlying true coefficient functions. Furthermore, Figure C4 in the supplementary material displays the consumed time (in seconds) running our algorithm with different sample sizes.

	
	\begin{figure}[H]
		\centering    
		\subfigure[]
		{
			\includegraphics[width=3.8cm]{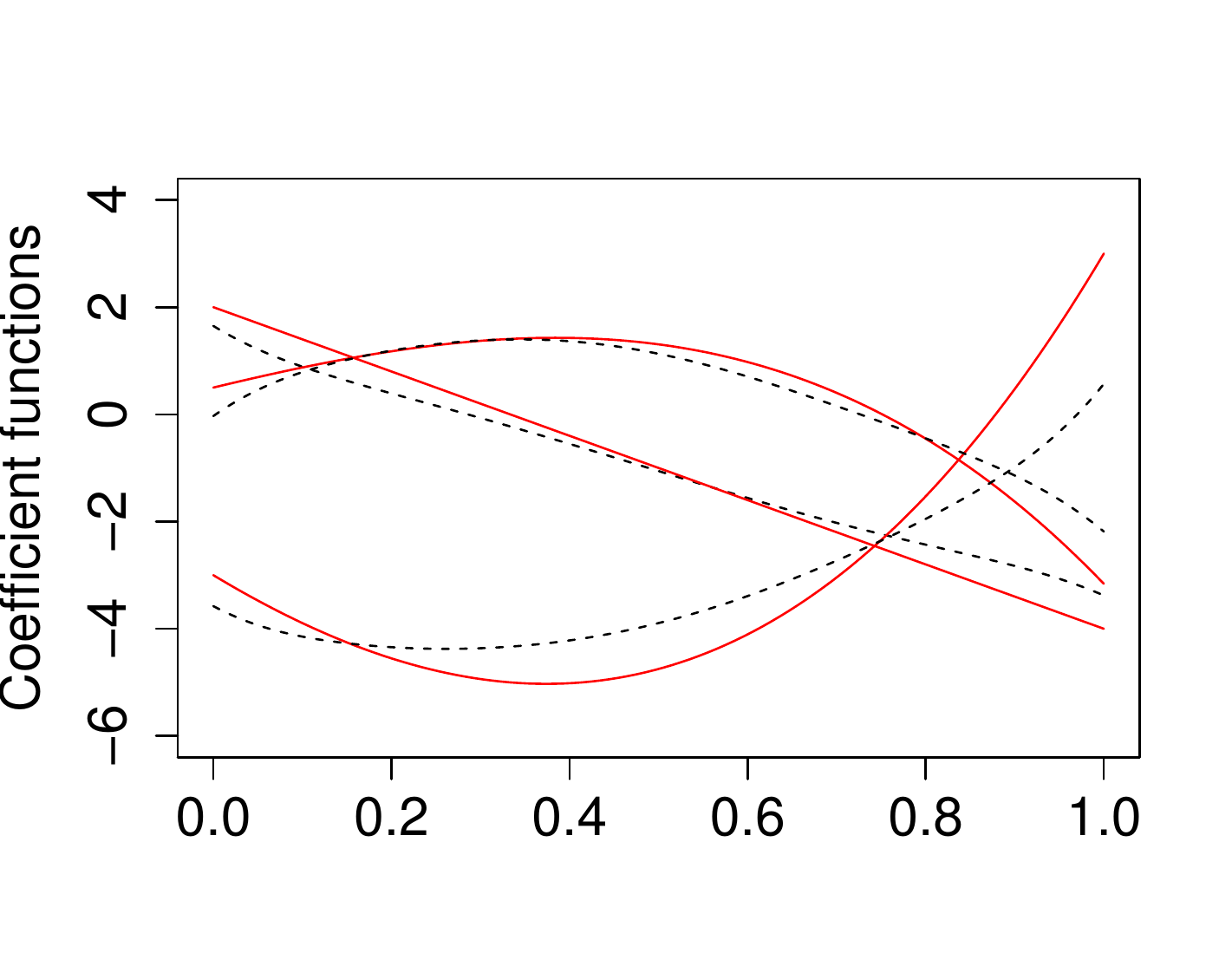}
		}
		\subfigure[]
		{
			\includegraphics[width=3.8cm]{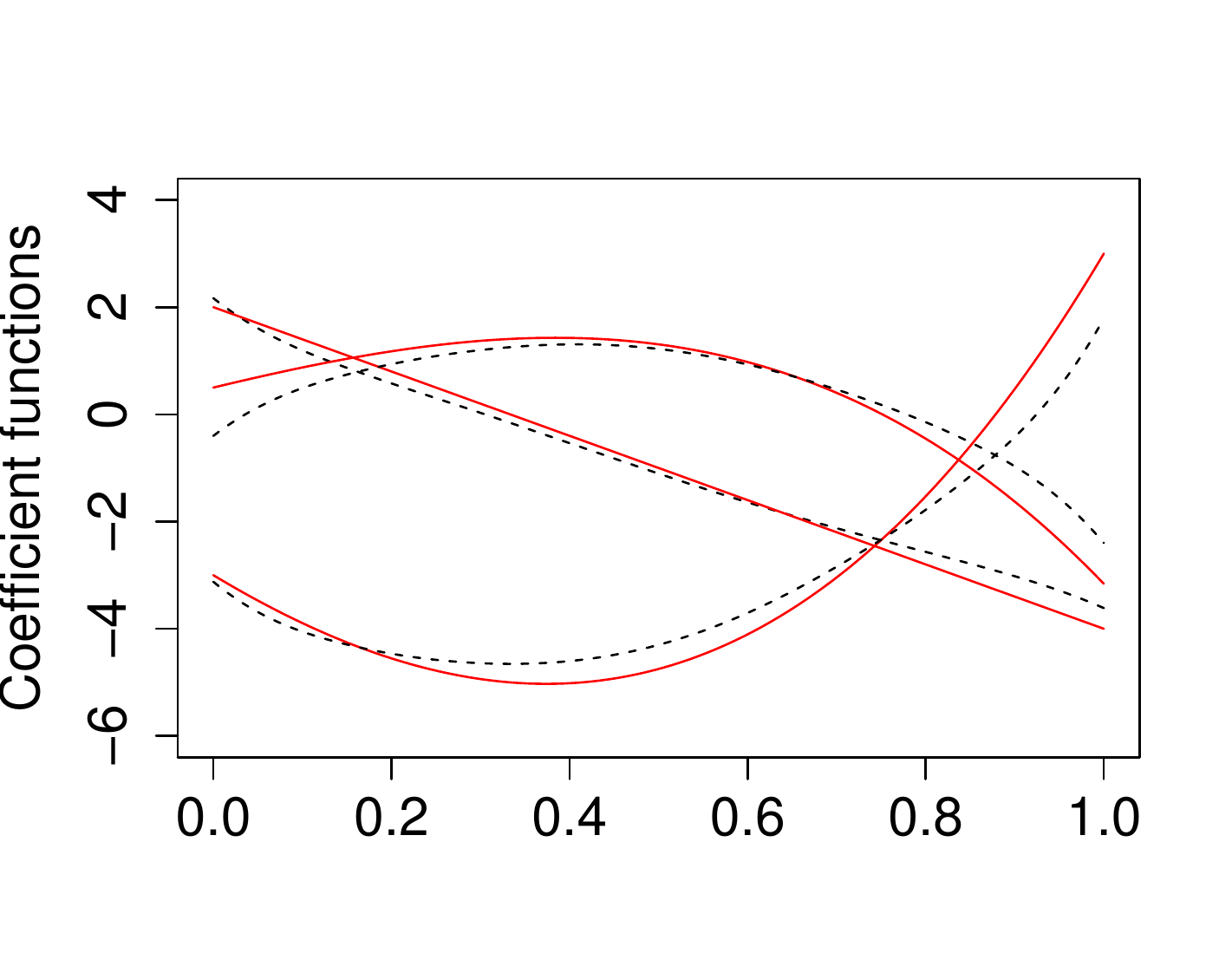} 
		}
		\subfigure[] 
		{
			\includegraphics[width=3.8cm]{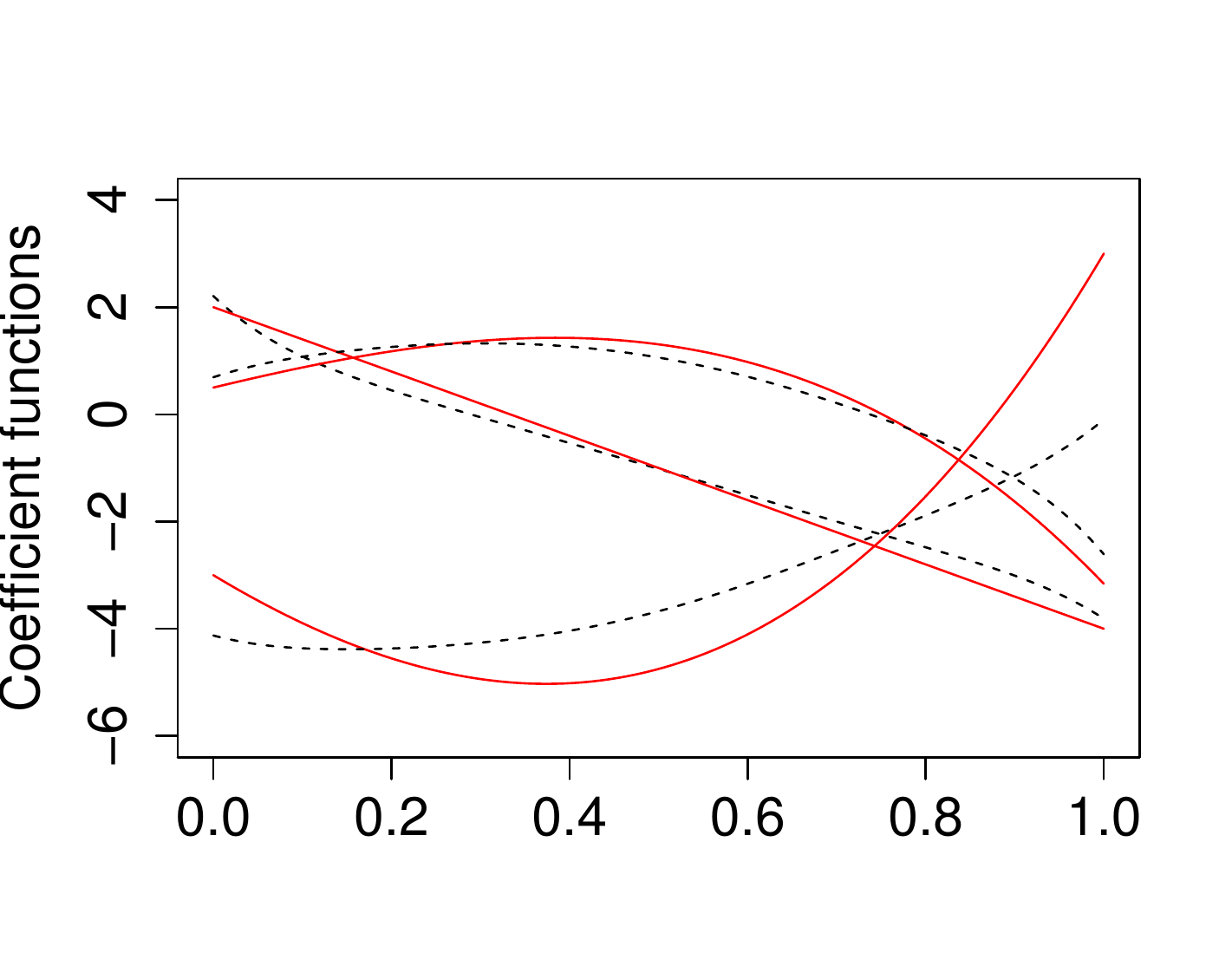}
		}
		\subfigure[]
		{
			\includegraphics[width=3.8cm]{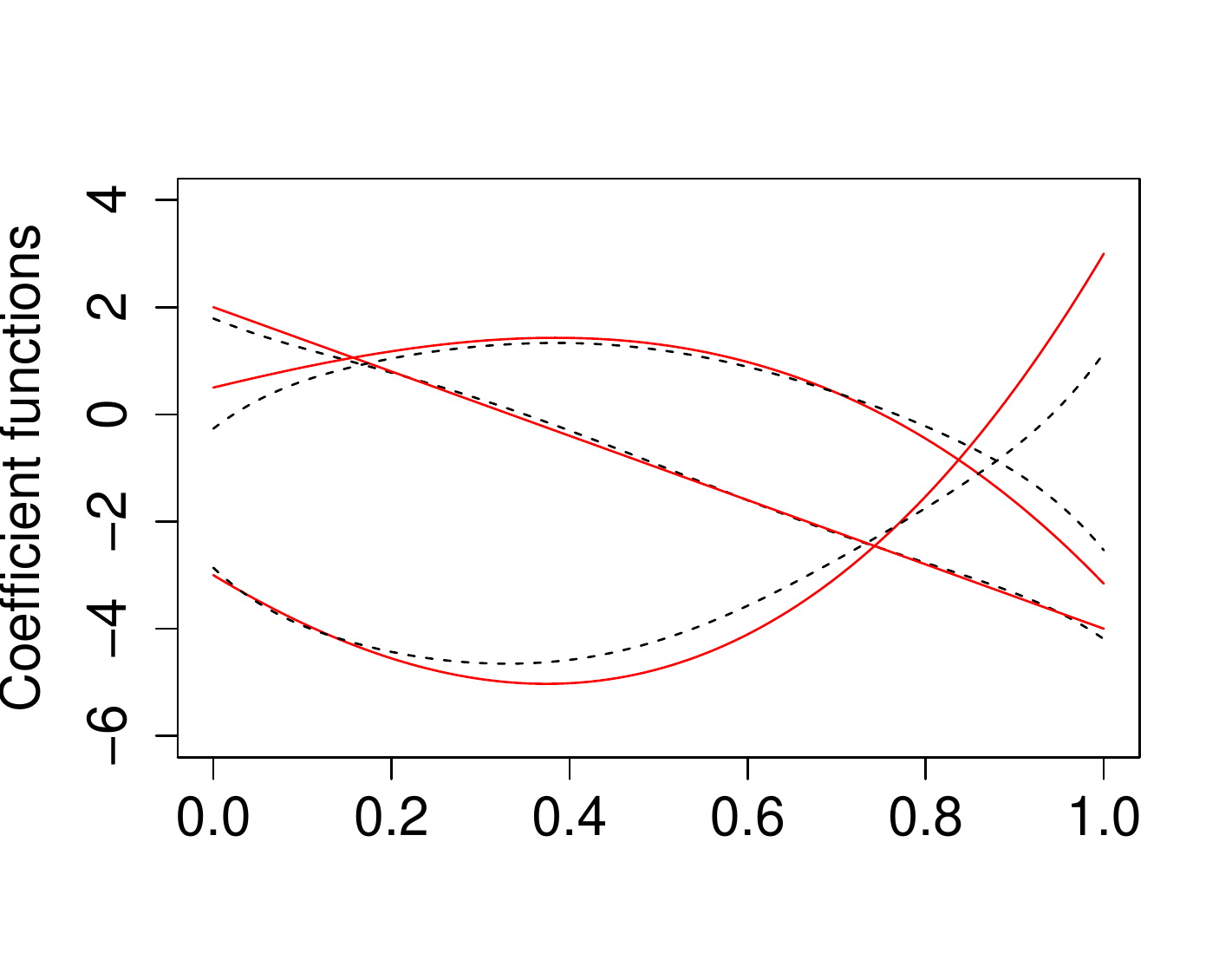} 
		}
		\subfigure[]
		{
			\includegraphics[width=3.8cm]{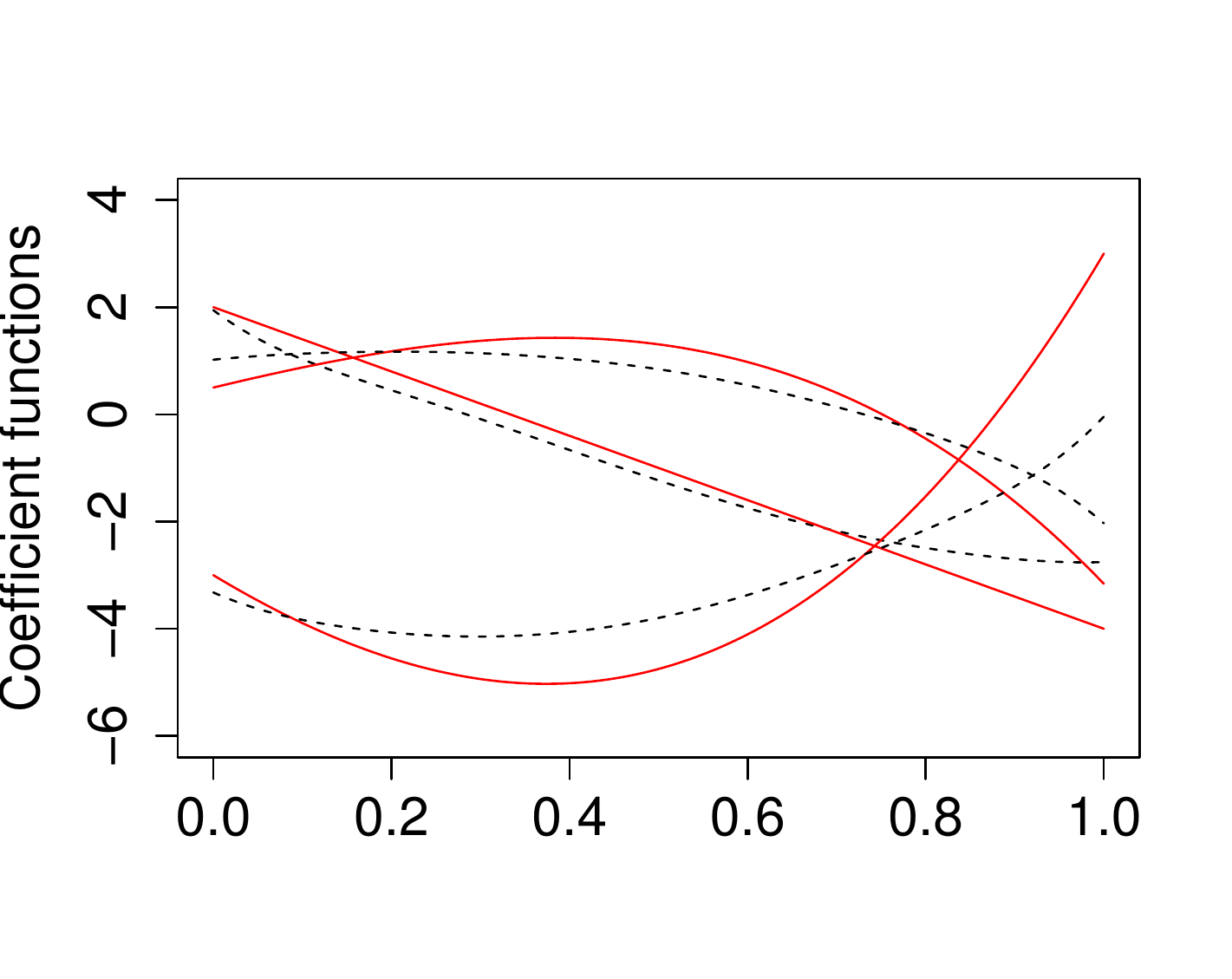}
		}
		\subfigure[]
		{
			\includegraphics[width=3.8cm]{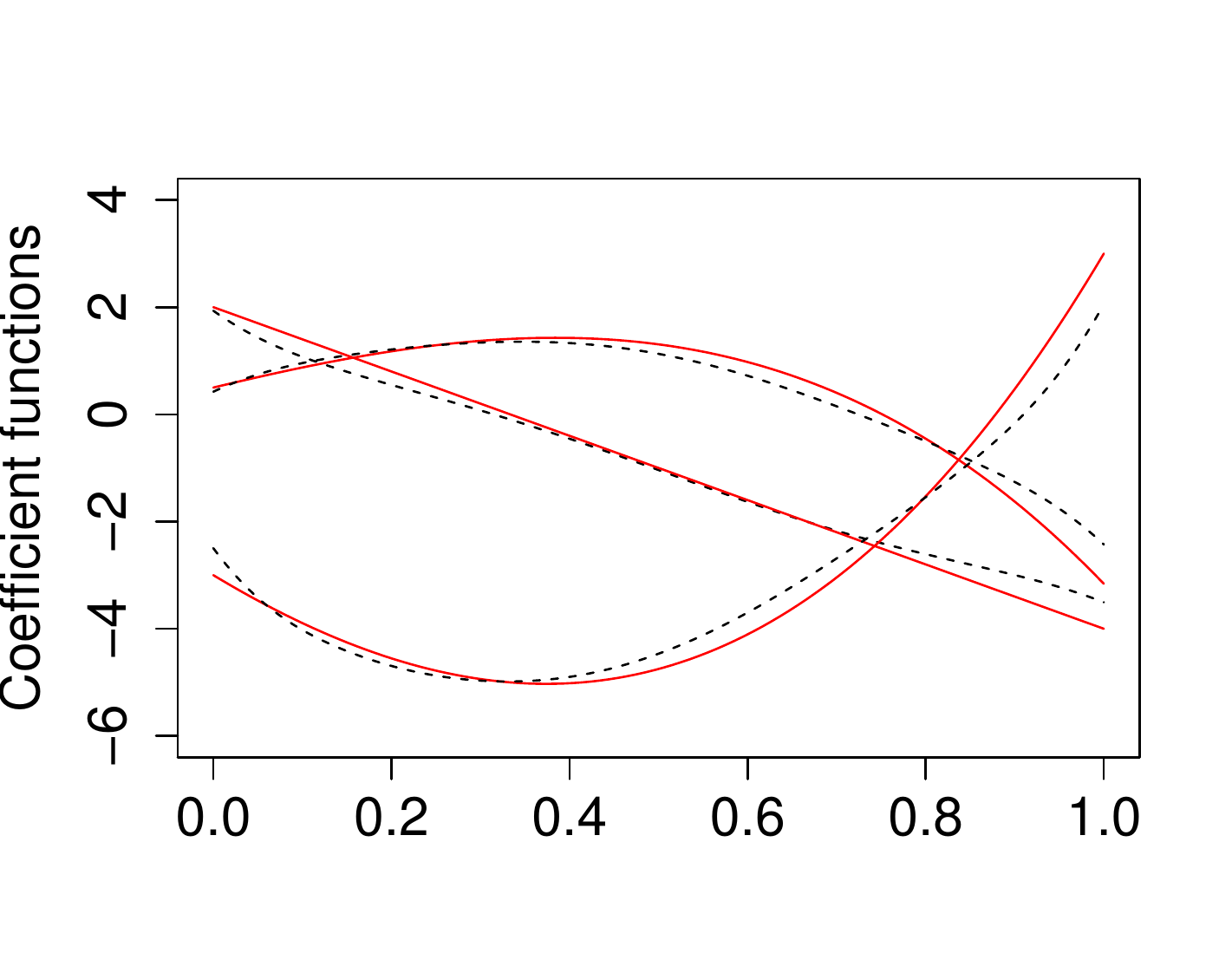} 
		}
		\subfigure[]
		{
			\includegraphics[width=3.8cm]{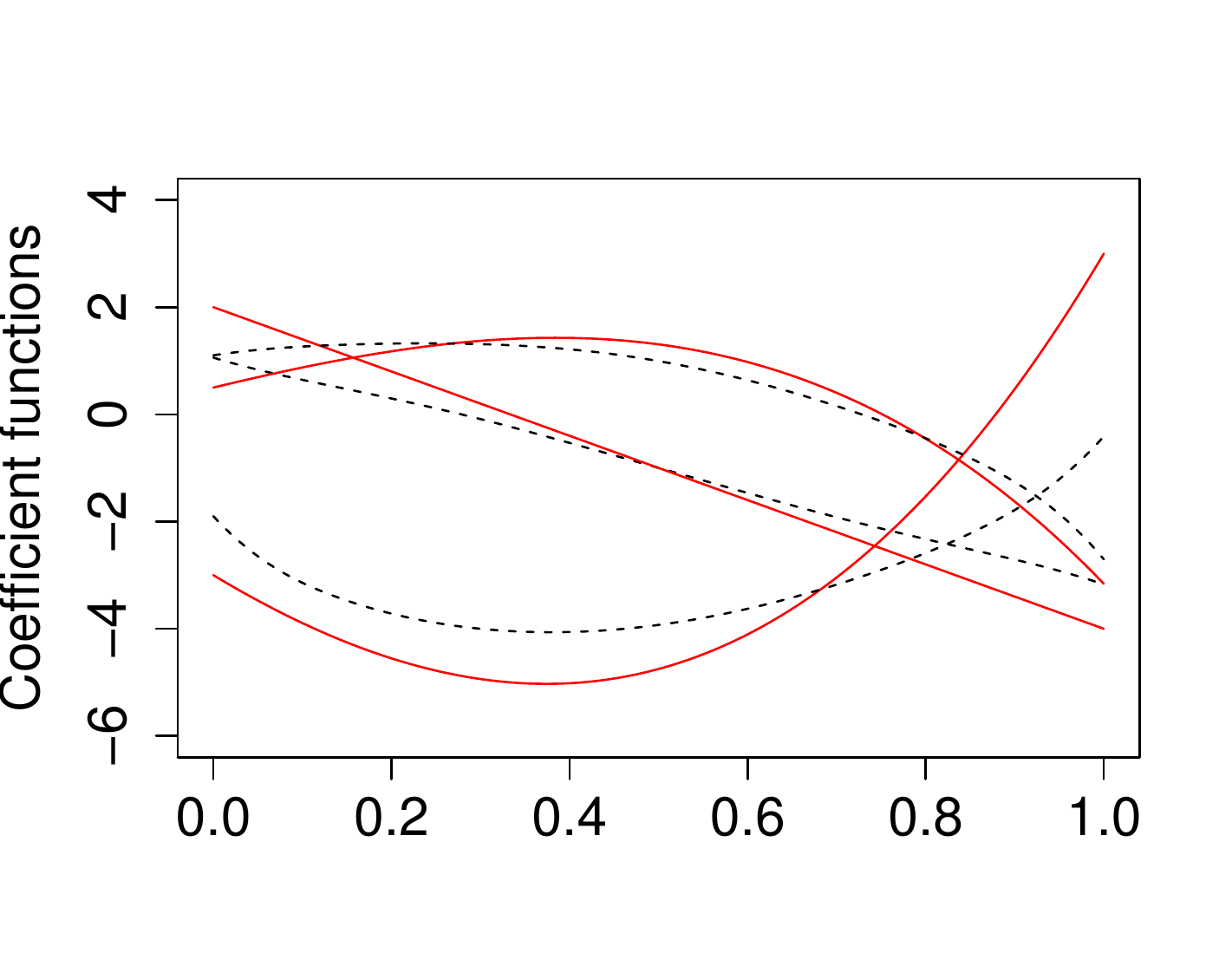}
		}
		\subfigure[]
		{
			\includegraphics[width=3.8cm]{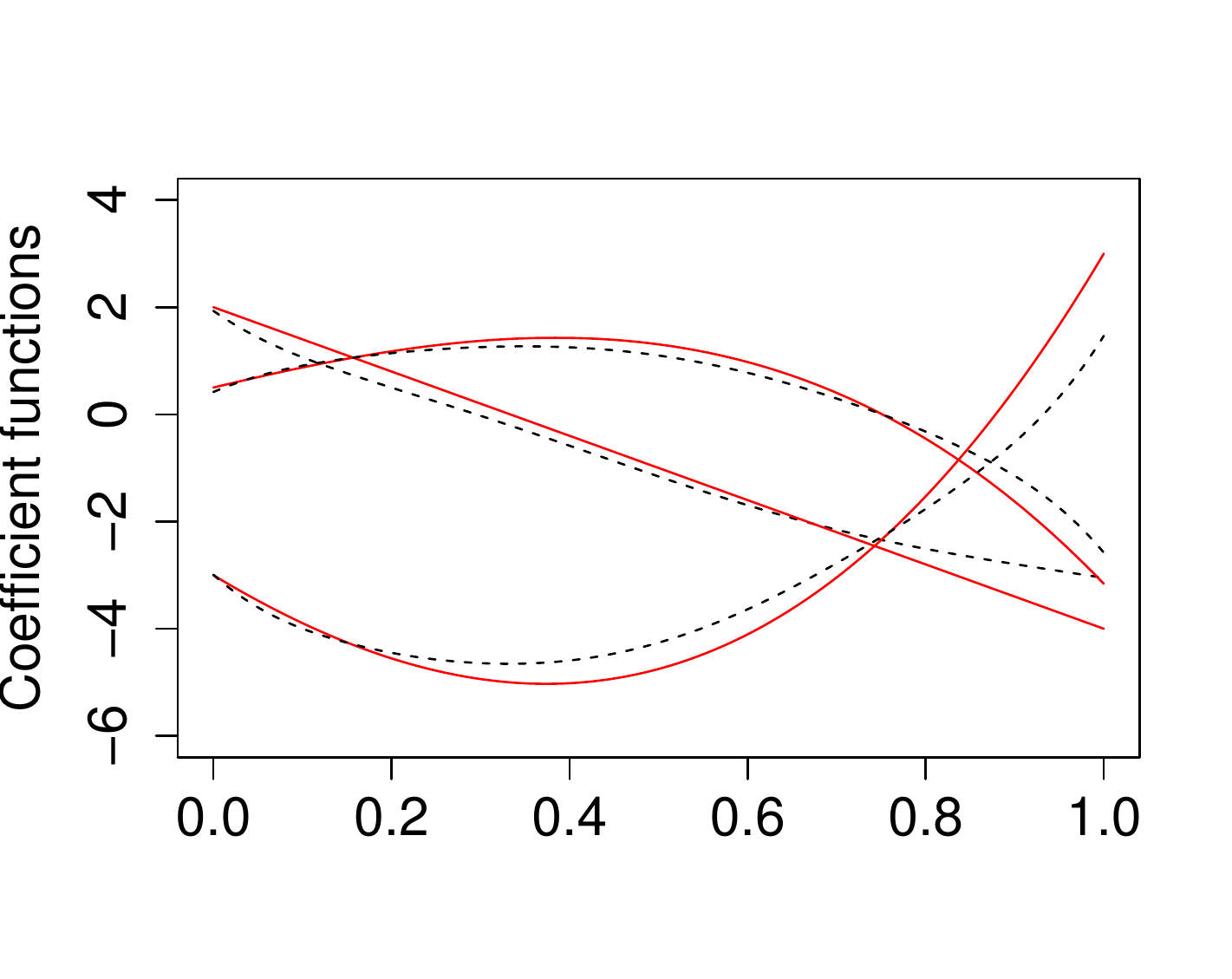} 
		}	
		\caption{Simulation results in Example 2: estimation of the coefficient functions by the proposed approach. (a) and (e) balanced structure with $n=60$, (b) and (f) balanced structure with $n=300$, (c) and (g) unbalanced structure with $n=60$, (d) and (h) unbalanced structure with $n=300$. (a)-(d) $a_{il}\sim N(2,1)$, and (b)-(h) $a_{il}\sim U(0,4)$. Red solid lines represent the true coefficient functions, and black dashed lines represent the estimated coefficient functions.} 
		\label{fig:coef3}
	\end{figure}

\section{Empirical studies}\label{sec4}
In this section, we apply the proposed approach to China's air quality data. This data was collected and organized by the World Air Quality Index \url{http://aqicn.org/}. The concentrations of six major air pollutants, that is, $\text{PM}_{2.5}$, $\text{PM}_{10}$, $\text{SO}_{2}$, $\text{O}_{3}$, $\text{NO}_{2}$ and CO, are recorded once every eight hours. The dataset consists of 1655 air monitoring stations from February 1, 2015, to January 31, 2016. 
The response of interest is the average $\text{PM}_{2.5}$ concentration during the time period in the study. In the literature, it has been suggested that $\text{PM}_{2.5}$ concentration is strongly correlated with $\text{NO}_{2}$ concentration \citep{Blanchard2005, Hua2020}. In fact, about 50\% of total $\text{PM}_{2.5}$ in the air is formed from $\text{NO}_{2}$ and from SO$_2$ as well.  China covers large areas, and these areas have diverse meteorological conditions, population, local industry, traffic, and so on. These differences may lead to distinctive relationship patterns between $\text{PM}_{2.5}$ and $\text{NO}_{2}$ over this huge geographical region. However, most of the existing studies have paid insufficient attention to the potential heterogeneity in the relationship between $\text{PM}_{2.5}$ and $\text{NO}_{2}$ (P-N relationship). As such, we aim to investigate the complex P-N relationship, that is, to explore the underlying subgroups concerning the P-N relationship, and to characterize the relationship pattern for each subgroup. 

We focus on the most populated cities in China, which are located in the range of 108$^\circ$ to 135$^\circ$ longitude and 18$^\circ$ to 50$^\circ$ latitude and consist of 25 provinces in Northeast, North, East, Central, and South China. Most of the cities in the area taken into consideration have multiple monitoring stations, and hence, to reduce the computation cost, we randomly select 400 monitoring stations over these large geographical regions. After removing stations that miss more than $50\%$ measurements in NO$_2$, a total of 373 stations are included. The closer regions are more likely to possess similar climatic characteristics, meteorological conditions, and industry structure, and thus are more likely to have similar/identical P-N relationships. This ``neighborhood effect" can be easily incorporated into our model by setting weights $w_{ij}=1/d_{ij}$ where $d_{ij}$ is the spherical distance (km) between monitoring stations $i$ and $j$. 

The proposed approach utilizes B-spline with an order $\tilde{q}=4$ and the number of knots $\tilde{m}=15$ to approximate the $\text{NO}_2$ concentration (prediction function), and order $q=4$ and the number of knots $m=8$ for unknown coefficient functions. The tuning parameters are selected by following the same rule as given in Section \ref{sec32}. With the optimal tuning parameters $(\lambda_1^*,\lambda_2^*)=(0.1,0.6)$, five subgroups are identified, with sizes 102, 121, 51, 52, and 47. Figure \ref{fig:em} (a) displays the grouping results, where monitoring stations within the same subgroup are marked in the same color. We observe a clear spatial clustering that is highly consistent with geographical regionalization of China. The first subgroup (marked in green) consists of sites in Northeast China, the second subgroup (marked in red) is composed mainly of sites in North China, the third subgroup (marked in orange), the fourth subgroup (marked in purple), and the fifth subgroup (marked in blue) consists of sites in East, Central, and South China, respectively. In addition, our approach successfully combines sites that are geographically nonadjacent yet showing similar P-N relationship into one subgroup. For example, the Northeast (first) subgroup includes sites across three provinces in Northeast China (Heilongjiang, Jilin, and Liaoning) and those in north edge of Jiaodong Peninsular. The Jiaodong Peninsular is offshore across from the Liaodong peninsula which is located south of Liaoning province, and both have temperate oceanity monsoon climate. The estimated coefficient functions of all subgroups are displayed in Figure \ref{fig:em} (b). Overall, the five subgroups have significantly different coefficient functions. 

\begin{figure}[H]
	\centering   
	\subfigure[]{\includegraphics[height=6cm]{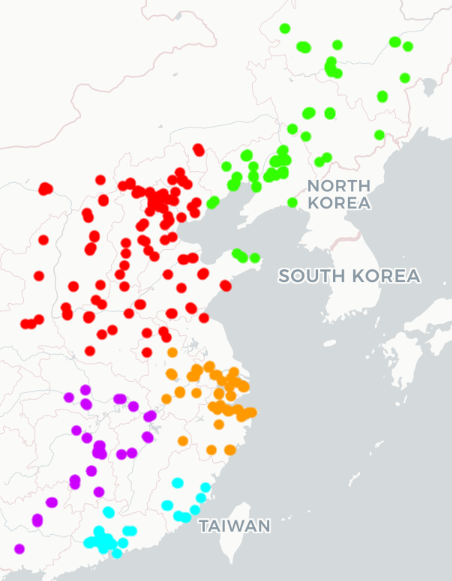}}
	\quad 
	\subfigure[]{\includegraphics[height=7.5cm]{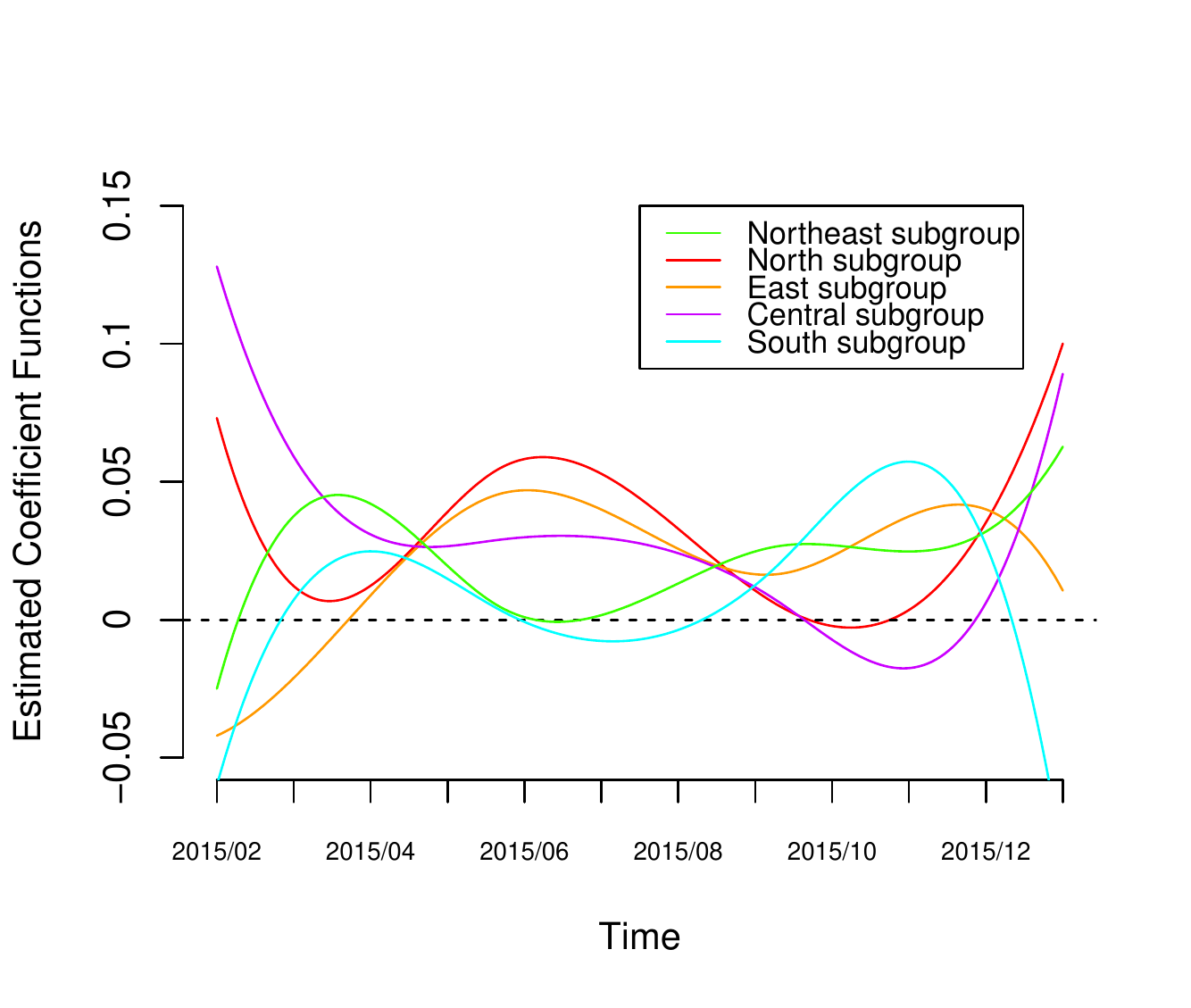}}
	\caption{Data analysis: (a) geographical distribution and (b) estimated coefficient functions of five subgroups. Each color represents a different subgroup.} 
	\label{fig:em}
\end{figure}

Data are also analyzed using alternative methods: FIMR, Resi, and Resp. To make different approaches comparable, we fix the number of subgroups in the alternatives to the same as that of the proposed approach. 
We compute the Normalized Mutual Information (NMI) to measure the similarity between the subgrouping results, with range $[0,1]$ and a larger value indicating a higher degree of similarity. The results are provided in Table C3 in the supplementary material. We find that the proposed approach has low similarity with the alternatives. As has been noted in many published studies, it is difficult to objectively assess which set of subgroup analysis results are more sensible. As such, we evaluate prediction performance, which may provide support to a certain extent. Specifically,  95\%, 90\%, 85\%, and 80\% of the sites are randomly selected as the training data, and the remaining sites are served as the testing data. Estimation is conducted using the training set, and prediction is made with the testing set. The process is repeated independently 50 times. Besides the aforementioned alternatives, we also apply the homogeneous FLM (Homo) to this data. Figure \ref{fig:MSE} presents the prediction mean square error (MSE). It can be observed that 
the proposed approach outperforms the alternatives in prediction. 
\begin{figure}[H]
	\centering    %
	\subfigure[] %
	{
		\begin{minipage}{0.46 \linewidth}
			\centering
			\includegraphics[width=5.5cm]{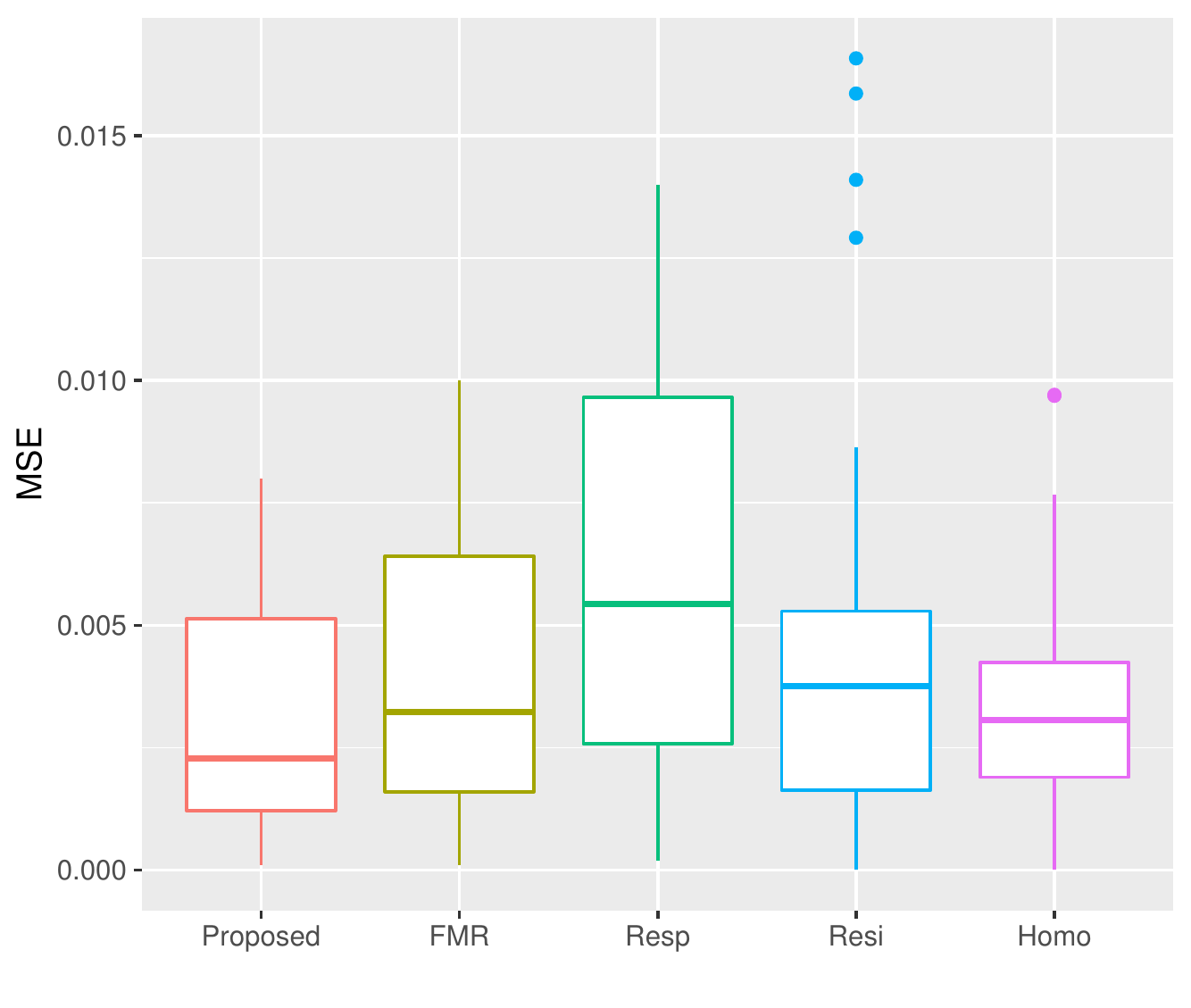}
		\end{minipage}
	}
	\subfigure[] %
	{
		\begin{minipage}{0.46 \linewidth}
			\centering     
			\includegraphics[width=5.5cm]{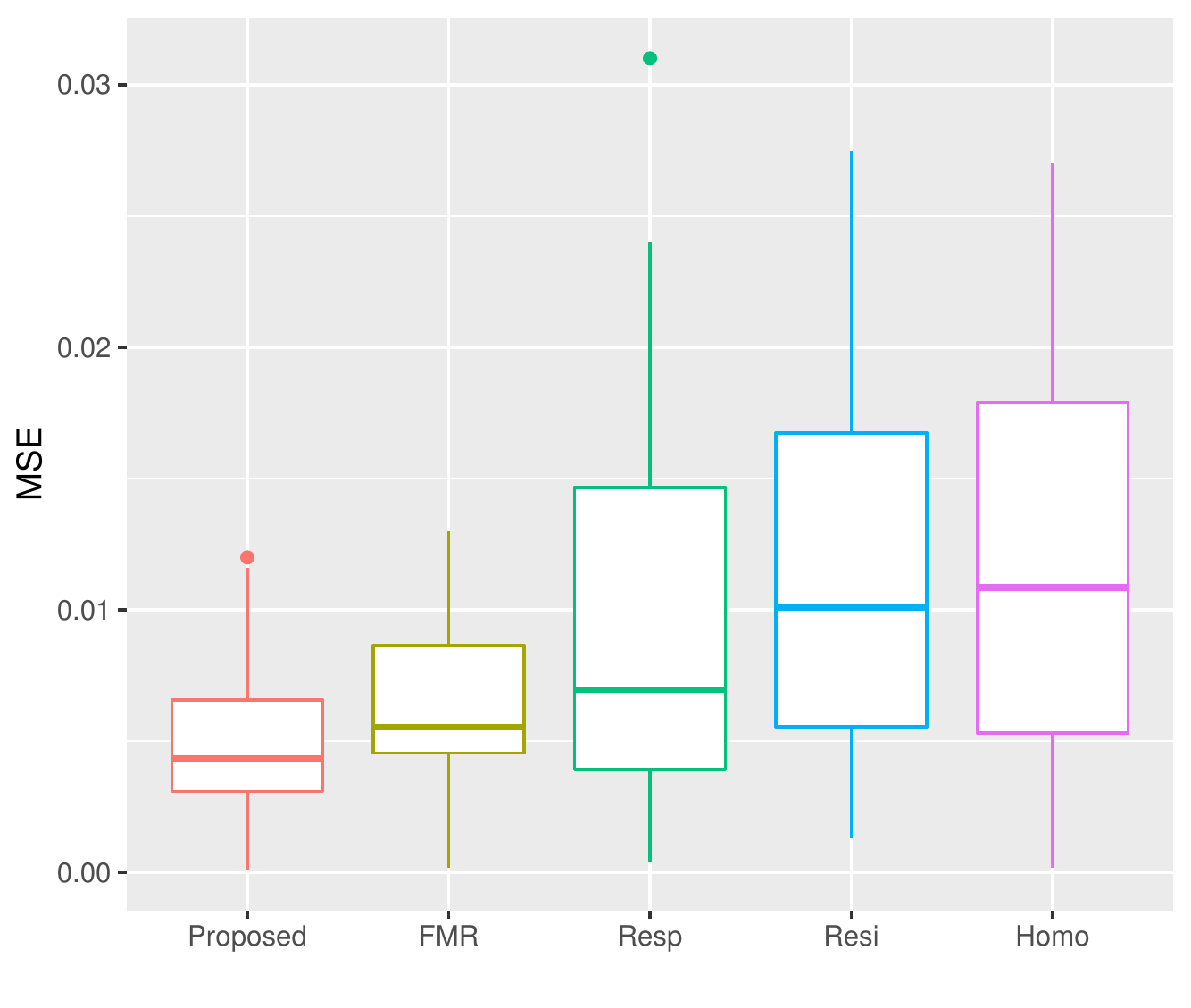}
		\end{minipage}
	}
	\subfigure[] 
	{
		\begin{minipage}{0.46 \linewidth}
			\centering     
			\includegraphics[width=5.5cm]{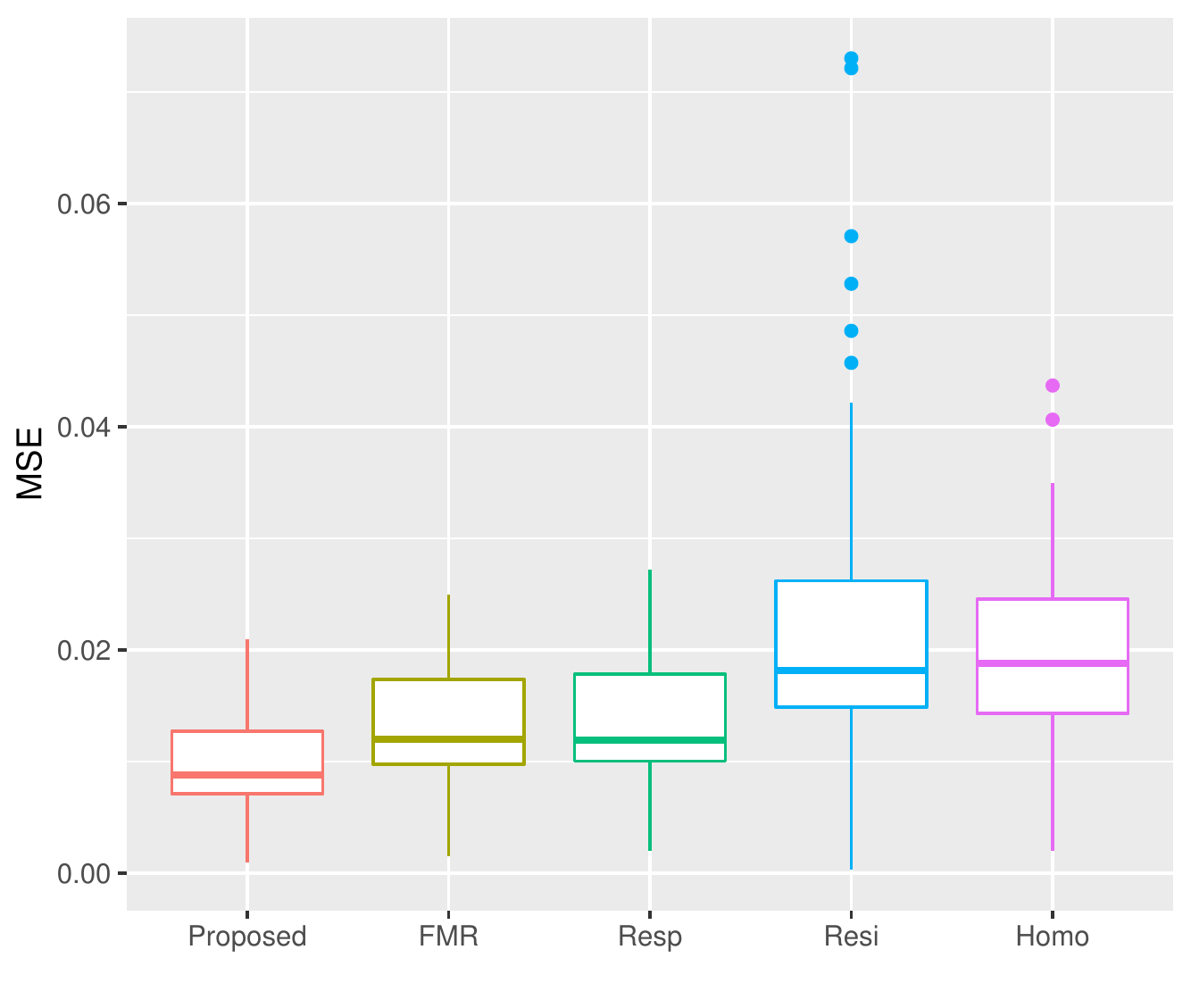}
		\end{minipage}
	}
	\subfigure[] 
	{
		\begin{minipage}{0.46 \linewidth}
			\centering     
			\includegraphics[width=5.5cm]{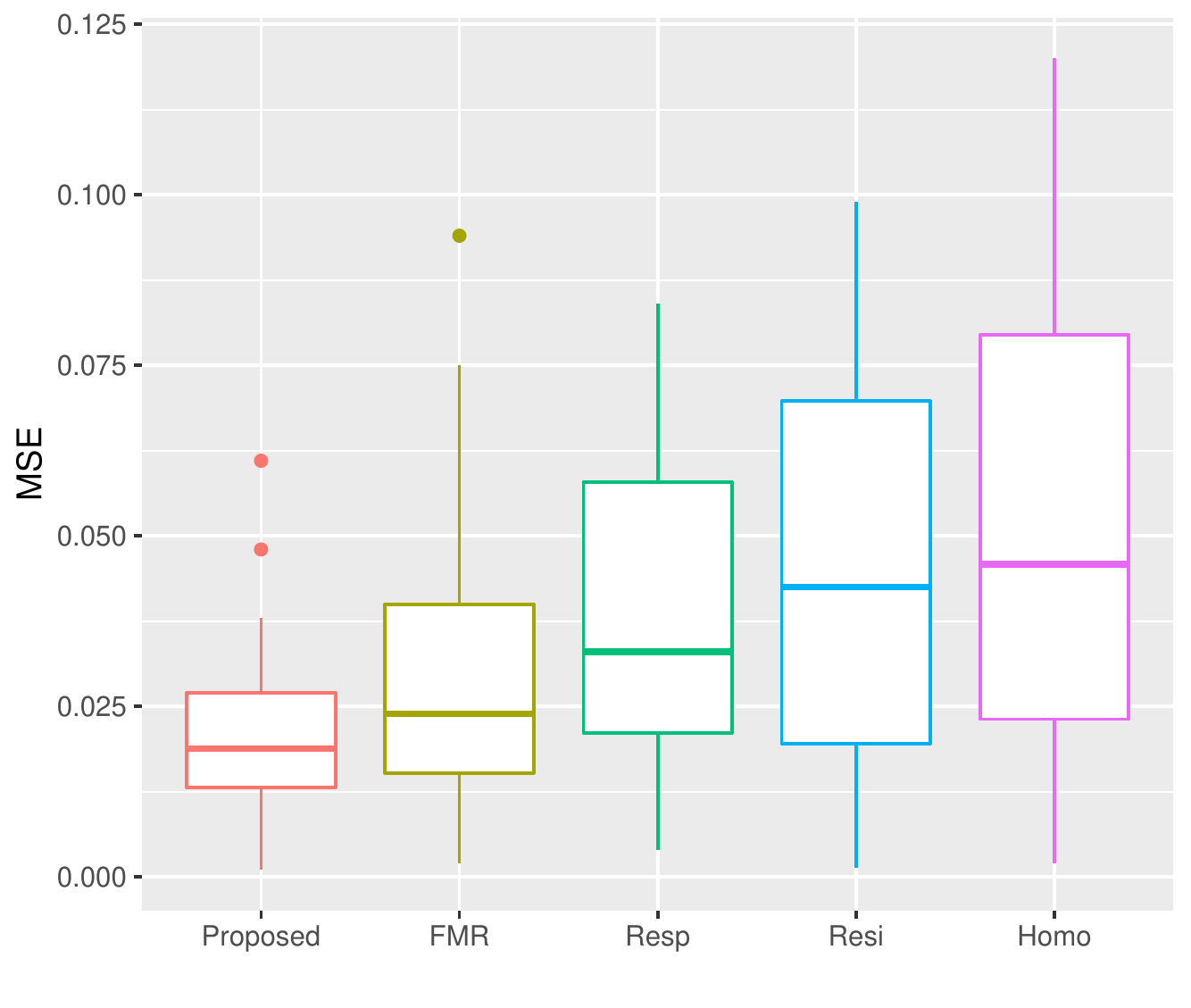}
		\end{minipage}
	}
	\caption{Prediction mean square errors on test sets which accounts for (a) 5\%, (b) 10\%, (c) 15\%, and (d) 20\% of data.} 
	\label{fig:MSE} 
\end{figure}

\section{Conclusion}\label{sec5}
In this paper, we developed a fusion penalized approach to conduct a subgroup analysis for the heterogeneous FLM where the responses are scalars and the covariates are functions. The proposed approach can identify the subgrouping structure and estimate the heterogeneous coefficient functions simultaneously and automatically. We adopted the B-spline method to approximate the unknown coefficient functions and applied additional penalization to smooth the estimated coefficient functions. In addition, we grouped subjects by penalizing pairwise differences of B-spline coefficient vectors. We presented the theoretical properties of the proposed approach to ensure its estimation consistency. We implemented an ADMM-based iterative algorithm for numerical computation. The simulation demonstrated favorable performance. In the analysis of air quality data, findings different from those of the alternatives were generated, and an improved prediction was observed. 

The proposed approach involves two tuning parameters. How to select viable tuning parameters, and in particular, the impact of heterogeneity on standard tuning parameter procedure, warrant further investigation. The proposed approach can be naturally extended to the FLM with multiple predictor functions. Beyond the FLM considered in this article, there are many other types of responses/models for functional data. It will be of interest to extend the proposed analysis to other responses/models.

\section*{Acknowledgements}
This research was supported by the National Natural Science Foundation of China (grants 12171479), and the Fund for building world-class universities (disciplines) of the Renmin University of China (Project No. KYGJC2022007). The computer resources were provided by the Public Computing Cloud Platform of the Renmin University of China. 
\par

\section*{Appendix}

\noindent
This supplementary material contains the details of ADMM algorithm and the proof of Theorem 1 and 2 in the main manuscript, as well as additional figures and tables in the experiments.
\par



\subsection*{A. Details of ADMM algorithm}

For a given $(\bm\eta,\bm\zeta)$, the component of the Lagrange function $L_n$ that depends on $\bm\theta$ is 
\begin{equation*}
	\mathcal{L}_n(\bm{\theta};\bm\lambda)=\frac{1}{2}(\bm{y}-\bm{H}\bm{\theta})^\top(\bm{y}-\bm{H}\bm{\theta})+\frac{1}{2}\lambda_1\bm{\theta}^\top\bm{G}\bm{\theta}+\frac{\delta}{2}\sum_{i<j}\Vert\bm\theta_i-\bm\theta_j-\bm\eta_{ij}-\frac{1}{\delta}\bm\zeta_{ij}\Vert_2^2.
\end{equation*}
Introduce $\bm \Delta=\{(\bm e_i-\bm e_j),i<j\}^\top$ with $\bm e_i$ being a $n$-dimensional column vector whose $i$th element equals 1 and the rest are equal to 0, and $\bm A=\bm \Delta \otimes \bm I_p$ where $\bm I_p$ is a $p\times p$ identity matrix and $\otimes$ is the Kronecker product. Then the expression of $\mathcal{L}_n(\bm{\theta};\bm\lambda)$ can be rewritten in a more compact form: 
\begin{equation*}
	\mathcal{L}_n(\bm{\theta};\bm\lambda)=\frac{1}{2}(\bm{y}-\bm{H}\bm{\theta})^\top(\bm{y}-\bm{H}\bm{\theta})+\frac{1}{2}\lambda_1\bm{\theta}^\top\bm{G}\bm{\theta}+\frac{\delta}{2}\Vert \bm A\bm \theta-\bm \eta-\frac{1}{\delta}\bm\zeta\Vert_2^2. 
\end{equation*}
Setting $\frac{\partial \mathcal{L}_n}{\partial \bm\theta}=0$ leads to the update equation for $\bm \theta$. For a given estimate $(\bm\eta^{(s)}, \bm\zeta^{(s)})$ at the $s$th iteration, the $(s+1)$th estimate of $\bm\theta$ is 
\begin{equation*}
	\bm{\theta}^{(s+1)}=\left(\bm{H}^\top\bm{H}+\lambda_1\bm{G}+\delta\bm{A}^\top\bm{A}\right)^{-1}\left[\bm{H}^\top\bm{y}+\delta\bm{A}^\top(\bm{\eta}^{(s)}+\frac{1}{\delta}\bm{\zeta}^{(s)})\right]. 
\end{equation*}
For $\bm\eta_{ij}$, the relevant component in $L_n$ is 
\[\frac{\delta}{2}\Vert \bm\eta_{ij}-\bm\theta_i+\bm\theta_j+\frac{1}{\delta}\bm\zeta_{ij}\Vert_2^2+\rho_{\tau}(\Vert \bm\eta_{ij}\Vert_2,\omega_{ij}\lambda_2).\]
Hence, the close-form solution for the MCP penalty with $\tau>\frac{1}{\delta}$ is 
\begin{equation*}
	\bm\eta_{ij}^{(s+1)}=
	\begin{cases}
		\bm u_{ij}^{(s+1)}\quad& \text{if}\ \Vert \bm u_{ij}^{(s+1)}\Vert_2\geq\tau\omega_{ij}\lambda_2,\\
		\frac{\tau\delta}{\tau\delta-1}\left(1-\frac{\lambda_2}{\delta\Vert \bm u_{ij}^{(s+1)}\Vert_2}\right)_+\bm u_{ij}^{(s+1)} \quad&\text{if}\ \Vert \bm u_{ij}^{(s+1)}\Vert_2<\tau\omega_{ij}\lambda_2,
	\end{cases}
\end{equation*}
where $\bm u_{ij}^{(s+1)}=\bm \theta_i^{(s+1)}-\bm \theta_j^{(s+1)}-\frac{1}{\delta}\bm\zeta_{ij}^{(s)}$, and $(x)_+=x$ if $x>0$ and $(x)_+=0$ otherwise. Finally, the estimate of $\bm \eta$ is updated as 
\[\bm\zeta^{(s+1)}=\bm\zeta^{(s)}+\delta(\bm A\bm\theta^{(s+1)}-\bm\eta^{(s+1)}).\] 

\subsection*{B. Proofs of Theorem 1 and 2}\label{appA}
\subsubsection*{Proof of Theorem 1}\label{proofthm1}
By Theorem 3.1 of \cite{cardot03} and Conditions (C1)-(C3), we have 
$$\text{E}(\Vert \hat\beta_i^{\text{or}}-\beta_i^\text{0}\Vert_2^2)=O_p(\phi_{|\mathcal{G}_{k}|}).$$ 
As $\text{E}^2(\Vert\hat\beta_i^{\hat{\text{or}}}-\beta_i^\text{0}\Vert_2)\leq \text{E}(\Vert \hat\beta_i^{\text{or}}-\beta_i^\text{0}\Vert_2^2)$, then 
$$\text{Var}(\Vert \hat\beta_i^{\text{or}}-\beta_i^\text{0}\Vert_2)=\text{E}(\Vert \hat\beta_i^{\text{or}}-\beta_i^\text{o}\Vert_2^2)-\text{E}^2(\Vert\hat\beta_i^{\text{or}}-\beta_i^\text{0}\Vert_2)=O_p(\phi_{|\mathcal{G}_{k}|}).$$ 
Let $V_k=\text{Var}(\Vert \hat\beta_i^{\text{or}}-\beta_i^0\Vert_2)$ and $E_k=\text{E}(\Vert \hat\beta_i^{\text{or}}-\beta_i^0\Vert_2)$. For any $\epsilon >0$, there exists a constant $C_{\epsilon}^1$ such that $\Pr(V_k>C_{\epsilon}^1\phi_{|\mathcal{G}_{k}|})\leq\epsilon$. By Chebyshev Inequality, $\Pr(|\Vert \hat\beta_i^{\text{or}}-\beta_i^\text{0}\Vert_2-E_k|\geq \sigma)\leq \frac{V_k}{\sigma^2}$ for any $\sigma>0$. Then  
\begin{equation*}
	\begin{split}
		\Pr(\Vert \hat\beta_i^{\text{or}}-\beta_i^0\Vert_2\geq\sigma)&\leq \Pr(\Vert \hat\beta_i^{\text{or}}-\beta_i^0\Vert_2\geq\sigma, V_k\leq C_{\epsilon}^1\phi_{|\mathcal{G}_{k}|})+\Pr(V_k>C_{\epsilon}^1\phi_{|\mathcal{G}_{k}|})\\
		&\leq \Pr(\Vert \hat\beta_i^{\text{or}}-\beta_i^0\Vert_2-E_k+E_k\geq\sigma, V_k\leq C_{\epsilon}^1\phi_{|\mathcal{G}_{k}|})\\
		&+\Pr(V_k>C_{\epsilon}^1\phi_{|\mathcal{G}_{k}|})\\
		&\leq \Pr(\Vert \hat\beta_i^{\text{or}}-\beta_i^0\Vert_2-E_k\geq\frac{\sigma}{2},V_k\leq C_{\epsilon}^1\phi_{|\mathcal{G}_{k}|})+\Pr(E_k\geq\frac{\sigma}{2})\\
		&+\Pr(V_k>C_{\epsilon}^1\phi_{|\mathcal{G}_{k}|})\\
		&\leq\frac{4C_{\epsilon}^1\phi_{|\mathcal{G}_{k}|}}{\sigma^2}+\Pr(E_k\geq\frac{\sigma}{2})+\epsilon.
	\end{split}
\end{equation*}
Under Condition (C4) and all assumptions in Theorem 1, we have $\phi_{|\mathcal{G}_{k}|}=O(|\mathcal{G}_{\min}|^{-a})$, where $a\in (\frac{1+\sigma_0}{4},\frac{1-\sigma_0}{2}]$. Let $C_\epsilon^3=2(\epsilon^{-1}C_{\epsilon}^1)^{1/2}$ and $\sigma=C_\epsilon^3\phi_{|\mathcal{G}_{k}|}^{1/2}$. Then for sufficiently large $|\mathcal{G}_{k}|$, we have 
\[\Pr(\Vert \hat\beta_i^{\text{or}}-\beta_i^0\Vert_2\geq C_\epsilon^3\phi_{|\mathcal{G}_{k}|}^{1/2})\leq 3\epsilon.\]
As a result, $\Vert \hat\beta_i^{\text{or}}-\beta_i^0\Vert_2=O_p(\phi_{|\mathcal{G}_{k}|}^{1/2})$. 

\subsubsection*{Proof of Theorem 2}\label{proofthm2}
Let $\rho(t)=\lambda_2^{-1}\rho_{\tau}(t,\lambda_2)$. Similar to \cite{Ma2020}, we define 
\[L_n(\bm{\theta})=\frac{1}{2}\Vert\bm{y}-\bm{H}\bm{\theta}\Vert_2^2, D_n(\bm{\theta};\lambda_1)=\frac{1}{2}\lambda_1\bm{\theta}^\top\bm{G}\bm{\theta}, P_n(\bm{\theta};\lambda_2)=\lambda_2\sum_{i<j}\rho(\Vert\bm\theta_i-\bm\theta_j\Vert_2).\]
For simplicity, here we assume weight $w_{ij}=1$ for all $1\leq i<j\leq n$, and the following proof can be easily generalized to general $w_{ij}$'s.
Then $Q_n(\bm{\theta};\bm{\lambda})=L_n(\bm{\theta})+D_n(\bm{\theta};\lambda_1)+P_n(\bm{\theta};\lambda_2)$. 

Define $T:\mathcal{M}_{\mathcal{G}}\rightarrow \mathbb{R}^{Kp}$ as a mapping, under which $T(\bm\theta)$ is the $Kp$-vector consisting of $K$ vectors with dimension $p$ and its $k$th vector component equals the common value of $\bm\theta_i$ for $i\in \mathcal{G}_k$, and $T^*:\mathbb{R}^{np}\rightarrow\mathbb{R}^{Kp}$ as another mapping, under which $T^*(\bm{\theta})=\{|\mathcal{G}_k|^{-1}\sum_{i\in\mathcal{G}_k}\bm{\theta_i}^\top, k\in\{1,\ldots,K\} \}^\top$. Obviously, when $\bm\theta\in\mathcal{M}_{\mathcal{G}}$, $T(\bm\theta)=T^*(\bm\theta)$. 

Let $\bm\theta^0=({{\bm\theta}_1^0}^\top,\ldots,{{\bm\theta}_n^0}^\top)^\top$ be the true B-spline coefficient functions corresponding to the true subgroup set $\mathcal{G}$, and $\bm\alpha^0=({{\bm\alpha}_1^0}^\top,\ldots,{{\bm\alpha}_K^0}^\top)^\top$ be the distinct value of $\bm\theta^0$. 
Consider the neighborhood of $\bm\theta^0$: 
$$\Theta=\{\bm{\theta}\in\mathbb{R}^{np}: \underset{i}{\sup}\Vert\bm\theta_i-\bm\theta_i^0\Vert_2\leq\Lambda_n\triangleq\phi_{|\mathcal{G}_{k}|}^{1/2}m^{1/2}=o(1).\}$$
By Theorem 26 of \cite{Lyche2018}, there exist constants $c_{m,q}>0$, which are only dependent on $m$ and $q$, such that for $1\leq i\leq n$,
\begin{equation}\label{eq:norm}
	c_{m,q}\Vert \bm{\hat\theta}_i^\text{or}-\bm{\theta}_i^{0}\Vert_2\leq \Vert\bm B(\bm{\hat\theta}_i^{\text{or}}-\bm{\theta}_i^{0})\Vert_2=
	\Vert\hat\beta_i^{\text{or}}-\bm B \bm{\theta}_i^{0}\Vert_2\leq \Vert\hat\beta_i^{\text{or}}-\beta_i^0\Vert_2+\Vert \beta_i^0-\bm B \bm{\theta}_i^{0}\Vert_2,
\end{equation}
where $c_{m,q}\sim q^{-1}m^{-1/2}$. Note that the B-spline order $q$ is a constant. By (B.7) (Lemma \ref{lemma1}), $\Vert \beta_i^0-\bm B \bm{\theta}_i^{0}\Vert_2=o(\Vert\hat\beta_i^{\text{or}}-\beta_i^0\Vert_2)$. 
Therefore, by Theorem 1, we have $\Pr(\bm{\hat\theta}^{\text{or}}\in\Theta)\to 1$. For any $\bm\theta\in\mathcal{R}^{np}$, let $T^*(\bm\theta)=\bm\alpha=(\bm\alpha_1^\top,\ldots,\bm\alpha_K^\top)^\top$ and $\bm\theta^*=T^{-1}(T^*(\bm\theta))=T^{-1}(\bm\alpha)$. 

We will show that $\bm{\hat\theta}^{\text{or}}$ is a strictly local minimizer of the objective function $Q_n(\bm{\theta};\bm{\lambda})$ with probability approaching 1 through the following two steps: \\
\noindent(i) $Q_n(\bm{\theta}^*;\bm{\lambda})>Q_n(\bm{\hat\theta}^{\text{or}};\bm{\lambda})$ for any $\bm\theta\in\Theta$ and $\bm{\theta}^*\neq\bm{\theta}^{\text{or}}$. \\
\noindent(ii) For any $\bm\theta\in\Theta$, $Q_n(\bm{\theta};\bm{\lambda})\geq Q_n(\bm{\theta}^*;\bm{\lambda})$ with probability approaching 1.

We first prove the results in (i). Since $\bm{\hat\theta}^{\text{or}}$ is the global minimizer of $L_n(\bm{\theta})+D_n(\bm{\theta};\lambda_1)$ for $\bm\theta\in\mathcal{M}_{\mathcal{G}}$, then $L_n(\bm{\theta}^*)+D_n(\bm{\theta}^*;\lambda_1)>L_n(\bm{\hat\theta}^{\text{or}})+D_n(\bm{\hat\theta}^{\text{or}};\lambda_1)$. Hence we only need to consider $P_n(\bm{\theta}^*;\lambda_2)$. 
Again by Theorem 26 of \cite{Lyche2018}, there exist constants $c_m\sim m^{-1/2}$, such that for $1\leq k\leq K$,
\[\Vert\bm B\bm\alpha_k^0-\bm B\bm\alpha_{k'}^0\Vert_2\leq c_m \Vert\bm\alpha_k^0-\bm\alpha_{k'}^0\Vert_2.\]
Similarly, we have
\[\Vert\xi_k^0-\xi_{k'}^0\Vert_2\leq\Vert\bm B\bm\alpha_k^0-\bm B\bm\alpha_{k'}^0\Vert_2+\Vert\bm B\bm\alpha_k^0-\xi_k^0\Vert_2+\Vert\bm B\bm\alpha_{k'}^0-\xi_{k'}^0\Vert_2,\]
where $\Vert\bm B\bm\alpha_k^0-\xi_k^0\Vert_2=o(\Vert\hat\xi_k^{\text{or}}-\xi_k^0\Vert_2)$ and $\Vert\bm B\bm\alpha_{k'}^0-\xi_{k'}^0\Vert_2=o(\Vert\hat\xi_{k'}^{\text{or}}-\xi_{k'}^0\Vert_2)$. Thus $\Vert\bm\alpha_k^0-\bm\alpha_{k'}^0\Vert_2\geq c\Vert\xi_k^0-\xi_{k'}^0\Vert_2\geq cb$ for sufficiently large $n$.
Then,
\begin{eqnarray}\label{sourceneq}
	\Vert\bm\alpha_k-\bm\alpha_{k'}\Vert_2&\geq&\Vert\bm\alpha_k^0-\bm\alpha_{k'}^0\Vert_2-\Vert\bm\alpha_k-\bm\alpha_k^0\Vert_2-\Vert\bm\alpha_{k'}^0-\bm\alpha_{k'}\Vert_2\nonumber\\
	&\geq& \Vert\bm\alpha_k^0-\bm\alpha_{k'}^0\Vert_2-2\underset{j}{\sup}\Vert\bm\alpha_j-\bm\alpha_{j}^0\Vert_2\nonumber\\
	&\geq& c b-2\underset{j}{\sup}\Vert |\mathcal{G}_j|^{-1}\sum_{i\in\mathcal{G}_j}(\bm\theta_i-\bm\theta_i^0)\Vert_2\nonumber\\
	& \geq & c b-2\underset{j}{\sup}|\mathcal{G}_j|^{-1}\sum_{i\in\mathcal{G}_j}\Vert \bm\theta_i-\bm\theta_i^0\Vert_2\nonumber \\
	&\geq &c b-2\underset{i}{\sup}\Vert \bm\theta_i-\bm\theta_i^0\Vert_2\nonumber\\
	& \geq & c b-2\Lambda_n\nonumber\\
	& >& a\lambda_2,
\end{eqnarray}
where the last inequality follows from the assumption that $c b>a\lambda_2\gg\Lambda_n$. Consequently, $\rho(\Vert\bm\alpha_k-\bm\alpha_{k'}\Vert_2)$ is a constant by (C5). Since  
\[P_n(\bm\theta^*;\lambda_2)=\lambda_2\sum_{i<j}\rho(\Vert\bm\theta_i^*-\bm\theta_j^*\Vert_2)=\lambda_2\sum_{k\neq k'}\frac{|\mathcal{G}_k||\mathcal{G}_{k'}|}{2}\rho(\Vert\bm\alpha_k-\bm\alpha_{k'}\Vert_2),\]
then $P_n(\bm\theta^*;\lambda_2)=C_n$ where $C_n$ is a constant. Therefore, $L_n(\bm{\theta}^*)+D_n(\bm{\theta}^*;\lambda_1)+C_n>L_n(\bm{\hat\theta}^{\text{or}})+D_n(\bm{\hat\theta}^{\text{or}};\lambda_1)+C_n$, that is, $Q_n(\bm{\theta}^*;\bm{\lambda})>Q_n(\bm{\hat\theta}^{\text{or}};\bm{\lambda})$, for all $\bm{\theta^*}\neq \bm{\hat\theta}^{\text{or}}$. The result in (i) is proved. 

Now we prove the results in (ii). For $\bm\theta\in\Theta$, by Taylor's expansion, we have 
\[Q_n(\bm{\theta};\bm{\lambda})-Q_n(\bm{\theta}^*,\bm{\lambda})=\Gamma_1+\Gamma_2+\Gamma_3,\]
where $\Gamma_1=-(\bm y-\bm{H\theta}^m)^\top\bm H(\bm\theta-\bm\theta^*)$, $\Gamma_2=\lambda_1\bm{\theta}^{m^\top}\bm{G}(\bm\theta-\bm\theta^*)$, $\Gamma_3=\frac{\partial P_n(\bm{\theta};\lambda_2)}{\partial\bm{\theta}}\big|_{\bm{\theta}=\bm{\theta}^m}(\bm\theta-\bm{\theta}^*)$, and $\bm{\theta}^m=c\bm{\theta}+(1-c)\bm{\theta}^*$ for some constant $c\in(0,1)$.

Let $-\bm{H}^\top(\bm{y}-\bm{H\theta}^m)=\bm{w}=(\bm w_1^\top,\bm w_2^\top,...,\bm w_n^\top)^\top$. Then 
\begin{equation}\label{eq:gamma1}
	\begin{split}
		\Gamma_1=\bm{w}^\top(\bm{\theta}-\bm{\theta}^*)=\sum_{i=1}^{n}\bm w_i^\top(\bm\theta_i-\bm\theta_i^*)&=\sum_{k=1}^{K}\sum_{\{i,j\in\mathcal{G}_k, i<j\}}\frac{(\bm w_i-\bm w_j)^\top(\bm\theta_i-\bm\theta_j)}{|\mathcal{G}_k|}\\
		&\geq -\sum_{k=1}^{K}\sum_{\{i,j\in\mathcal{G}_k, i<j\}}\frac{2\underset{i}{\sup}\Vert \bm w_i\Vert_2\cdot\Vert\bm\theta_i-\bm\theta_j\Vert_2}{|\mathcal{G}_{\min}|},
	\end{split}
\end{equation}
where the third equality is obtained from $\bm{\theta}^*=T^{-1}(T^*(\bm{\theta}))$ and $\sum_{i=1}^{n}\bm{\omega}_i^\top \bm{\theta}_i^* = \sum_{k=1}^K\frac{1}{|\mathcal{G}_k|}\sum_{i,j\in\mathcal{G}_k}\bm{\omega}_i^\top\bm{\theta}_j$.

Let $\lambda_1\bm{G}\bm{\theta}^m=\bm{v}=(\bm v_1^\top,\cdots,\bm v_n^\top)^\top$. Then 
\begin{equation}\label{eq:gamma2}
	\Gamma_2=\bm{v}^T(\bm{\theta}-\bm{\theta}^*)\geq -\sum_{k=1}^{K}\sum_{\{i,j\in\mathcal{G}_k, i<j\}}\frac{2\underset{i}{\sup}\Vert \bm v_i\Vert_2\cdot\Vert\bm \theta_i-\bm \theta_j\Vert_2}{|\mathcal{G}_{\min}|}.
\end{equation}

\begin{eqnarray*}
	\Gamma_3&=&\frac{\partial P_n(\bm{\theta};\lambda_2)}{\partial\bm{\theta}}\big|_{\bm{\theta}=\bm{\theta}^m}(\bm{\theta-\theta^*})\\\nonumber
	&=&\lambda_2\sum_{i<j}\rho'(\Vert\bm\theta_i^m-\bm\theta_j^m\Vert_2)\frac{(\bm\theta_i^m-\bm\theta_j^m)^\top}{\Vert\bm\theta_i^m-\bm\theta_j^m\Vert_2}\left[(\bm\theta_i-\bm\theta_i^*)-(\bm\theta_j-\bm\theta_j^*)\right].
\end{eqnarray*}
Recall that $\underset{i}{\sup}\Vert\bm\theta_i-\bm\theta_i^0\Vert_2\leq\Lambda_n$ for $\bm{\theta}\in\Theta$. By the same reasoning as (\ref{sourceneq}), we have $\underset{k}{\sup}\Vert\bm\alpha_k-\bm\alpha_{k}^0\Vert_2\leq\Lambda_n$. Then $\underset{i}{\sup}\Vert\bm\theta_i^*-\bm\theta_i^0\Vert_2\leq\Lambda_n$. Since $\bm{\theta}^m$ is between $\bm\theta$ and $\bm\theta^*$, we have 
\begin{equation}\label{eq:Lambda_n}
	\underset{i}{\sup}\Vert\bm\theta_i^m-\bm\theta_i^0\Vert_2\leq\Lambda_n. 
\end{equation}
Hence for $i\in\mathcal{G}_k$, $j\in\mathcal{G}_{k'}$, $k\neq k'$, 
\begin{eqnarray*}
	\Vert\bm\theta_i^m-\bm\theta_j^m\Vert_2&\geq&\Vert\bm\theta_i^0-\bm\theta_j^0\Vert_2-2\underset{l}{\sup}\Vert\bm\theta_l^m-\bm\theta_l^0\Vert_2\\\nonumber
	&=&\Vert\bm\alpha_k^0-\alpha_{k'}^0\Vert_2-2\underset{l}{\sup}\Vert\bm\theta_l^m-\bm\theta_l^0\Vert_2\\\nonumber
	&\geq& b-2\Lambda_n>a\lambda_2,
\end{eqnarray*}
and thus $\rho'(\Vert\bm\theta_i^m-\bm\theta_j^m\Vert_2)=0$. Consequently, 
\begin{equation*} 
	\Gamma_3=\lambda_2\sum_{k=1}^K\sum_{i,j\in\mathcal{G}_k,i<j}\rho'(\Vert\bm\theta_i^m-\bm\theta_j^m\Vert_2)\frac{(\bm\theta_i^m-\bm\theta_j^m)^\top}{\Vert\bm\theta_i^m-\bm\theta_j^m\Vert_2}\left[(\bm\theta_i-\bm\theta_i^*)-(\bm\theta_j-\bm\theta_j^*)\right].
\end{equation*}
When $i,j\in\mathcal{G}_k$, $\bm\theta_i^*=\bm\theta_j^*$ and $\bm\theta_i^m-\bm\theta_j^m=c(\bm\theta_i-\bm\theta_j)$. Thus 
\begin{equation*} 
	\Gamma_3=\lambda_2\sum_{k=1}^K\sum_{i,j\in\mathcal{G}_k,i<j}\rho'(\Vert\bm\theta_i^m-\bm\theta_j^m\Vert_2)\Vert\bm\theta_i-\bm\theta_j\Vert_2.
\end{equation*}
Further, 
\begin{equation*}
	\begin{split}
		\underset{k}{\sup}\underset{i,j\in\mathcal{G}_k}{\sup}\Vert\bm\theta_i^m-\bm\theta_j^m\Vert_2&=\underset{k}{\sup}\underset{i,j\in\mathcal{G}_k}{\sup}\Vert\bm\theta_i^m-\bm\theta_i^*+\bm\theta_i^*-\bm\theta_j^*+\bm\theta_j^*-\bm\theta_j^m\Vert_2\\
		&\leq 2\underset{i}{\sup}\Vert\bm\theta_i^m-\bm\theta_i^*\Vert_2\leq 4\Lambda_n.
	\end{split}
\end{equation*}
By the concavity of $\rho(\cdot)$, 
\begin{equation}\label{eq:gamma3}
	\Gamma_3\geq\sum_{k=1}^{K}\sum_{\{i,j\in\mathcal{G}_k,i<j\}}\lambda_2\rho'(4\Lambda_n)\Vert\bm\theta_i-\bm\theta_j\Vert_2.
\end{equation}

Therefore, with (\ref{eq:gamma1}), (\ref{eq:gamma2}), and (\ref{eq:gamma3}), we have 
\begin{eqnarray*}
	Q_n(\bm{\theta};\bm{\lambda})-Q_n(\bm{\theta}^*,\bm{\lambda})&=&\Gamma_1+\Gamma_2+\Gamma_3\nonumber\\
	&\geq&\sum_{k=1}^{K}\sum_{\{i,j\in\mathcal{G}_k,i<j\}}\Vert\bm\theta_i-\bm\theta_j\Vert_2\left[\lambda_2\rho'(4\Lambda_n)-2\frac{\underset{i}{\sup}\Vert \bm w_i\Vert_2+\underset{i}{\sup}\Vert\bm v_i\Vert_2}{|\mathcal{G}_{\min}|}\right]
\end{eqnarray*}
By Lemma \ref{lemma1}, we have $\lambda_2\gg\frac{\underset{i}{\sup}\Vert \bm w_i\Vert_2+\underset{i}{\sup}\Vert \bm v_i\Vert_2}{|\mathcal{G}_{\min}|}$ with 
probability approaching 1. Since $\rho'(4\Lambda_n)\to 1$, we have  
\begin{equation*}
	Q_n(\bm{\theta};\bm{\lambda})-Q_n(\bm{\theta}^*,\bm{\lambda})\geq 0
\end{equation*}
with probability approaching 1, so that the result in (ii) is proved. Therefore, there exists a local minimizer $\bm{\hat\theta}$ of the objective function $\bm{Q_n}(\bm{\theta};\bm{\lambda})$, such that $\Pr(\bm{\hat\theta}=\bm{\hat\theta}^{\text{or}})\to1$. By (\ref{eq:norm}), we can conclude that there exists a local minimizer $\bm{\hat\theta}$ of the objective function ${Q_n}(\bm{\theta};\bm{\lambda})$ such that $\Pr(\bm{\hat\beta}=\bm{\hat\beta}^{\text{or}})\to 1$. 

\begin{lemma}\label{lemma1}
	Under Conditions (C1)-(C3), $\bm{w}$ and $\bm{v}$ are defined in the proof of Theorem 2, we have $\lambda_2\gg\frac{\underset{i}{\sup}\Vert w_i\Vert_2+\underset{i}{\sup}\Vert v_i\Vert_2}{|\mathcal{G}_{\min}|}$ with probability approaching 1. 
\end{lemma}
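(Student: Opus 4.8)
The plan is to split $\sup_{1\le i\le n}\|\bm w_i\|_2+\sup_{1\le i\le n}\|\bm v_i\|_2$ according to its two pieces and show that each is $o_p\big(\lambda_2\,|\mathcal G_{\min}|\big)$. Since Theorem~\ref{thm2} assumes $\lambda_2\gg n^{-(1-\sigma_0)/2}p^4/|\mathcal G_{\min}|$ and Condition (C4) gives $|\mathcal G_{\min}|=\Theta(n)$, it is enough to bound each supremum in probability and compare its order with $n^{-(1-\sigma_0)/2}p^4$. Three ingredients drive everything: the B-spline norm equivalence $c_{m,q}\|\bm a\|_2\le\|\bm B\bm a\|_2\le c_m\|\bm a\|_2$ with $c_{m,q},c_m\asymp m^{-1/2}$ (Theorem 26 of \cite{Lyche2018}); elementary B-spline integral estimates, in particular $\|\bm H_i\|_2\le C_1 c_m=O(m^{-1/2})$ by (C3) and Cauchy--Schwarz, and $\|\bm G_0\|_{\mathrm{op}}=O(m^{3})$ from $|(\bm G_0)_{sl}|=|\int_{\mathcal X} B_s''B_l''|=O(m^{3})$ together with the $(2q-1)$-bandedness of $\bm G_0$; and a second-moment maximal inequality for the errors.

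\emph{The smoothing term.} Since $\bm G=\bm I_n\otimes\bm G_0$, the $i$th block of $\bm v=\lambda_1\bm G\bm\theta^m$ is $\bm v_i=\lambda_1\bm G_0\bm\theta_i^m$, so $\|\bm v_i\|_2\le\lambda_1\|\bm G_0\|_{\mathrm{op}}\|\bm\theta_i^m\|_2$. On $\Theta$ one has $\sup_i\|\bm\theta_i^m-\bm\theta_i^0\|_2\le\Lambda_n=o(1)$, because $\bm\theta^m$ lies between $\bm\theta$ and $\bm\theta^*=T^{-1}(T^*(\bm\theta))$ and $\sup_i\|\bm\theta_i^*-\bm\theta_i^0\|_2\le\Lambda_n$ by the averaging inequality already used in the proof of Theorem~\ref{thm2}; and $\|\bm\theta_i^0\|_2\le c_{m,q}^{-1}\|\bm B\bm\theta_i^0\|_2=O(m^{1/2})$, since $\|\bm B\bm\theta_i^0\|_2\le\|\beta_i^0\|_2+\|\beta_i^0-\bm B\bm\theta_i^0\|_2=O(1)$ under (C2). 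Hence $\sup_i\|\bm\theta_i^m\|_2=O(m^{1/2})$ and
\[\sup_{1\le i\le n}\|\bm v_i\|_2=O\!\big(\lambda_1 m^{7/2}\big)=O\!\big(\lambda_1 p^{4}\big)=O\!\big(n^{-(1-\sigma_0)/2}p^{4}\big),\]
using $\lambda_1\asymp |\mathcal G_{\min}|^{-(1-\sigma_0)/2}\asymp n^{-(1-\sigma_0)/2}$; dividing by $|\mathcal G_{\min}|$ and invoking the assumed lower bound on $\lambda_2$ gives $\sup_i\|\bm v_i\|_2/|\mathcal G_{\min}|=o(\lambda_2)$. (This is also where the factor $p^4$ in the hypothesis of Theorem~\ref{thm2} originates.)

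\emph{The least-squares term.} Writing $y_i=\bm H_i\bm\theta_i^0+r_i+\epsilon_i$ with $r_i=\int_{\mathcal X}X_i\big(\beta_i^0-\bm B\bm\theta_i^0\big)$, the $i$th block of $\bm w=-\bm H^\top(\bm y-\bm H\bm\theta^m)$ decomposes as $\bm w_i=-\bm H_i^\top\bm H_i(\bm\theta_i^0-\bm\theta_i^m)-\bm H_i^\top r_i-\bm H_i^\top\epsilon_i$. The two deterministic summands satisfy $\|\bm H_i^\top\bm H_i(\bm\theta_i^0-\bm\theta_i^m)\|_2\le\|\bm H_i\|_2^2\Lambda_n=O(m^{-1}\Lambda_n)$ and $\|\bm H_i^\top r_i\|_2\le\|\bm H_i\|_2\cdot C_1\|\beta_i^0-\bm B\bm\theta_i^0\|_2$, both $o(1)$ uniformly in $i$ by the approximation bound (cf.\ (B.7)) and $\Lambda_n=o(1)$. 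For the stochastic summand, conditioning on $X_1,\dots,X_n$ and using that the $\epsilon_i$ are uncorrelated with variance $\sigma^2$ (C1),
\[\mathrm E\Big(\sup_{1\le i\le n}\|\bm H_i^\top\epsilon_i\|_2^2\Big)\le\sum_{i=1}^{n}\mathrm E\,\|\bm H_i^\top\epsilon_i\|_2^2=\sigma^2\sum_{i=1}^{n}\|\bm H_i\|_2^2=O(n/m),\]
so $\sup_i\|\bm H_i^\top\epsilon_i\|_2=O_p(\sqrt{n/m})$ by Markov's inequality and therefore $\sup_i\|\bm w_i\|_2=O_p(\sqrt{n/m})$. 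Combining with the bound on $\bm v$ and dividing by $|\mathcal G_{\min}|=\Theta(n)$ then gives $\big(\sup_i\|\bm w_i\|_2+\sup_i\|\bm v_i\|_2\big)/|\mathcal G_{\min}|=O_p\big(1/\sqrt{nm}\big)+o(\lambda_2)$, and it remains to verify that $1/\sqrt{nm}\ll\lambda_2$ under the rate conditions of Theorem~\ref{thm1} ($m\to\infty$, $m=o(|\mathcal G_{\min}|^{(1+\sigma_0)/4})$, $\sigma_0\le 3/5$, $\lambda_1\asymp n^{-(1-\sigma_0)/2}$).

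I expect the last comparison to be the main obstacle: because (C1) only grants uncorrelated errors with a finite variance, the noise can be controlled no better than through the second-moment maximal inequality above, so $\sup_i\|\bm w_i\|_2$ must be weighed directly against the prescribed $\lambda_2$-rate rather than against $\sup_i\|\bm v_i\|_2$ alone. Making this comparison work requires keeping precise track of all the powers of $m$ (equivalently $p$) that enter through $\|\bm H_i\|_2$, $\|\bm G_0\|_{\mathrm{op}}$, $\Lambda_n$ and the spline approximation error, and exploiting the coupling $\lambda_1\asymp n^{-(1-\sigma_0)/2}$ between the two tuning parameters; if the crude $O_p(\sqrt{n/m})$ rate for the noise is not itself below the lower bound on $\lambda_2$, one would need either a slightly stronger tail assumption on $\epsilon_i$ (e.g.\ i.i.d.\ with a moment of order $>2$, giving $o_p(n^{1/2})$, or sub-Gaussian, giving $O_p(\sqrt{\log n})$) or a lower bound on the growth of $m$.
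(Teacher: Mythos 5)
Your decomposition is the paper's decomposition almost verbatim: the paper writes $\bm w_i=\bm H_i^\top\bm H_i(\bm\theta_i^m-\bm\theta_i^0)-\bm H_i^\top E_i-\bm H_i^\top\epsilon_i$ with $E_i=\int_{\mathcal X}X_i(t)e_i(t)\mathrm{d}t$ and $e_i=\beta_i^0-\bm B\bm\theta_i^0$, bounds the first two pieces by $T\Lambda_n\Vert X_i\Vert_2^2$ and $\sqrt T\Vert X_i\Vert_2^2\Vert e_i\Vert_2$ using $\sum_l\Vert B_l\Vert_2^2\le T$ together with the spline approximation bound (\ref{errorfunc}), and splits $\bm v_i=\lambda_1\bm G_0(\bm\theta_i^m-\bm\theta_i^0)+\lambda_1\bm G_0\bm\theta_i^0$ exactly as you do, arriving at $\sup_i\Vert\bm v_i\Vert_2=o(\lambda_1p^4)$ via $\sup_l\Vert B_l''\Vert_2^2\sim p^3$. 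Your operator-norm route gives the slightly sharper $m^{7/2}$ in place of the paper's $p^4$, but both are absorbed into the $p^4$ appearing in the hypothesis on $\lambda_2$, so this difference is immaterial.

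The one place you part company with the paper is the stochastic term, and the obstacle you flag there is genuine --- including for the paper's own proof. The paper bounds $\Vert\bm H_i^\top\epsilon_i\Vert_2\le\sqrt T\,|\epsilon_i|\,\Vert X_i\Vert_2$ and invokes the pointwise Chebyshev bound $\Pr(|\epsilon_i|>C\sqrt p)\le\sigma^2/(Cp)$ to declare $|\epsilon_i|=o_p(\sqrt p)$ and hence $\sup_i\Vert\bm w_i\Vert_2=o_p(1)$; the supremum over $i$ is never paid for. With only uncorrelated, finite-variance errors, a union bound costs a factor of $n$ and yields $n\sigma^2/(C^2p)\not\to0$, so the uniform claim does not follow from what is written. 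Your maximal inequality $\mathrm E\bigl(\sup_i\Vert\bm H_i^\top\epsilon_i\Vert_2^2\bigr)\le\sum_i\mathrm E\Vert\bm H_i^\top\epsilon_i\Vert_2^2$ is the honest version of this step, and the resulting rate $O_p(\sqrt{n/m})$ is correct; as you observe, $1/\sqrt{nm}\ll\lambda_2$ then requires $m\gg n^{(2-\sigma_0)/9}$, a lower bound on $m$ that Theorems \ref{thm1}--\ref{thm2} do not impose (only $m\to\infty$ and $m=o(|\mathcal G_{\min}|^{(1+\sigma_0)/4})$ are assumed; since $(2-\sigma_0)/9<(1+\sigma_0)/4$ for all $\sigma_0>0$, the window is nonempty and the lemma is salvageable by adding such a condition). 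The alternative fixes you list --- sub-Gaussian or higher-moment tails on $\epsilon_i$ --- are precisely what would be needed to recover the paper's claimed control of $\sup_i\Vert\bm w_i\Vert_2$ relative to $\lambda_2|\mathcal G_{\min}|$ without restricting $m$ from below. In short, your argument is the paper's argument carried out carefully, and the step you could not close is the step the paper closes only by treating $\sup_i|\epsilon_i|$ as if it were a single $|\epsilon_i|$.
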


\begin{proof}
	\indent Let $s_i=\bm B\bm\theta_i^0$. Define $e_i=\beta_i^0-s_i$ and $e_{i2}=(\beta_i^0){''}-s_i{''}$. By Theorem 48 of \cite{Lyche2018}, we have 
	\begin{eqnarray}\label{errorfunc}
		&&\Vert e_i\Vert_2\leq C_{q}(\frac{T}{m})^{q-2}\Vert {\beta_i^0}^{(q-2)}\Vert_2,\nonumber\\
		&&\Vert e_{i2}\Vert_2\leq C_{q}(\frac{T}{m})^{q-4}\Vert {\beta_i^0}^{(q-2)}\Vert_2,
	\end{eqnarray}
	where $C_q$ is a constant depending only on $q$, and $\beta_i^{(q-2)}$ is the $(q-2)$-th derivative of $\beta_i$.
	Let $E_i=\int_\chi X_i(t)e_i(t)\text{d}t$. Then $y_i=\bm H_i\bm\theta_i^0+E_i+\epsilon_i$, and hence 
	\begin{eqnarray*}
		\bm w_i&=&-\bm H_i^{\top}(y_i-\bm H_i\bm\theta_i^m)\\
		&=&\bm H_i^{\top}\bm H_i(\bm\theta_i^m-\bm\theta_i^0)-\bm H_i^{\top}E_i-\bm H_i^{\top}\epsilon_i
	\end{eqnarray*}
	According to Lemma 8 from \cite{Garoni2014}, 
	\begin{equation}\label{eq:bprop}
		\sum_{l=1}^p \Vert B_l\Vert_2^2=\int_\chi\sum_{l=1}^p (B_l(t))^2\text{d}t\leq\int_\chi\sum_{l=1}^p B_l(t)\text{d}t\leq T.
	\end{equation}
	Since $\underset{i}{\sup}\Vert\bm\theta_i^m-\bm\theta_i^0\Vert_2\leq\Lambda_n$, (\ref{eq:bprop}) and $H_i=(\int_\chi X_i(t)B_1(t)\text{d}t,\ldots, \int_\chi X_i(t)B_p(t)\text{d}t)$, then 
	\begin{eqnarray}\label{eq:w1}
		\Vert\bm H_i^{\top}\bm H_i(\bm\theta_i^m-\bm\theta_i^0)\Vert_2&\leq&\Vert\bm H_i^{\top}\bm H_i\Vert_F\Vert \bm\theta_i^m-\bm\theta_i^0\Vert_2\nonumber\\
		&\leq&\Lambda_n\sum_{l=1}^p (\int_\chi X_i(t)B_{l}(t)\text{d}t)^2\nonumber\\
		&\leq&\Lambda_n\Vert X_i\Vert_2^2\sum_{l=1}^p\Vert B_l\Vert_2^2 \nonumber\\
		&\leq&T\Lambda_n\Vert X_i\Vert_2^2.
	\end{eqnarray}
	Since 
	\[\bm{H}_{i}^\top E_i=\left(\int_\chi X_i(t)B_1(t)\text{d}t\int_\chi X_i(t)e_i(t)\text{d}t,\ldots,\int_{\chi}X_i(t)B_p(t)dt\int_\chi X_i(t)e_i(t)\text{d}t\right)^\top,\]
	thus 
	\begin{eqnarray}\label{eq:w2}
		&&\Vert\bm H_i^\top E_i\Vert_2\nonumber\\
		&&=\sqrt{(\int_\chi X_i(t)e_i(t)\text{d}t)^2\left[(\int_\chi X_i(t)B_1(t)\text{d}t)^2+\cdots+(\int_{\chi}X_i(t)B_p(t)\text{d}t)^2\right]}\nonumber\\
		&&\leq\Vert X_i\Vert_2\sqrt{\sum_{l=1}^p\Vert B_l\Vert_2^2}\int_\chi X_i(t)e_i(t)\text{d}t\nonumber\\
		&&\leq\sqrt{T}\Vert X_i\Vert_2^2\Vert e_i\Vert_2,
	\end{eqnarray}
	Since 
	\[\bm{H}_{i}^\top\epsilon_i= (\epsilon_i\int_\chi X_i(t)B_1(t)\text{d}t,\ldots, \epsilon_i\int_\chi X_i(t)B_p(t)\text{d}t)^\top,\]
	then 
	\begin{eqnarray}\label{eq:w3}
		\Vert\bm{H}_{i}^\top\epsilon_i\Vert_2&=&\sqrt{\epsilon_i^2(\int_\chi X_i(t)B_1(t)\text{d}t)^2+\ldots+\epsilon_i^2(\int_\chi X_i(t)B_p(t)\text{d}t)^2}\nonumber\\
		&\leq&|\epsilon_i|\Vert X_i\Vert_2\sqrt{\sum_{l=1}^p\Vert B_l\Vert_2^2}\nonumber\\
		&\leq&\sqrt{T}|\epsilon_i|\Vert X_i\Vert_2.
	\end{eqnarray}
	Therefore, with (\ref{eq:w1}), (\ref{eq:w2}), and (\ref{eq:w3}), we have, 
	\begin{equation*}
		\Vert \bm w_i\Vert_2\leq\sqrt{T}\Vert X_i\Vert_2(\sqrt{T}\Lambda_n\Vert X_i\Vert_2+\Vert X_i\Vert_2\Vert e_i\Vert_2+|\epsilon_i|).
	\end{equation*}
	Since $\Pr(|\epsilon_i|>C\sqrt{p})\leq \frac{\sigma^2}{Cp}$ for any positive constant $C$, then, by Condition (C1) and assumption in Theorem 1, $|\epsilon_i|=o_p(\sqrt{p})$. Recall $\Lambda_n=o(1)$. Therefore, by Condition (C2)-(C3) and (\ref{errorfunc}), 
	\begin{equation}\label{eq:w}
		\underset{i}\sup\Vert\bm w_i\Vert_2=o_p(1).
	\end{equation}
	Recall $\bm{v}=(\bm v_1^\top,\ldots, \bm v_n^\top)^\top=\lambda_1\bm{G}\bm{\theta}^m$ and $\bm G=\bm I_n\otimes \bm G_0$, then 
	\[\bm v_i=\lambda_1\bm G_0\bm\theta_i^m=\lambda_1\bm G_0(\bm\theta_i^m-\bm\theta_i^0)+\lambda_1\bm G_0\bm\theta_i^0,\]
	and hence 
	\[\Vert\bm v_i\Vert_2\leq\lambda_1\Vert \bm G_0(\bm\theta_i^m-\bm\theta_i^0)\Vert_2+\lambda_1\Vert\bm G_0\bm\theta_i^0\Vert_2.\]
	
	By the definition of $\bm G_0$, we have
	\begin{eqnarray}\label{eq:v1}
		\Vert\bm G_0(\bm\theta_i^m-\bm\theta_i^0)\Vert_2
		&\leq&\Vert\bm G_0\Vert_F\Vert\bm\theta_i^m-\bm\theta_i^0\Vert_2\nonumber\\
		&\leq&\Lambda_n\sqrt{\sum_{l_1=1}^p\sum_{l_2=1}^p(\int_\chi B_{l_1}''(t)B_{l_2}''(t)\text{d}t})^2\nonumber\\
		&\leq&p\Lambda_n\underset{1\leq l\leq p}{\sup}\{\Vert B_l''\Vert^2_2\},
	\end{eqnarray}
	where the second inequality follows from (\ref{eq:Lambda_n}).
	Since 
	\[\bm G_0\bm\theta_i^0= (\int_\chi\sum_{l=1}^p B_1''(t)B_l''(t)\theta_{il}^0\text{d}t,\ldots,\int_\chi\sum_{l=1}^p B_p''(t)B_l''(t)\theta_{il}^0\text{d}t)^\top,\]
	then 
	\begin{eqnarray}\label{eq:v2}
		\Vert\bm G_0\bm\theta_i^0\Vert_2&= & \sqrt{(\int_\chi B_1''(t)s_{pi}''(t)\text{d}t)^2+\cdots+(\int_\chi B_p''(t)s_{pi}''(t)\text{d}t)^2}\nonumber\\
		&\leq& \sqrt{\int_\chi B_1''(t)^2dt\int_\chi s_{i}''(t)^2\text{d}t+\cdots+\int_\chi B_p''(t)^2\text{d}t\int_\chi s_{i}''(t)^2\text{d}t}\nonumber\\
		&\leq& \sqrt{p\underset{l}{\sup}\{\int_\chi B_l''(t)^2\text{d}t\}\int_\chi s_{i}''(t)^2\text{d}t}\nonumber\\
		&=& \sqrt{p\underset{l}{\sup}\{\Vert B_l''\Vert_2^2\}\Vert \beta^0_i{''}(t)-e_{i2}(t)\Vert_2^2}\nonumber\\
		&& 
	\end{eqnarray}
	Therefore, with (\ref{eq:v1}) and (\ref{eq:v2}), we have
	\begin{equation*}
		\Vert \bm v_i\Vert_2\leq\lambda_1\underset{l}{\sup}\{\Vert B_l''(t)\Vert_2\}(\sqrt{p}\Vert\beta^0_i{''}(t)-e_{i2}(t)\Vert_2+p\Lambda_n\underset{l}{\sup}\{\Vert B_l''(t)\Vert_2\}).
	\end{equation*}
	According to the derivative of B-spline function given in \cite{Boor1978}, we have $\underset{l}{\sup}\{\Vert B_l''\Vert_2^2\}\sim p^3$. Moreover, by (\ref{errorfunc}), $\Vert\beta^0_i{''}-e_{i2}\Vert_2\sim \Vert\beta^0_i{''}\Vert_2$. Recall $\Lambda_n= o(1)$. Consequently, under Conditions (C2)-(C3), 
	\begin{equation}\label{eq:v}
		\underset{i}{\sup}\Vert\bm v_i\Vert_2= o(\lambda_1p^{4}).
	\end{equation}
	With (\ref{eq:w}) and (\ref{eq:v}), we have 
	
	$$\frac{\underset{i}{\sup}\Vert\bm w_i\Vert_2+\underset{i}{\sup}\Vert\bm v_i\Vert_2}{|\mathcal{G}_{\min}|}=o(\frac{\lambda_1p^{4}}{|\mathcal{G}_{\min}|})$$
	with probability approaching 1. Recall (C4) and $\lambda_2\gg\frac{n^{-(1-\sigma_0)/2} p^{4}}{|\mathcal{G}_{\min}|}$. Therefore, we have $\lambda_2\gg\frac{\underset{i}{\sup}\Vert\bm w_i\Vert_2+\underset{i}{\sup}\Vert\bm v_i\Vert_2}{|\mathcal{G}_{\min}|}$ with probability approaching 1.
\end{proof}

\subsection*{C. Tables and Figures}
\setcounter{table}{0} 
\renewcommand{\thetable}{C\arabic{table}}
\setcounter{figure}{0} 
\renewcommand{\thefigure}{C\arabic{figure}}
\begin{figure}[H]
	\centering    
	\subfigure[]
	{
		\includegraphics[width=3.8cm]{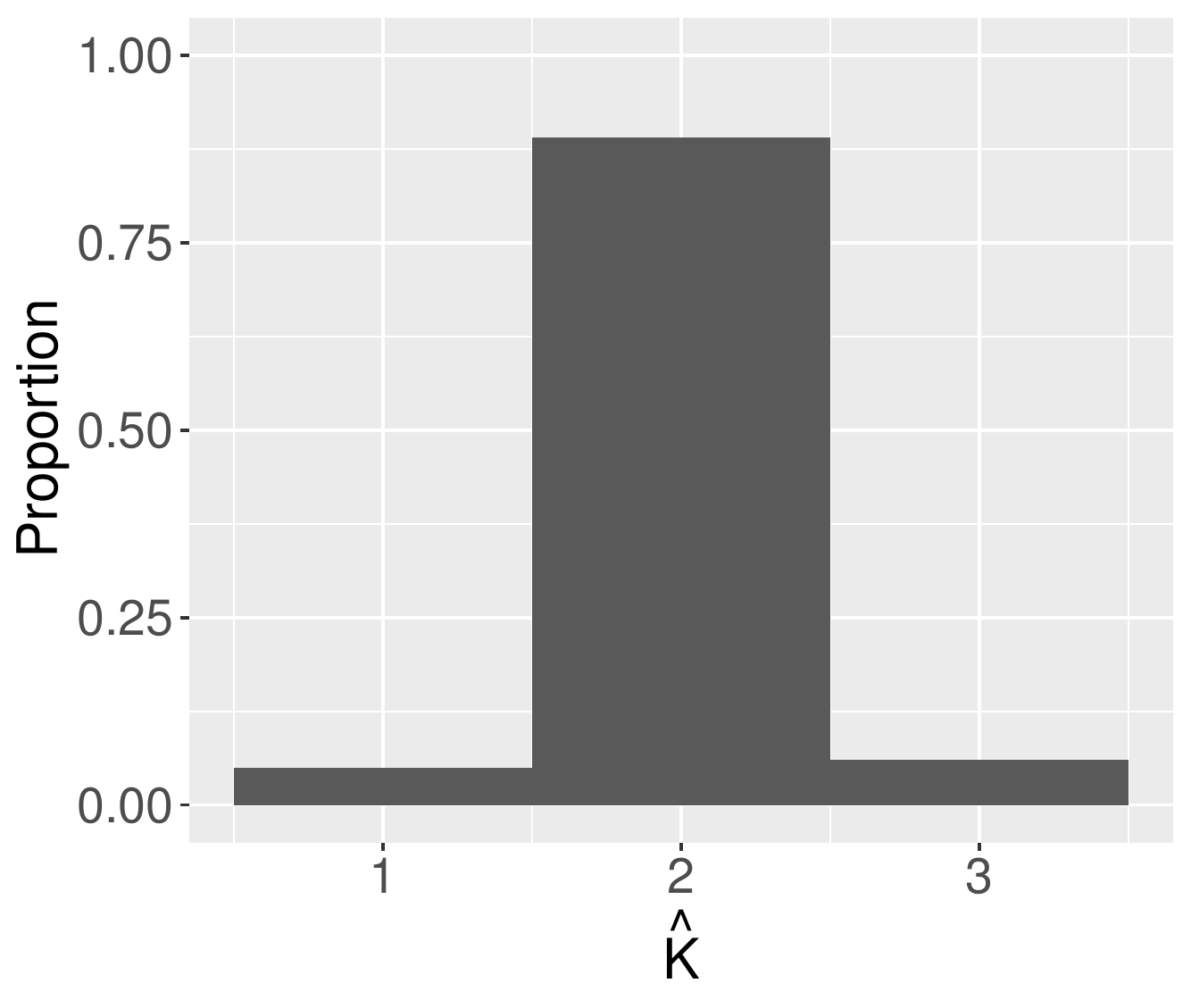}
	}
	\subfigure[]
	{
		\includegraphics[width=3.8cm]{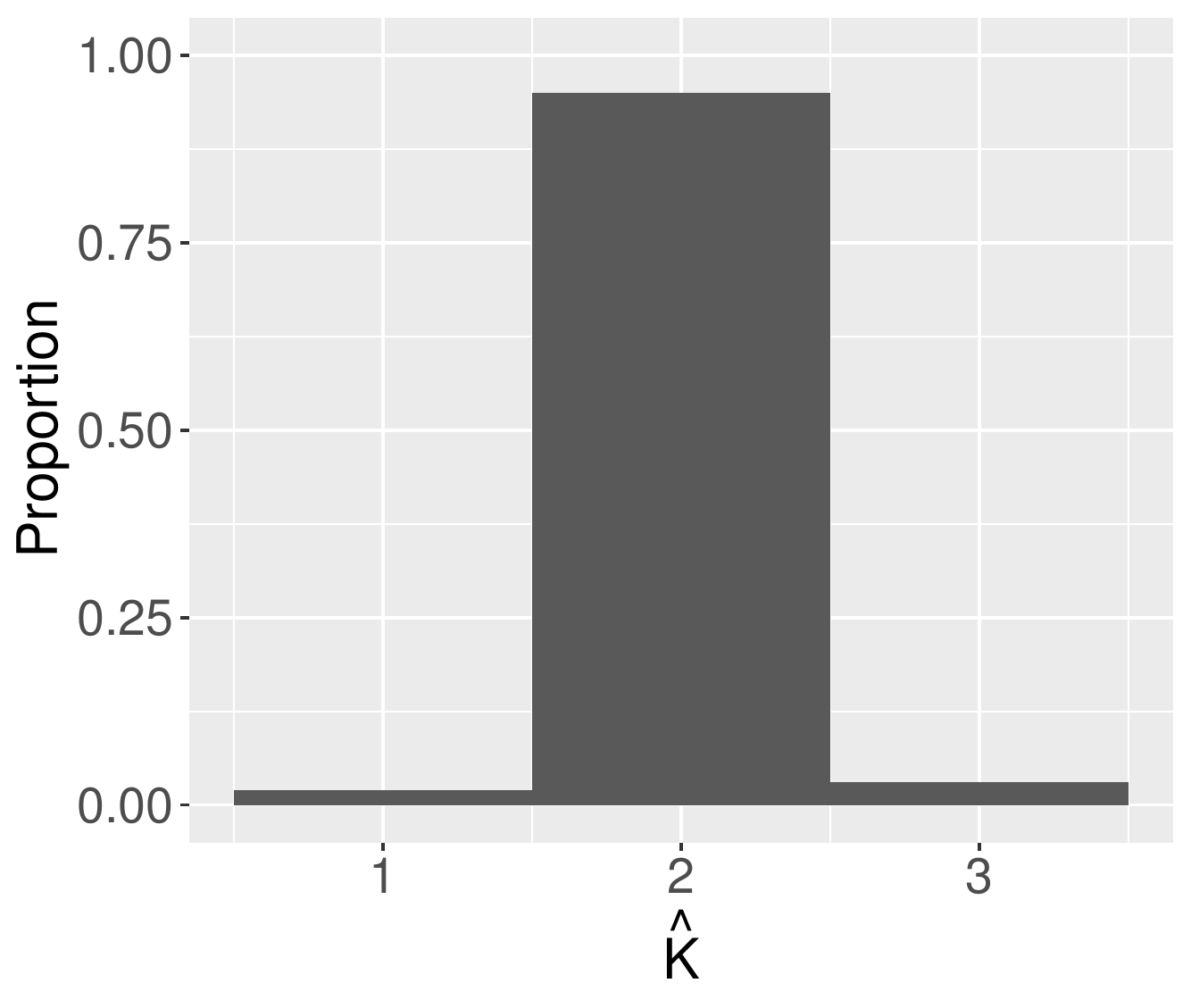} 
	}
	\subfigure[]
	{
		\includegraphics[width=3.8cm]{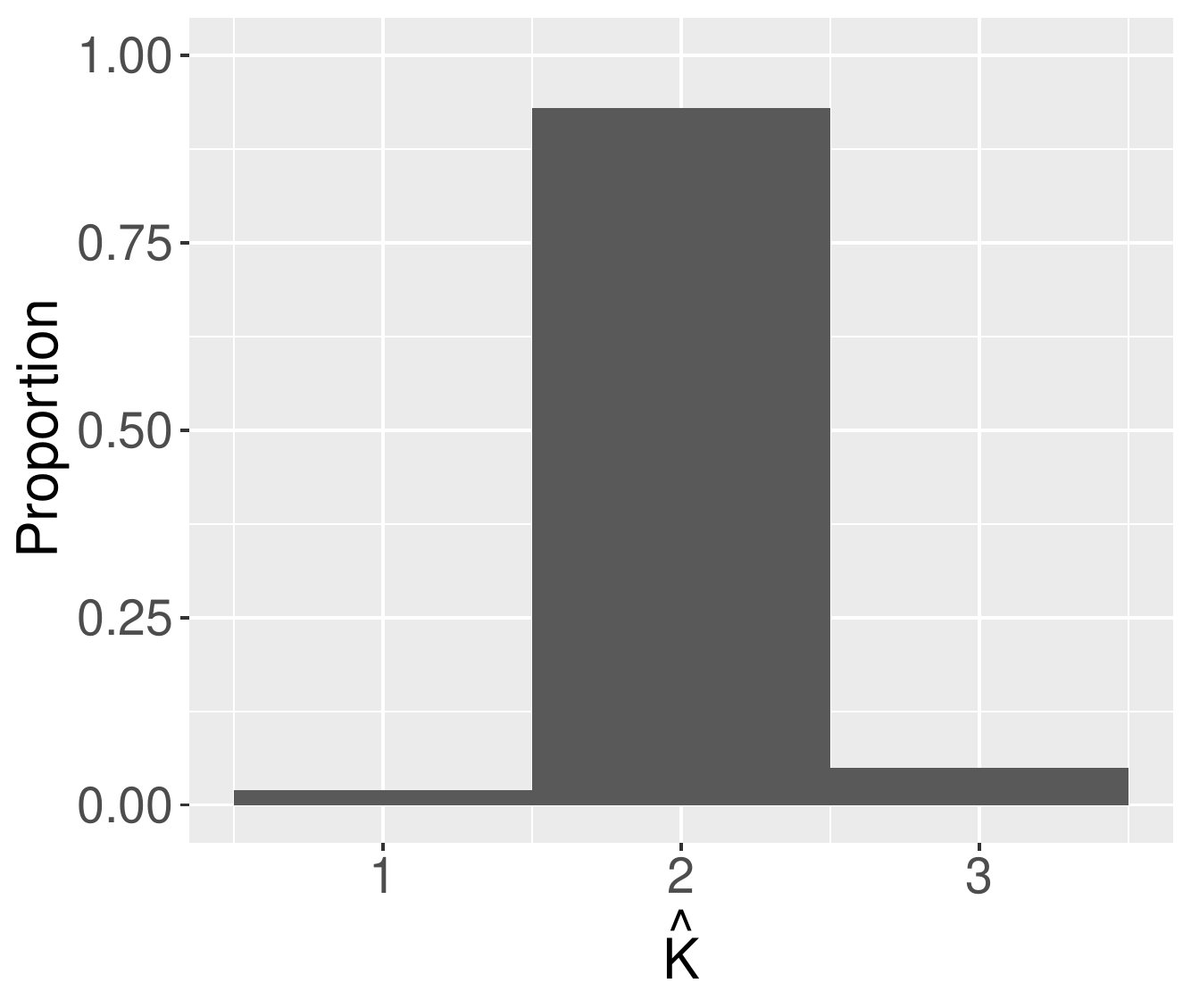}
	}
	\subfigure[]
	{
		\includegraphics[width=3.8cm]{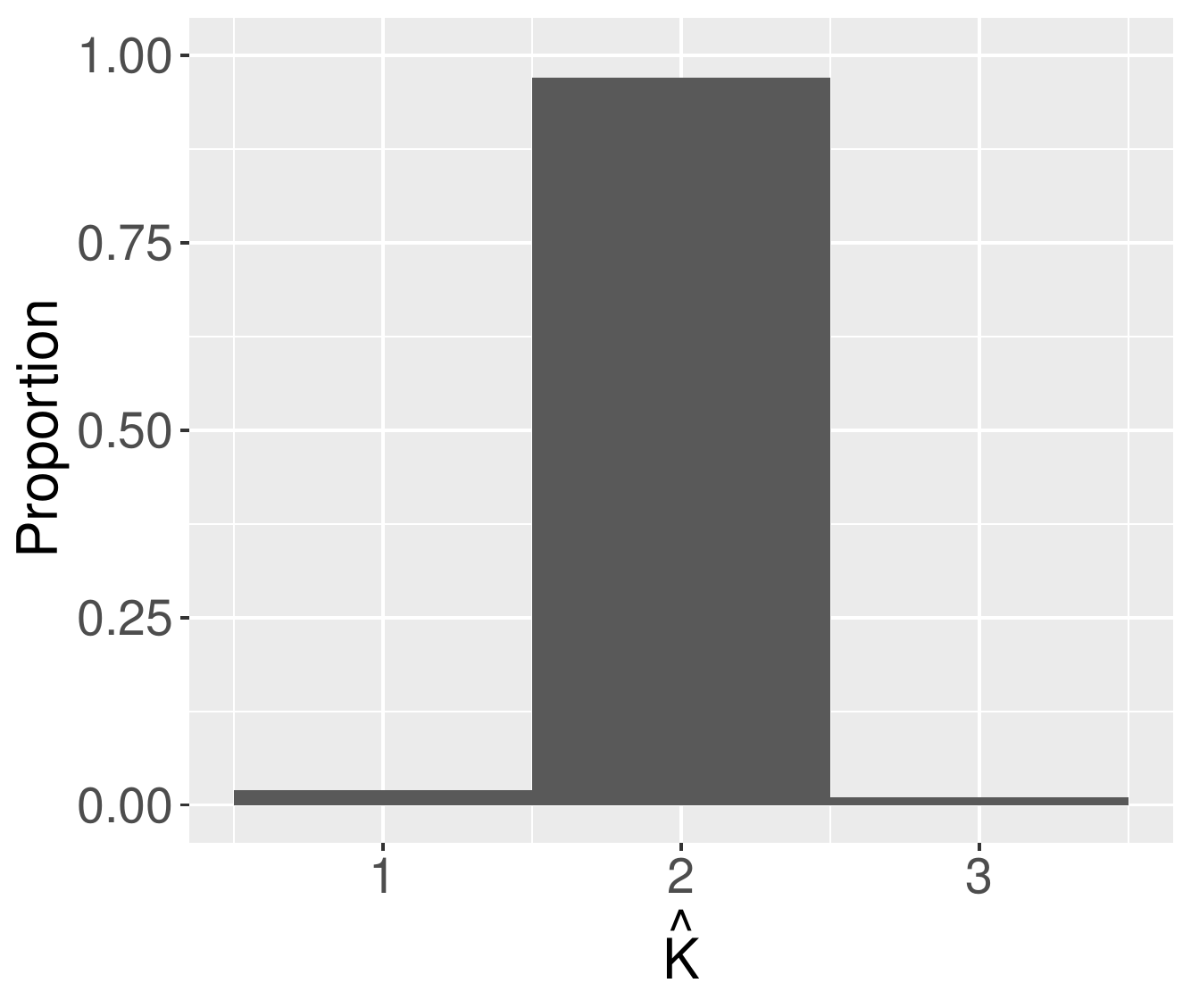} 
	}
	\subfigure[]
	{
		\includegraphics[width=3.8cm]{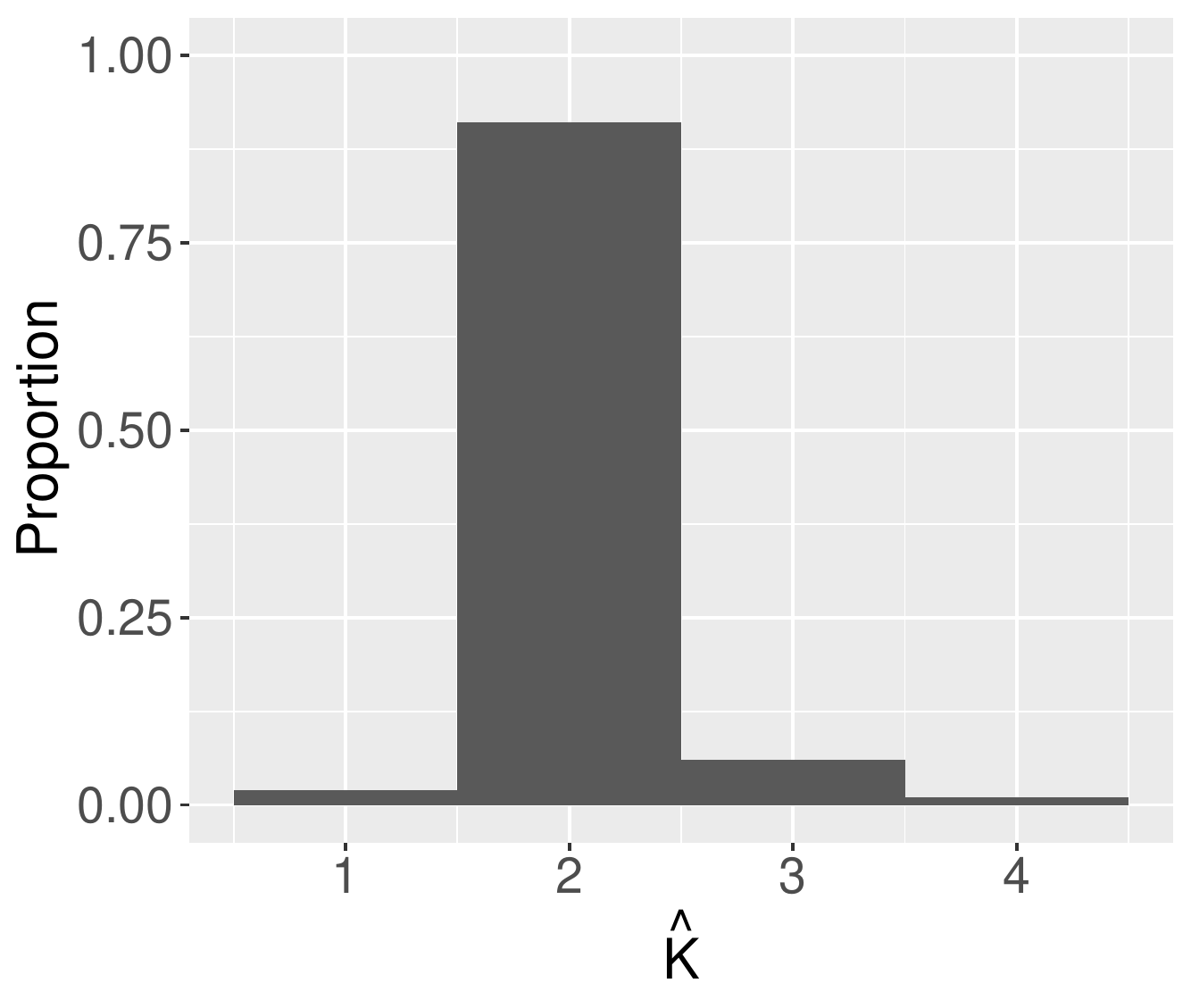}
	}
	\subfigure[]
	{
		\includegraphics[width=3.8cm]{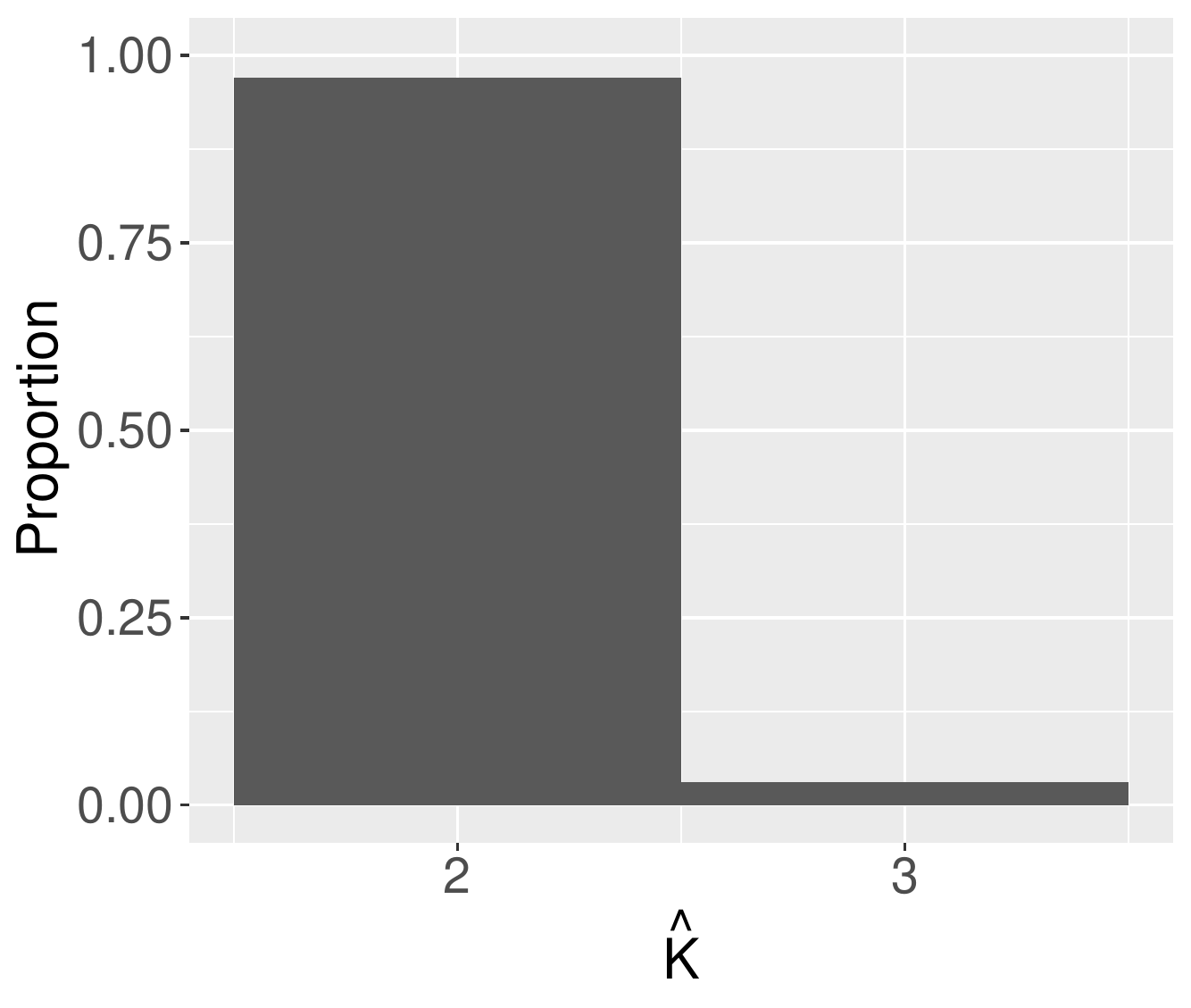} 
	}
	\subfigure[]
	{
		\includegraphics[width=3.8cm]{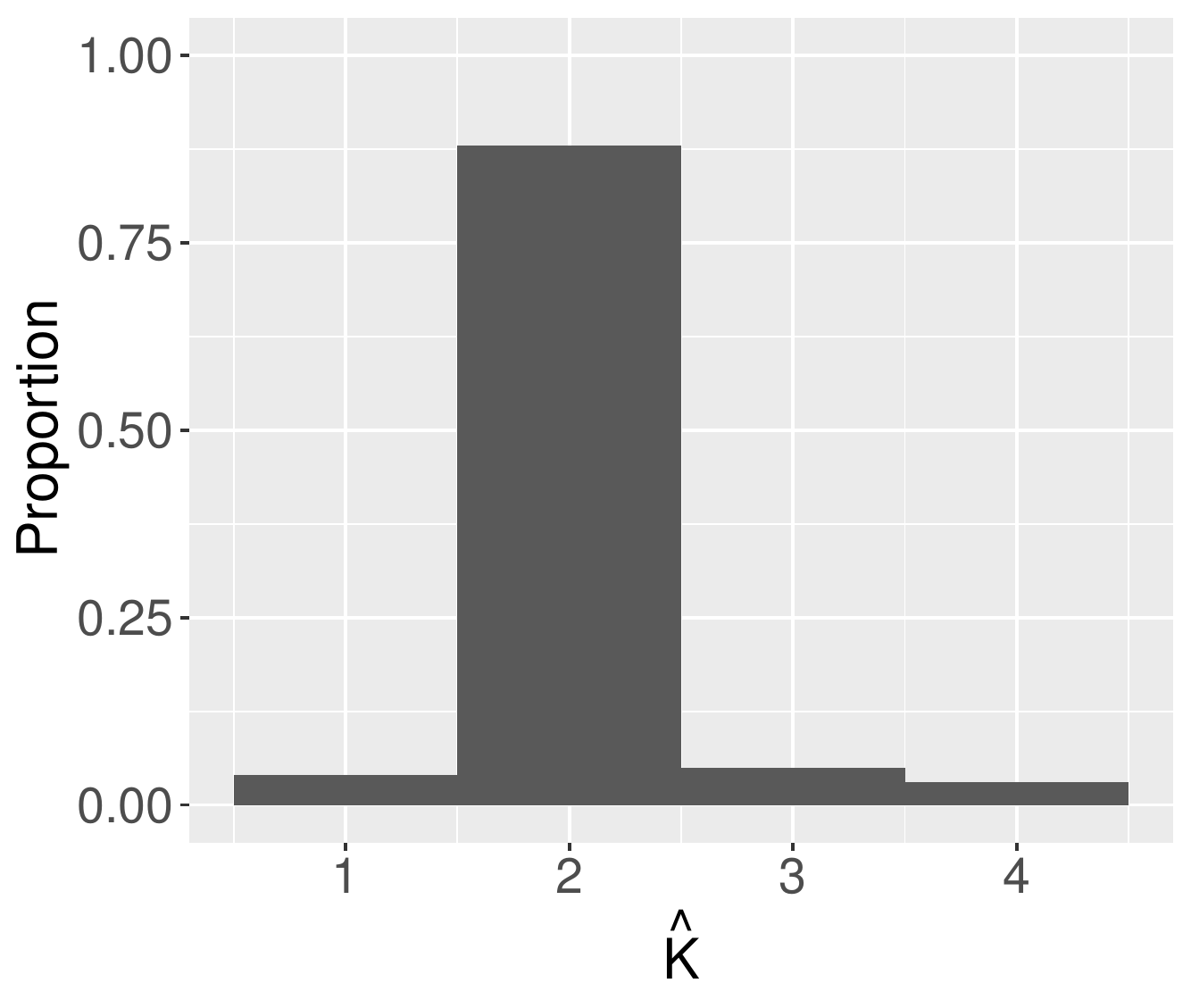}
	}
	\subfigure[]
	{
		\includegraphics[width=3.8cm]{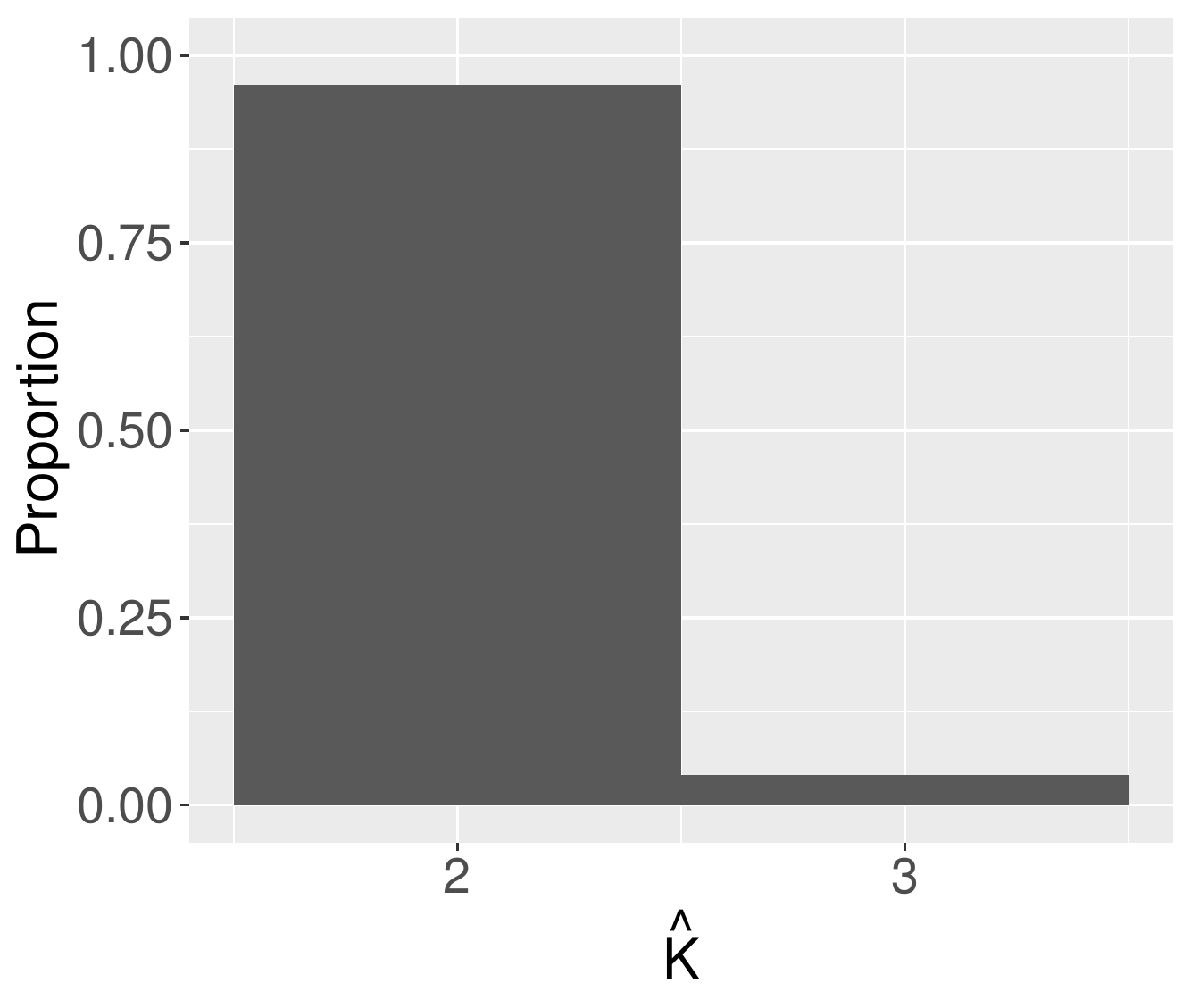} 
	}	
	\caption{Simulation results in Example 1: histograms of the estimated number of subgroups $\hat{K}$ under Scenario 2. (a) and (e) balanced structure with $n=40$, (b) and (f) balanced structure with $n=200$, (c) and (g) unbalanced structure with $n=40$, (d) and (h) unbalanced structure with $n=200$. (a)-(d) $a_{il}\sim N(2,1)$, and (b)-(h) $a_{il}\sim U(0,4)$.} 
	\label{fig:ex2}
\end{figure}

Figure \ref{fig:ex2} shows the distribution of the number of identified subgroups $\hat{K}$ for Scenario 2 in Example 1. The proposed approach can satisfactorily identify the number of true subgroups for all settings. As $n$ increases, the percentage of correctly determining the number of subgroups becomes larger.

\begin{figure}[H]
	\centering    
	\subfigure[]
	{
		\includegraphics[width=3.8cm]{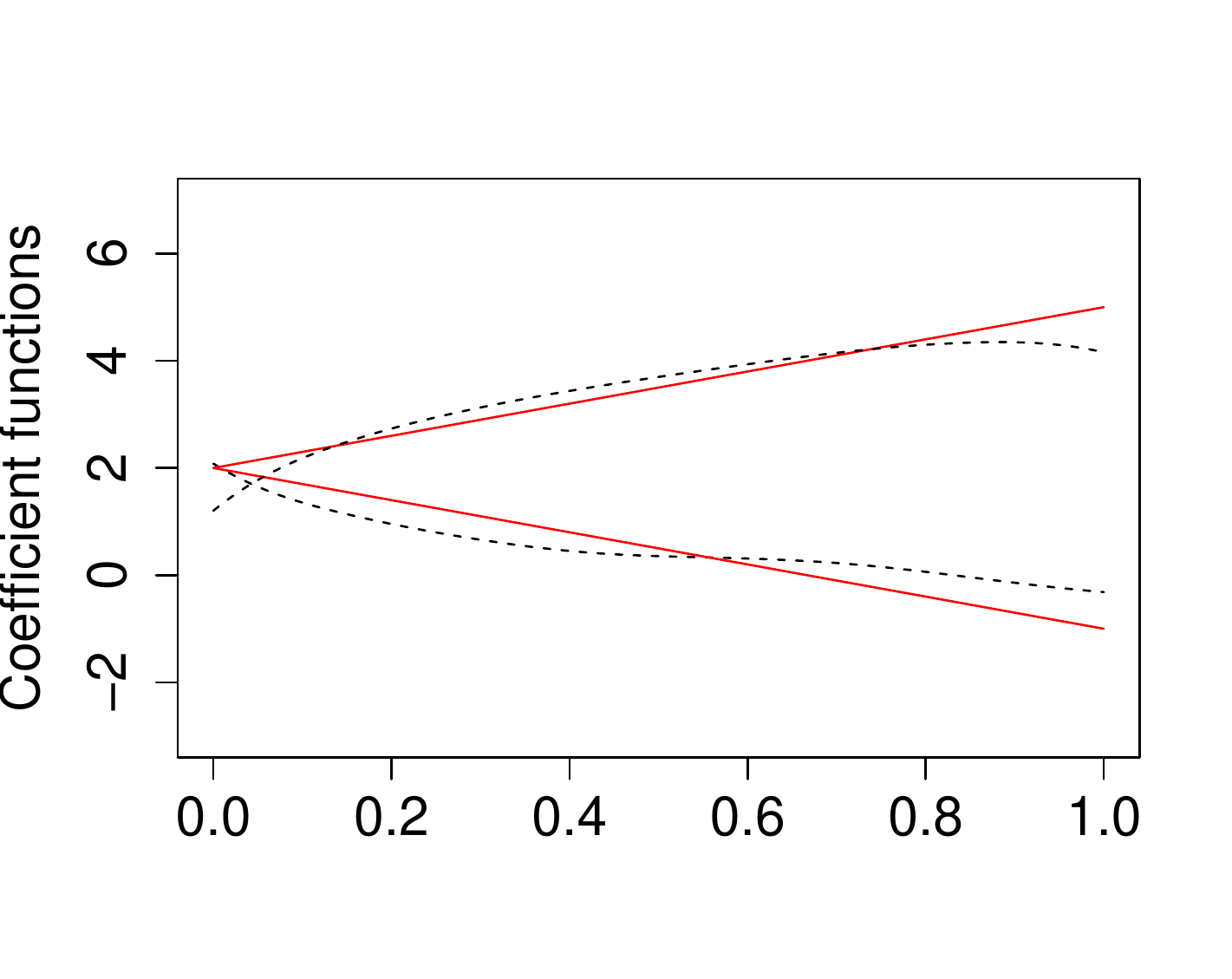}
	}
	\subfigure[]
	{
			\includegraphics[width=3.8cm]{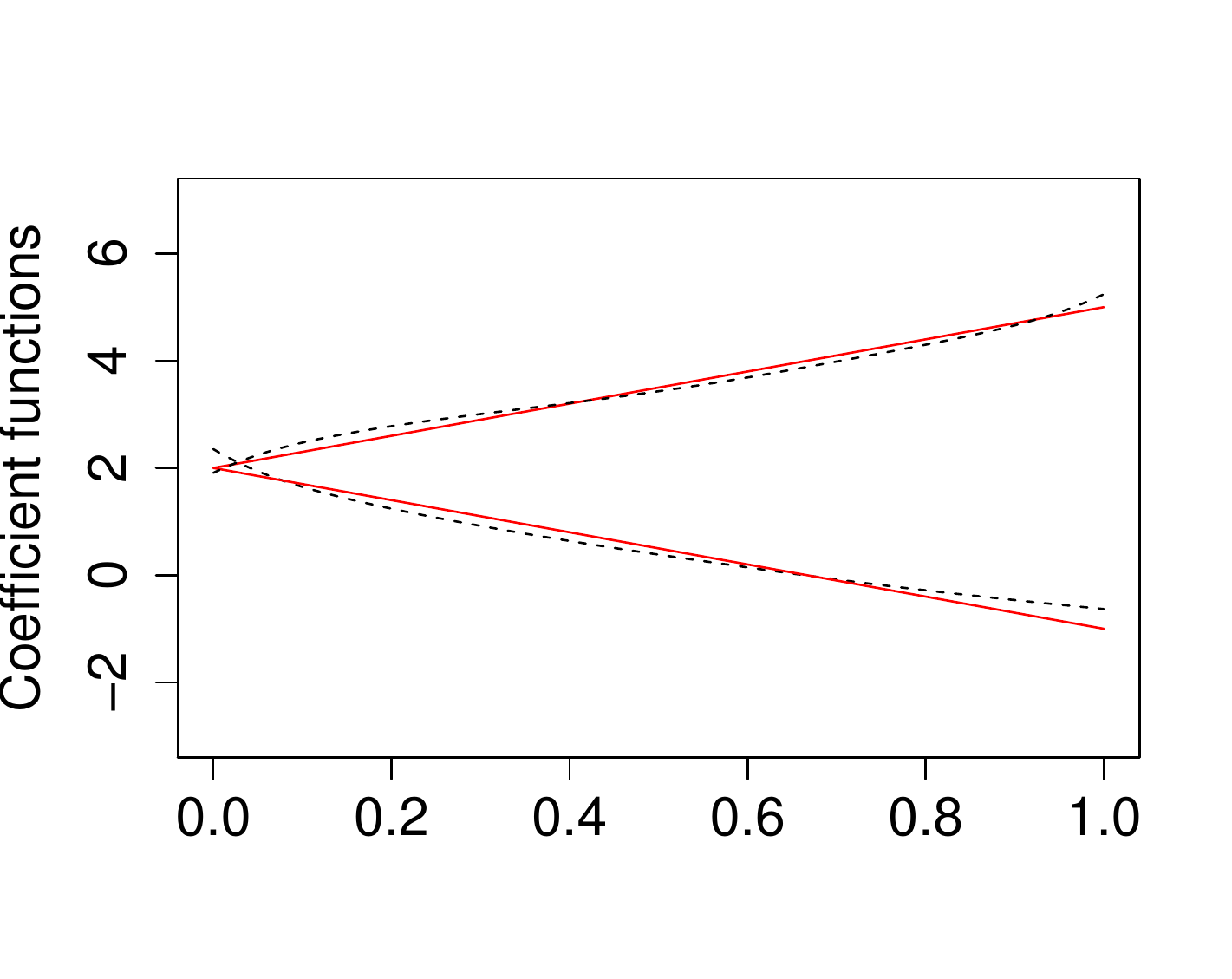} 
		}
		\subfigure[]
		{
			\includegraphics[width=3.8cm]{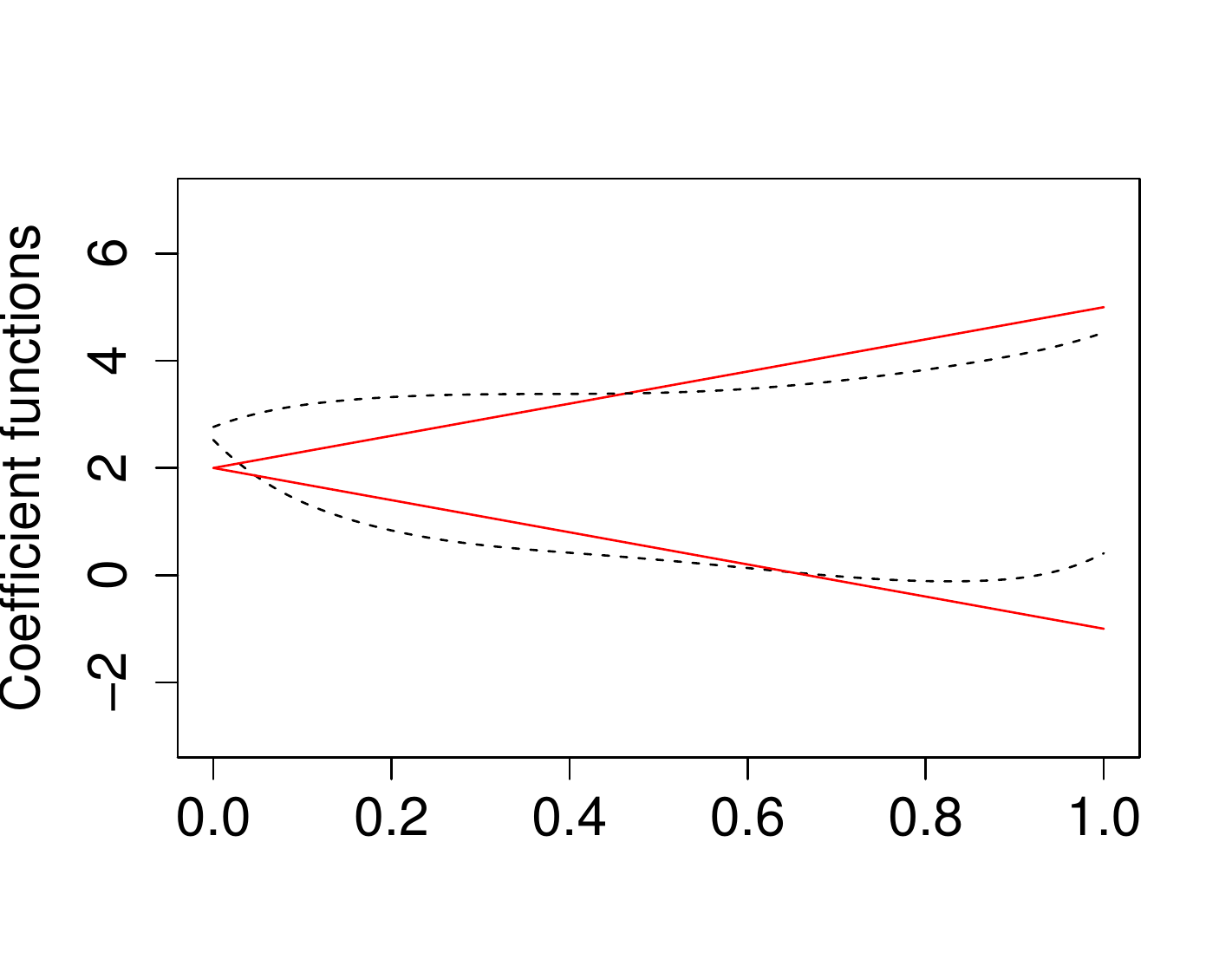}
		}
		\subfigure[]
		{
			\includegraphics[width=3.8cm]{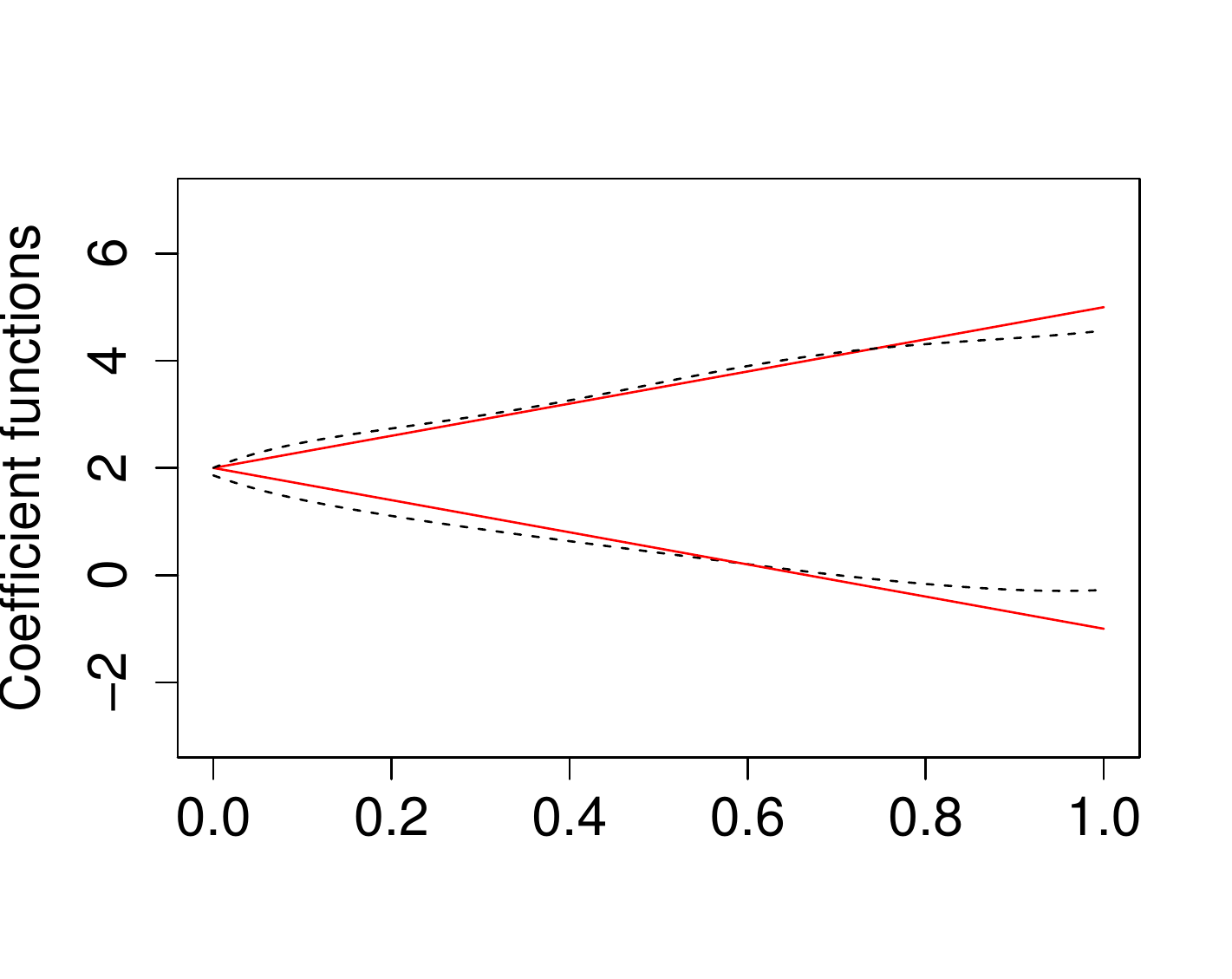} 
		}
		\subfigure[]
		{
			\includegraphics[width=3.8cm]{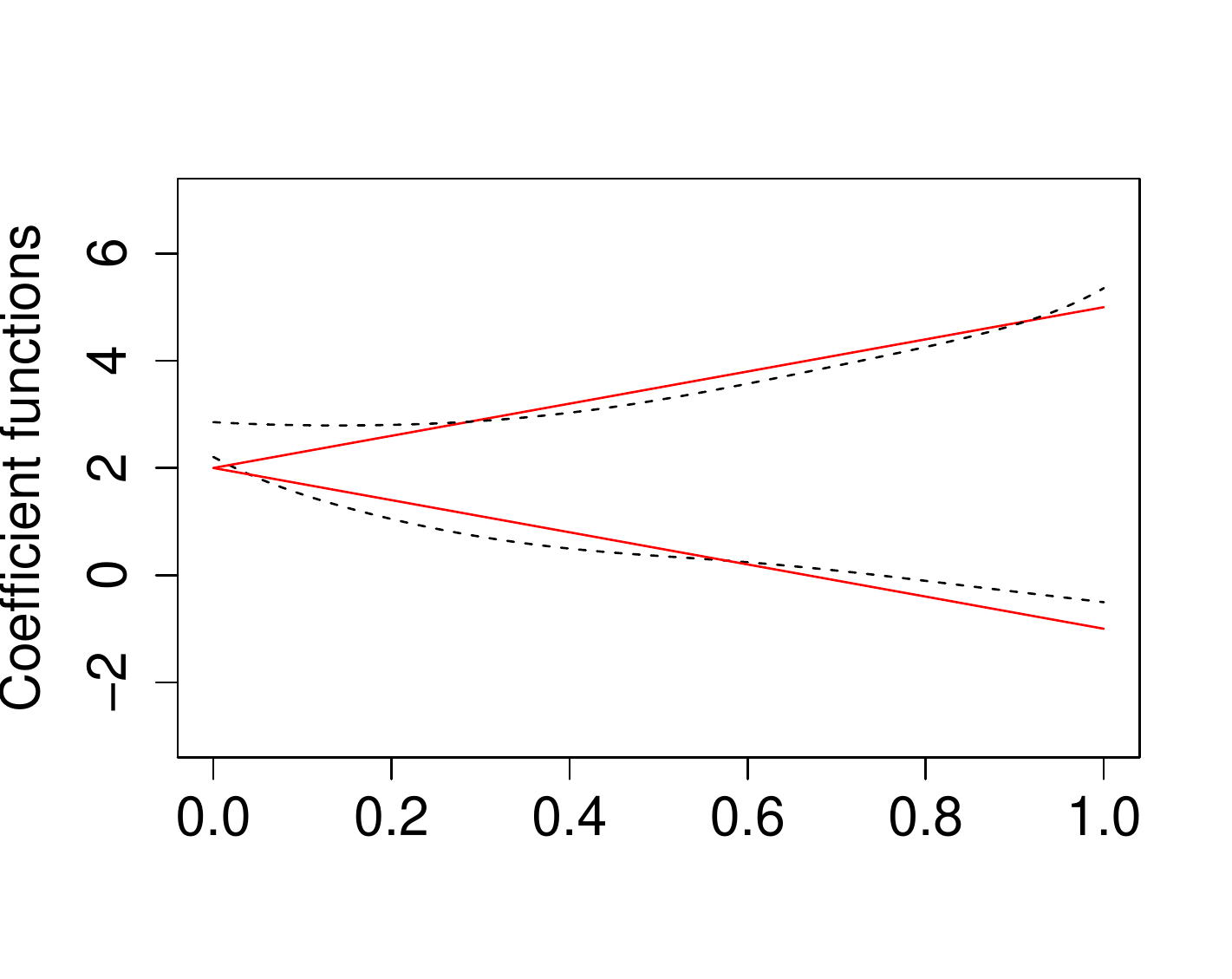}
		}
		\subfigure[]
		{
			\includegraphics[width=3.8cm]{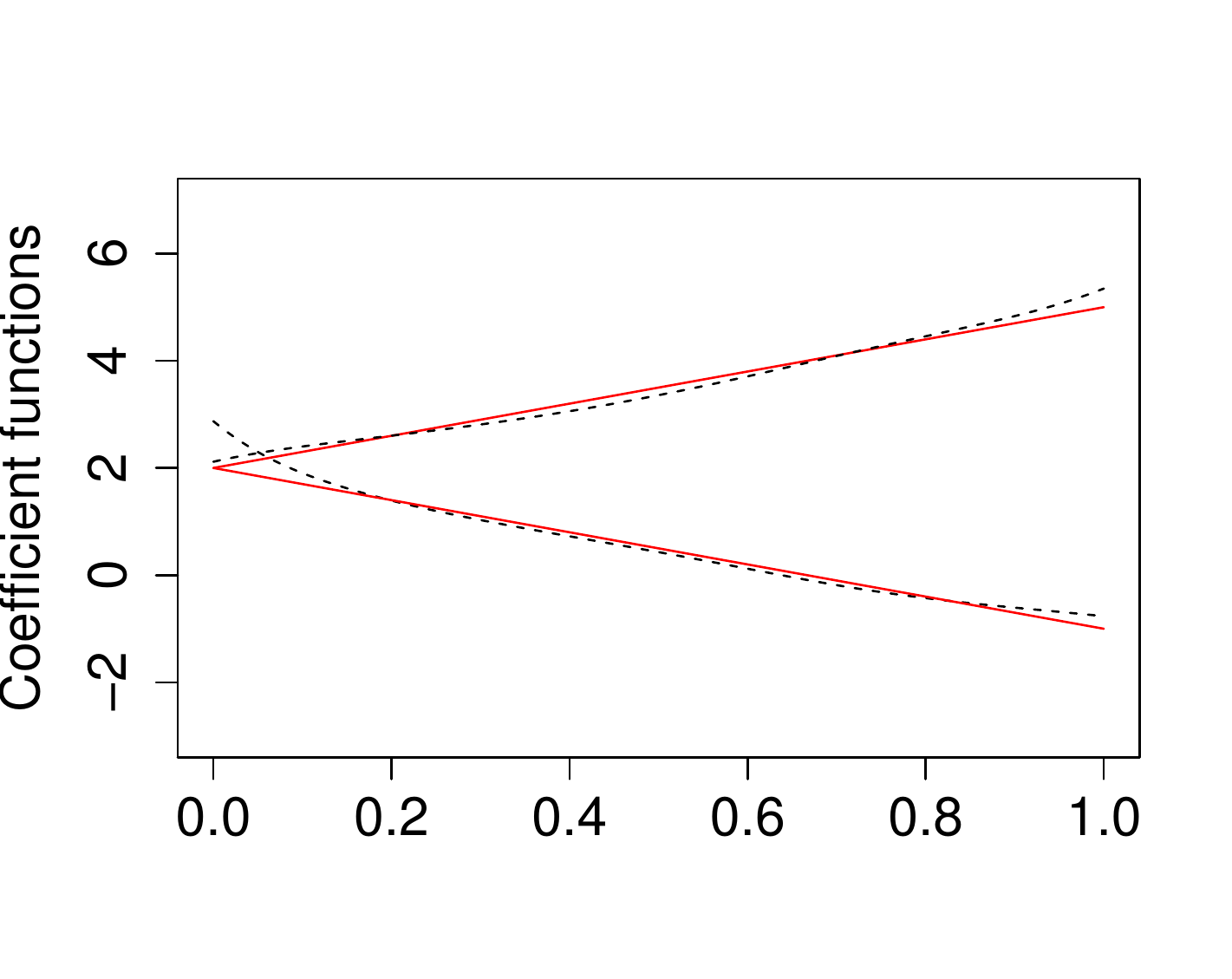} 
		}
		\subfigure[]
		{
			\includegraphics[width=3.8cm]{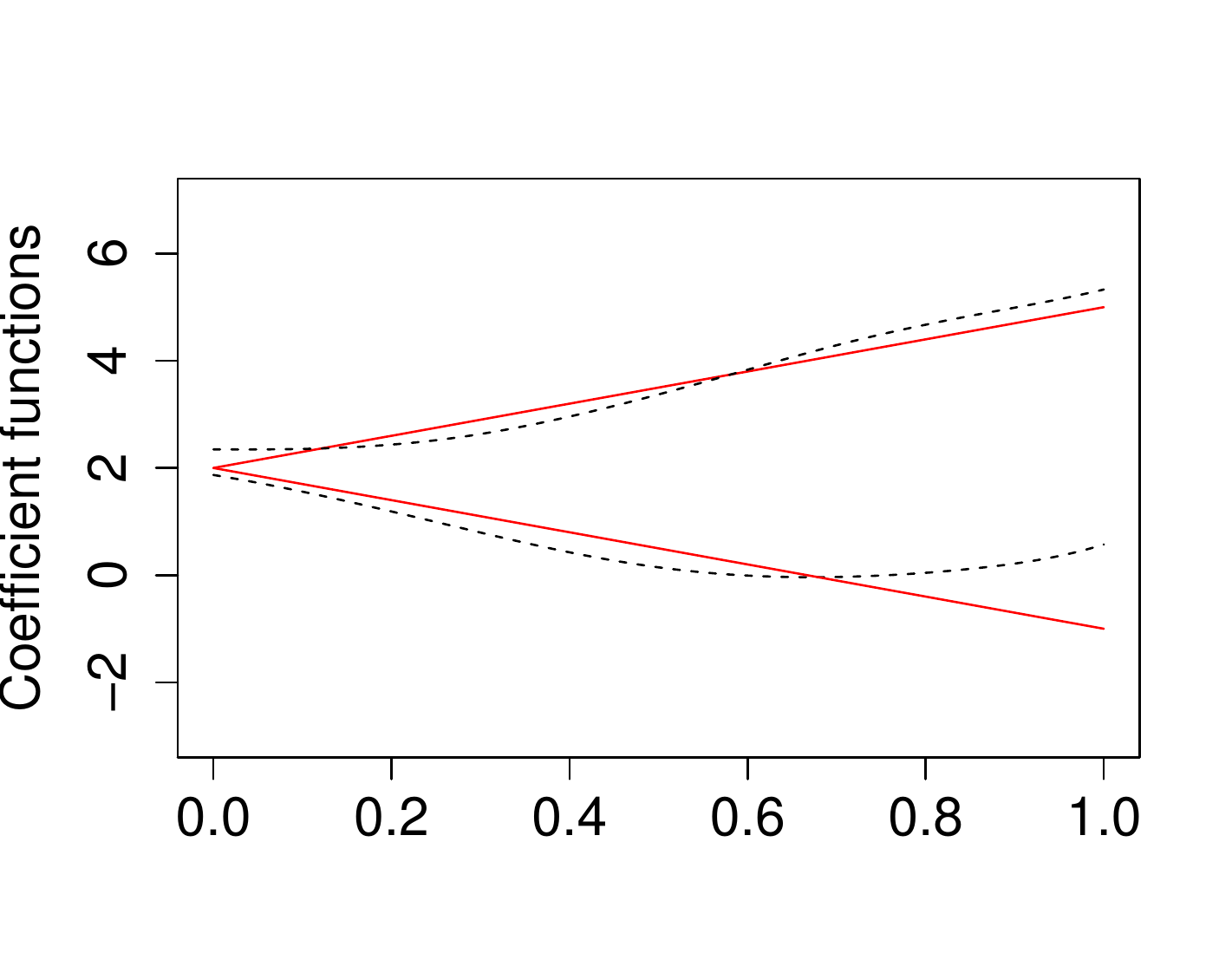}
		}
		\subfigure[]
		{
			\includegraphics[width=3.8cm]{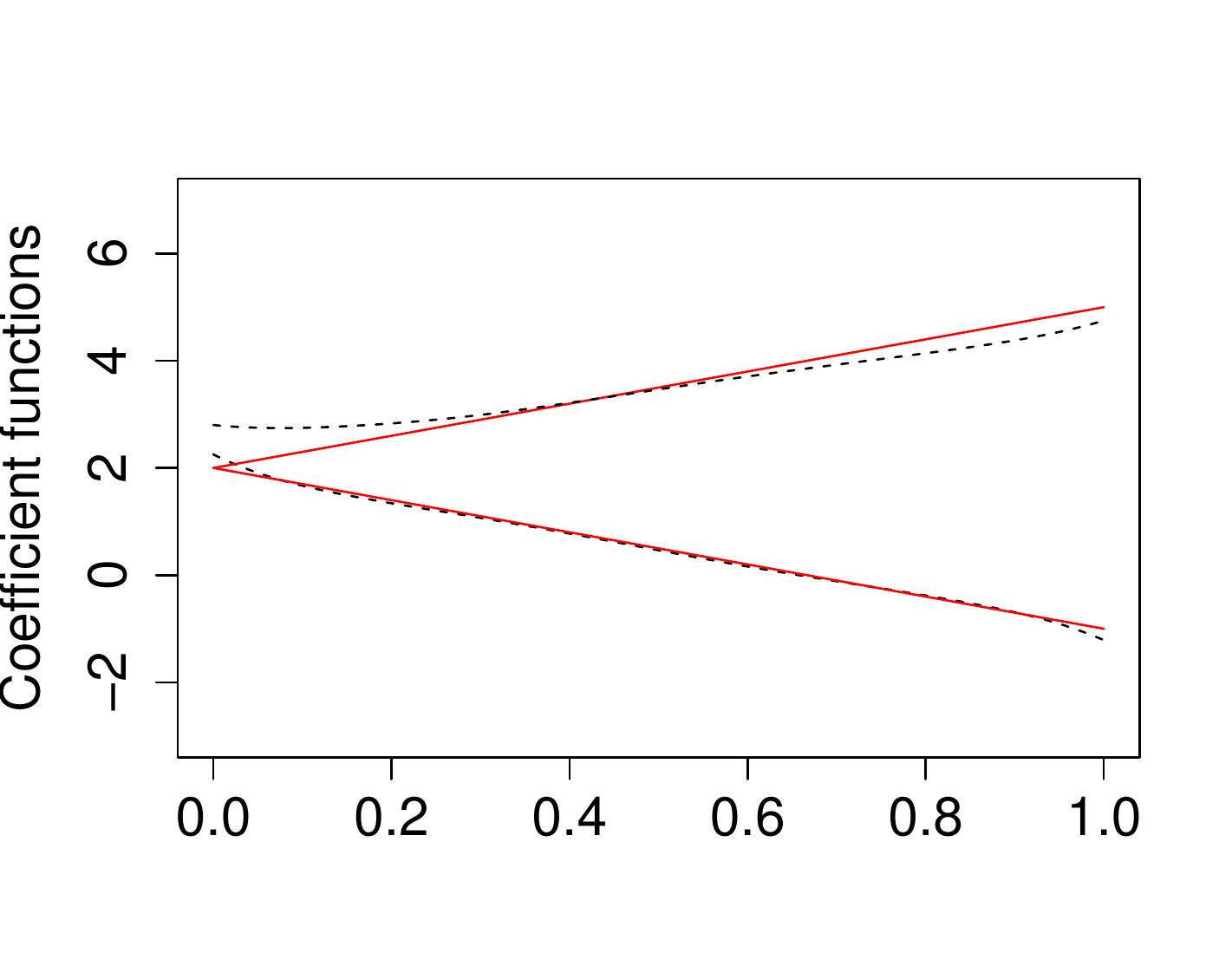} 
		}	
		\caption{Simulation results in Example 1: estimation of the coefficient functions under Scenario 2. (a) and (e) balanced structure with $n=40$, (b) and (f) balanced structure with $n=200$, (c) and (g) unbalanced structure with $n=40$, (d) and (h) unbalanced structure with $n=200$. (a)-(d) $a_{il}\sim N(2,1)$, and (b)-(h) $a_{il}\sim U(0,4)$. Red solid lines represent the true coefficient functions, and black dashed lines represent the estimated coefficient functions.} 
		\label{fig:coef2}
	\end{figure}
	
	Figure \ref{fig:coef2} presents estimations of the coefficient functions for Scenario 2 in Example 1. It can be observed that the underlying true coefficient function curves can be recovered well. 
	
	\begin{figure}[H]
		\centering    
		\subfigure[]
		{
			\includegraphics[width=3.8cm]{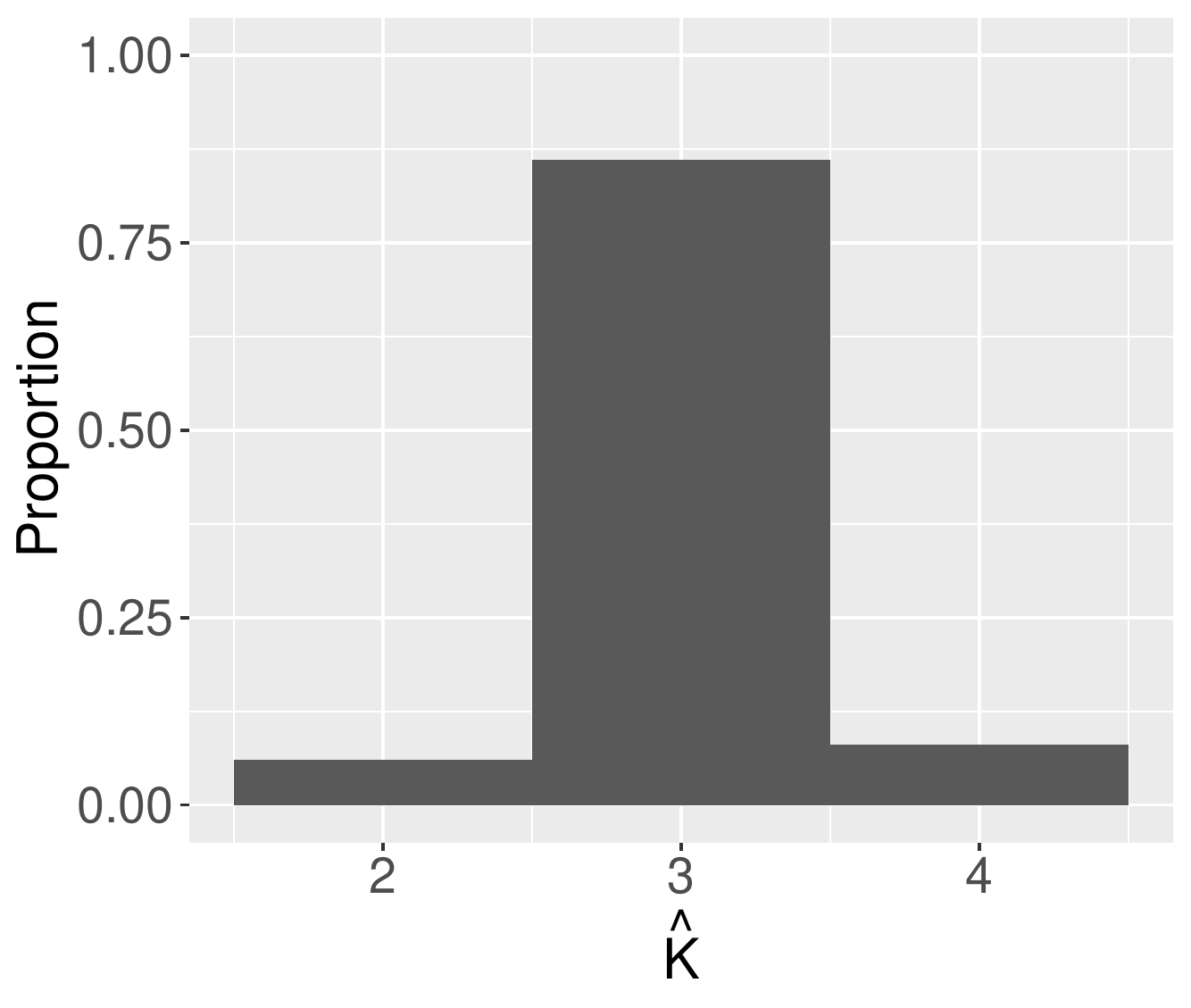}
		}
		\subfigure[]
		{
			\includegraphics[width=3.8cm]{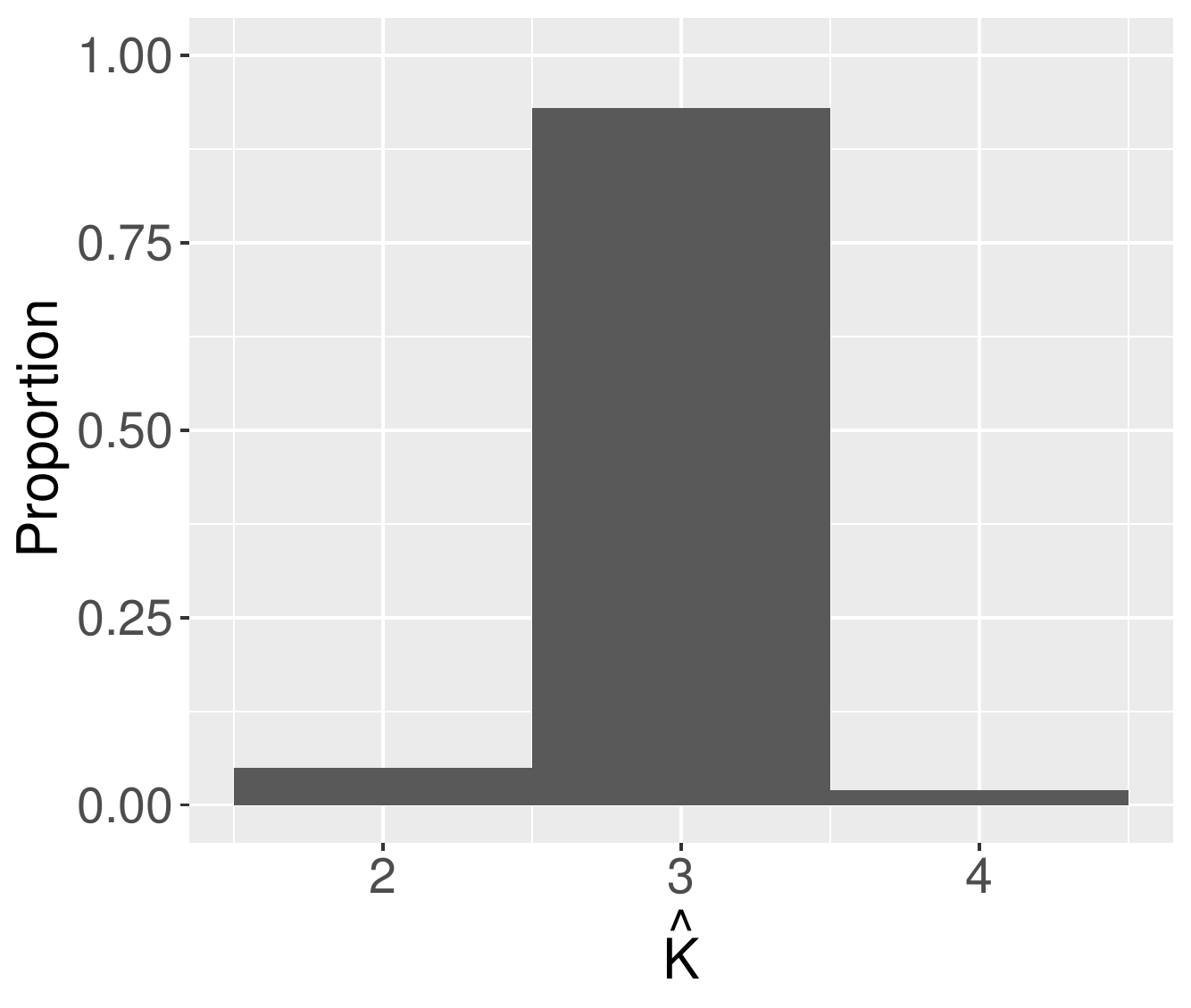} 
		}
		\subfigure[]
		{
			\includegraphics[width=3.8cm]{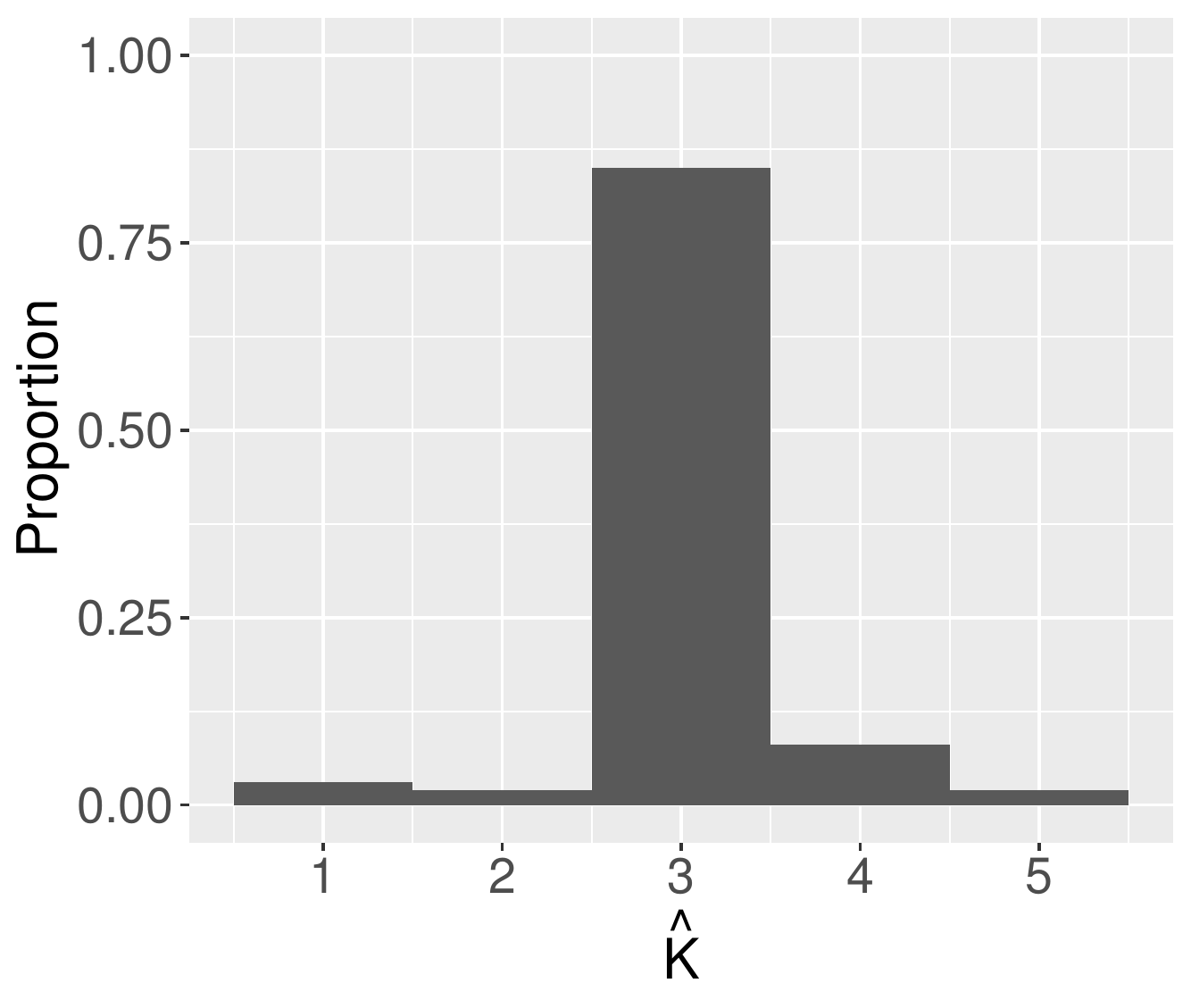}
		}
		\subfigure[] 
		{
			\includegraphics[width=3.8cm]{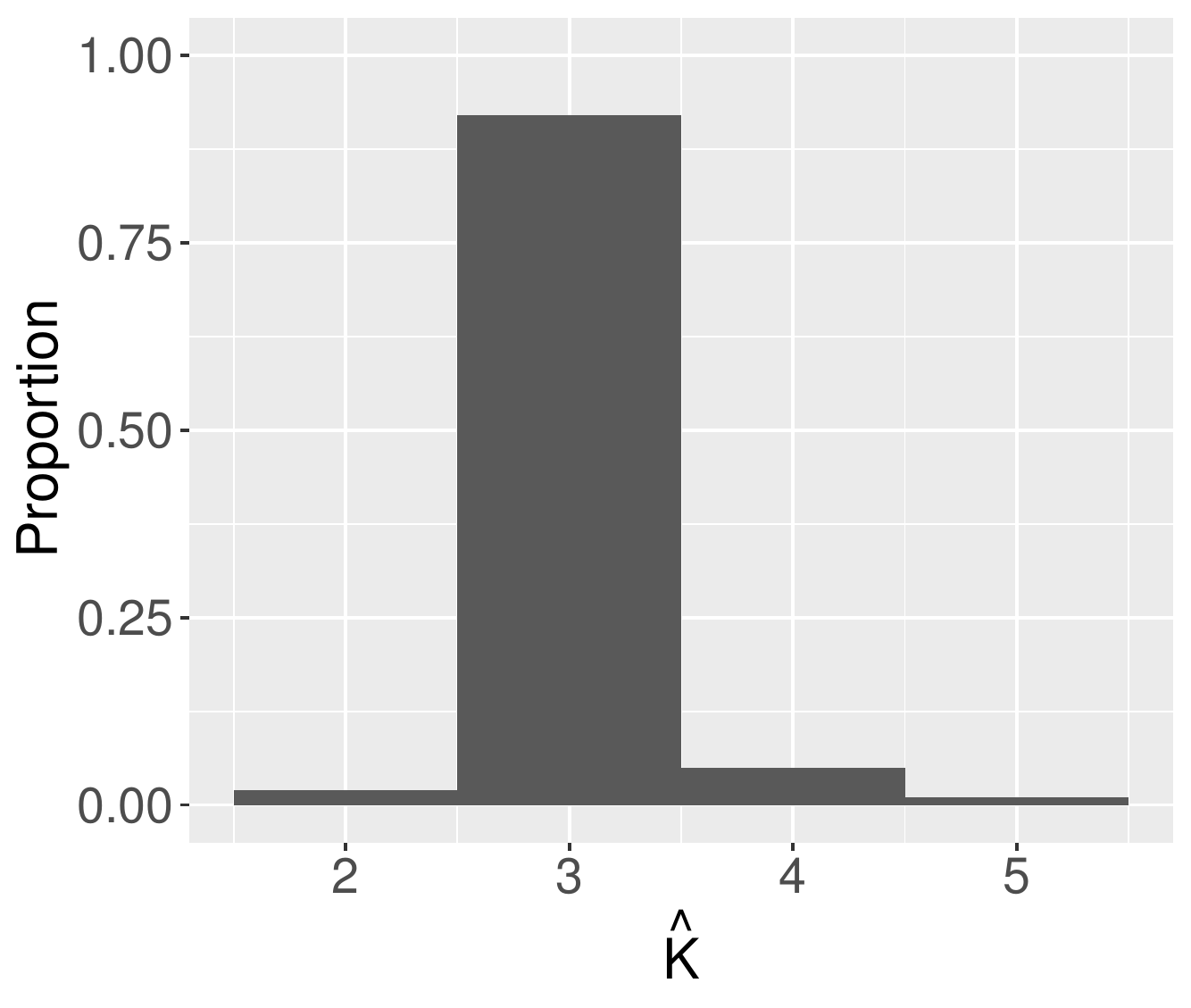} 
		}
		\subfigure[]
		{
			\includegraphics[width=3.8cm]{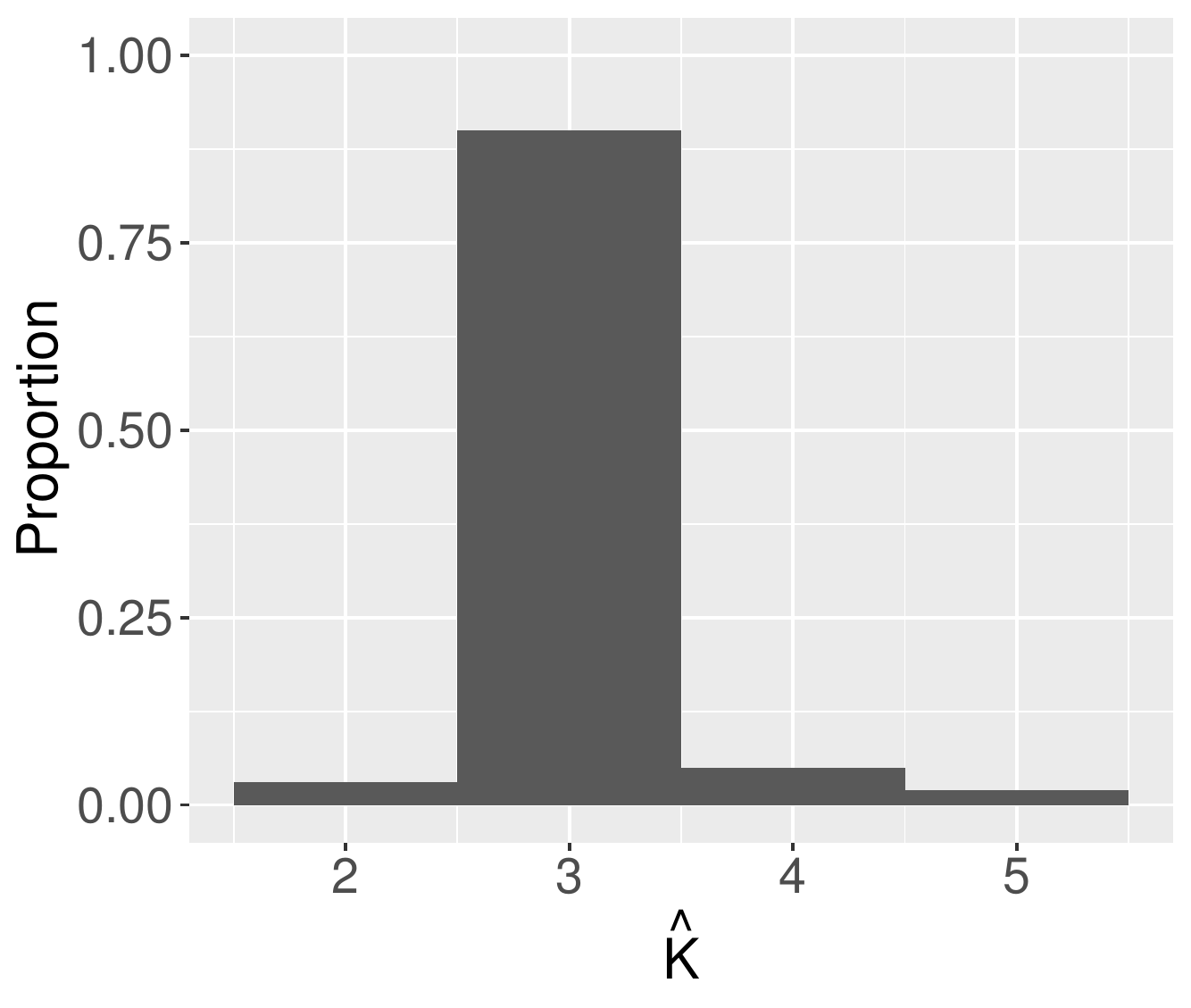}
		}
		\subfigure[]
		{
			\includegraphics[width=3.8cm]{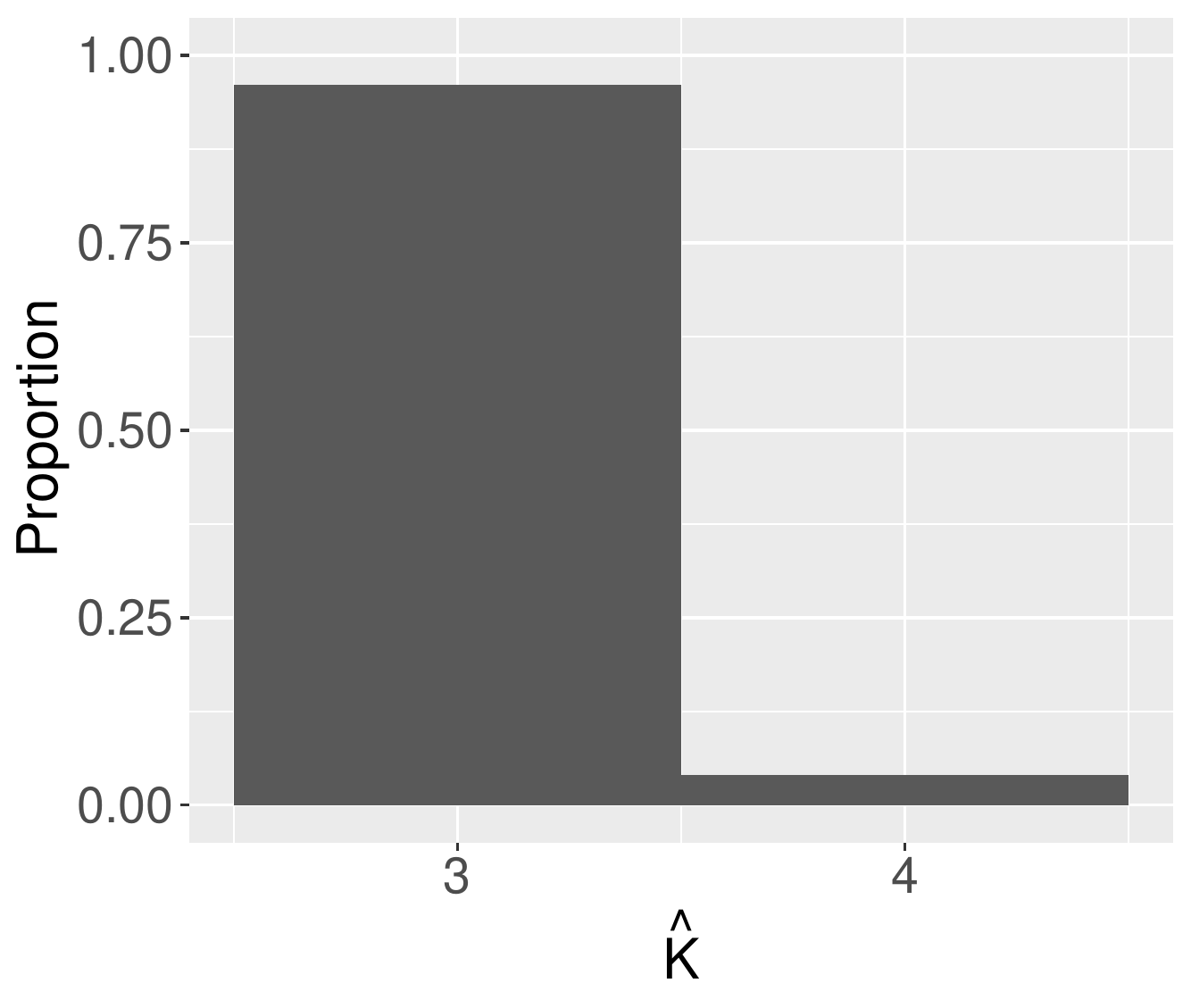} 
		}
		\subfigure[]
		{
			\includegraphics[width=3.8cm]{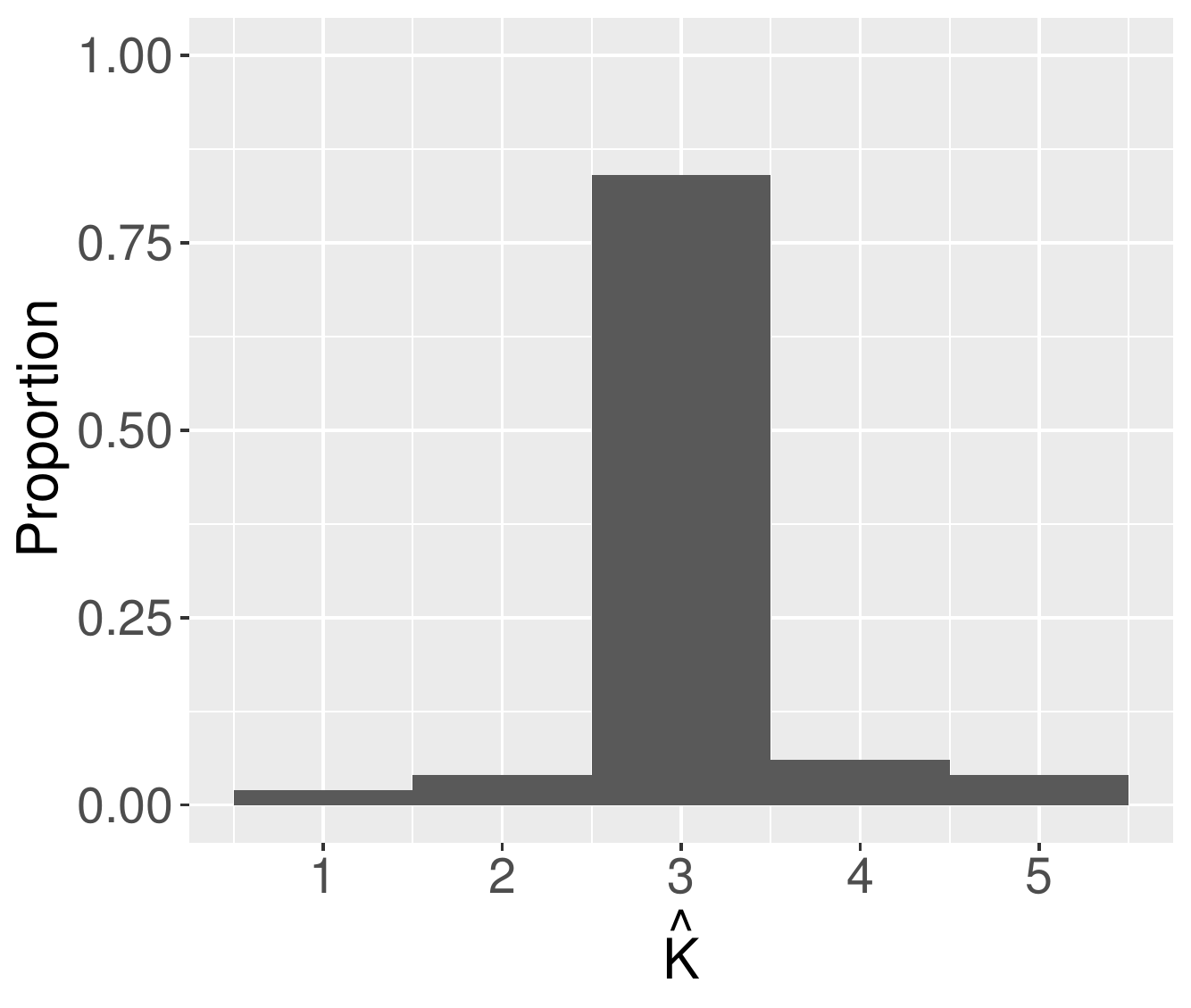}
		}
		\subfigure[]
		{
			\includegraphics[width=3.8cm]{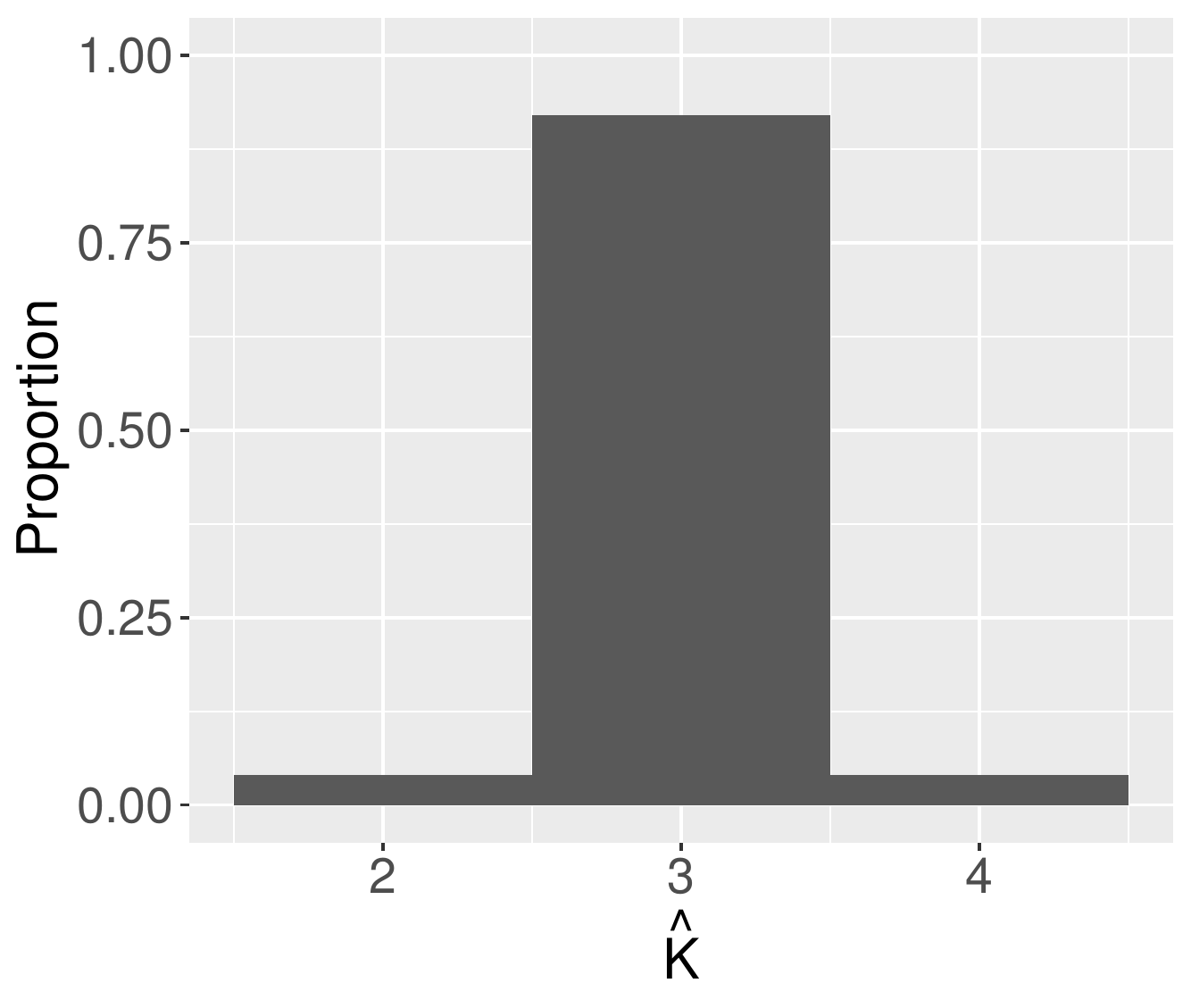} 
		}	
		\caption{Simulation results in Example 2: histograms of the estimated number of subgroups $\hat{K}$ by the proposed approach. (a) and (e) balanced structure with $n=60$, (b) and (f) balanced structure with $n=300$, (c) and (g) unbalanced structure with $n=60$, (d) and (h) unbalanced structure with $n=300$. (a)-(d) $a_{il}\sim N(2,1)$, and (b)-(h) $a_{il}\sim U(0,4)$.} 
		\label{fig:ex3}
	\end{figure}
	
	Figure \ref{fig:ex3} illustrates the distribution of the estimated number of subgroups by the proposed approach. We obtain similar patterns as those in Example 1. Again, the proposed approach can accurately identify the true number of subgroups.
	
	\begin{figure}[H]
		\centering
		\includegraphics[width=7.7cm]{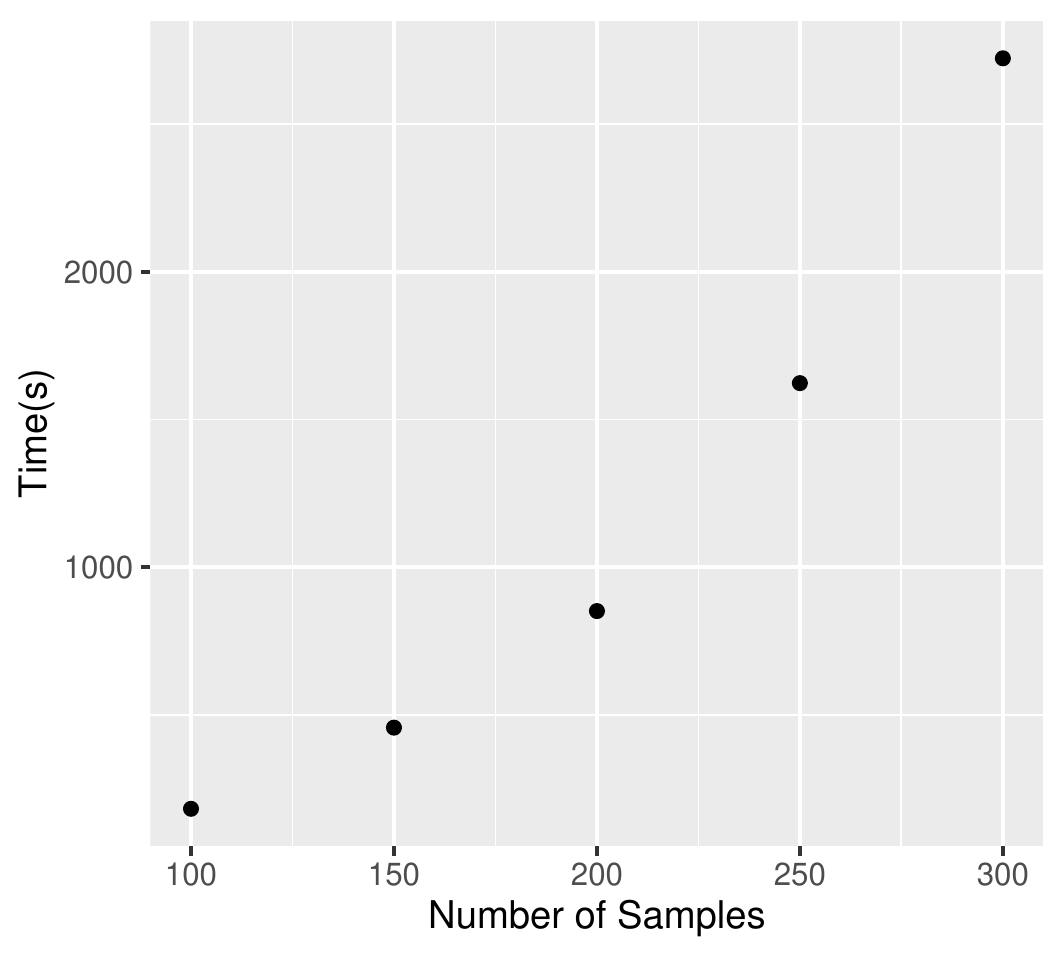}
		\caption{The consumed time (in seconds) of each experiment with different number of samples.}
		\label{timevsnum}
	\end{figure}
	
	Figure \ref{timevsnum} illustrates the consumed time (in seconds) running our algorithm with different sample sizes. It can be seen that the consumed time rapidly increases at a nearly $O(n^2)$ spped as the sample size increases. Therefore, in empirical studies, we randomly sampled 400 monitoring stations to reduce the computation cost.
	
	\begin{table}[]
		\centering
		\caption{Simulation results in Example 1: ARI and MSE based on 100 replications under Scenario 2.}
		\resizebox{\linewidth}{!}{\begin{tabular}{ccc|cc|cc}
				\hline
				\multicolumn{3}{c|}{\multirow{2}{*}{}}            & \multicolumn{2}{c|}{$n=40$} & \multicolumn{2}{c}{$n=200$} \\ \cline{4-7} 
				Structure & $a_{ij}$ & Method            & ARI                            & MSE            & ARI                             & MSE         \\ \hline
				B  & Norm   & Proposed & 0.959(0.003)     & 1.063(0.091)   & 0.991(0.001)        & 0.216(0.014)       \\
				&                       & Oracle   & & 0.911(0.082)     &                  & 0.205(0.012)                  \\
				&                       & FIMR &   0.633(0.032)       &39.492(5.284)                              &    0.844(0.011) &  7.877(2.298)                  \\                    
				&                       & Resp & 0.942(0.005)    & 1.617(0.122)     & 0.964(0.002)   & 0.462(0.024)        \\
				&                       & Resi & 0.893(0.008)   & 2.932(0.376)        & 0.967(0.004)   & 0.731(0.059)     \\
				
				& Unif & Proposed & 0.969(0.004)  & 0.977(0.084)    & 0.996(0.001)      & 0.254(0.015)       \\
				&                       & Oracle   & & 0.835(0.105)     &                  & 0.232(0.012)   \\
				&                       & FIMR &   0.664(0.022)      &                               47.027(7.523) &  0.893(0.023)& 4.295(2.010)                       \\            
				&                       & Resp  & 0.935(0.005)   & 1.542(0.096)       & 0.965(0.002) & 0.360(0.017)        \\
				&                       & Resi  & 0.909(0.007)  & 1.923(0.184)        & 0.974(0.003)  & 0.347(0.029)        \\ \hline
				UB & Norm  & Proposed & 0.964(0.002)& 1.124(0.096)        & 0.995(0.001)   & 0.341(0.014)        \\
				&                       & Oracle  & & 0.853(0.071)     &                  & 0.255(0.013)             \\
				&                       & FIMR &   0.749(0.016)                          &                               25.827(5.638) & 0.972(0.011) & 3.522(1.296)                         \\     
				&                       & Resp  & 0.938(0.007)  & 1.817(0.124)     & 0.973(0.003)   & 0.563(0.026)         \\
				&                       & Resi  & 0.923(0.004)  & 1.931(0.168)       & 0.967(0.002)  & 0.523(0.023)        \\
				& Unif         & Proposed & 0.968(0.007) & 1.230(0.102)      & 0.989(0.002)     & 0.381(0.015)       \\
				&                       & Oracle   & & 1.014(0.096)     &                  & 0.386(0.015)            \\ 
				&                       & FIMR &    0.788(0.013)                         & 27.352(4.985)                               & 0.974(0.009) & 3.369(1.074)                        \\     
				&                       & Resp & 0.946(0.009)  & 1.815(0.126)     & 0.952(0.004)    & 0.667(0.017)         \\
				&                       & Resi   & 0.919(0.010) & 2.043(0.198)       & 0.980(0.004)  & 0.422(0.024)        \\ \hline
				
		\end{tabular}}
		\label{table2}
	\end{table}
	
	Subgrouping and estimation results for Scenario 2 in Example 1 are summarized in Table \ref{table2}. We obtain similar results as in Scenario 1 in the main manuscript. Since the Oracle approach assumes the true subgroup structure, the proposed approach behaves inferior to the Oracle as expected. But it has significant advantages over the FIMR, Resp and Resi approaches. The values of ARI and MSE increase as $n$ grows for the proposed approach. This supports the estimation consistency established in Theorem 1 and 2.
	
	\begin{table}[]
		\centering
		\caption{Simulation results in Example 2: ARI and MSE. Each cell shows the mean(s.d.).}
		\resizebox{\linewidth}{!}{\begin{tabular}{ccc|cc|cc}
				\hline
				\multicolumn{3}{c|}{\multirow{2}{*}{}}            & \multicolumn{2}{c|}{$n=60$} & \multicolumn{2}{c}{$n=300$} \\ \cline{4-7} 
				Structure & $a_{ij}$ & Method            & ARI                            & MSE            & ARI                             & MSE         \\ \hline
				B  & Norm & Proposed  & 0.948(0.006)  & 1.969(0.508)    & 0.969(0.001)     & 0.877(0.021)      \\
				&                       & Oracle  &       & 1.096(0.104)                &     &                0.550(0.018)  \\
				&                      & FIMR &  0.533(0.028)         &395.569(38.380)                      &  0.789(0.013)                      &32.836(7.746) \\
				&                       & Resp   & 0.834(0.009)  & 4.230(0.344)   & 0.855(0.002)     & 2.317(0.052)       \\
				&                       & Resi    & 0.781(0.018)  & 7.304(1.103)    & 0.937(0.003)  & 1.606(0.122)       \\
				& Unif  & Proposed & 0.958(0.007)   & 1.332(0.115)     & 0.996(0.002)    & 0.724(0.027)       \\
				&                       & Oracle &   & 0.977(0.051)    & & 0.341(0.019)   \\
				&                      & FIMR &  0.588(0.025)                        &                     402.457(42.825) & 0.763(0.021)       & 24.294(5.283) \\ 
				&                       & Resp   & 0.772(0.016)  & 6.269(0.481)      & 0.795(0.009)  & 2.997(0.153)       \\
				&                       & Resi  & 0.752(0.015)   & 10.045(1.225)       & 0.903(0.003)   & 1.846(0.127)    \\\hline
				UB  & Norm & Proposed & 0.944(0.005)   & 1.627(0.178)     & 0.987(0.002)       & 0.906(0.022)    \\
				&                       & Orcale   & & 0.921(0.098)      &                 & 0.415(0.020)        \\
				&                      & FIMR &  0.611(0.030)         &284.565(28.210)                      &    0.876(0.012)            & 9.397(1.942)\\ 
				&                       & Resp & 0.608(0.025) & 6.386(0.553)      & 0.694(0.016)     & 4.142(0.279)      \\
				&                       & Resi     & 0.751(0.016)  & 7.336(0.751)     & 0.825(0.005)   & 2.618(0.197)     \\ 
				& Unif & Proposed   & 0.954(0.008)  & 1.511(0.285)         & 0.970(0.002) & 0.739(0.024)        \\
				&                       & Oracle  &  & 1.042(0.138)      &                 & 0.412(0.015)                    \\ 
				&                      & FIMR & 0.632(0.027)   & 269.922(25.686)                      &  0.843(0.021)       & 13.415(2.840) \\ 
				&                       & Resp   & 0.619(0.012)   & 7.991(0.368)    & 0.677(0.008) & 5.124(0.481)         \\
				&                       & Resi  & 0.787(0.017)  & 7.193(0.815)     & 0.863(0.005)     & 2.354(0.125)     \\ \hline
				
		\end{tabular}}\\
		\raggedright
		\label{table3}
	\end{table}
	
	Table \ref{table3} summarizes the results of subgrouping and estimation in Example 2. Similar to Example 1, aside from the Oracle approach, the ARI of the proposed approach is always the highest; therefore, it outperforms alternatives in the accuracy of identifying subgroup memberships. In terms of estimation accuracy, the proposed approach is again observed to have favorable performance. Compared to Example 1 where there are only two subgroups, here, the advantages of the proposed approach in terms of subgrouping and estimation are more significant. Consider for example $n=60$, balanced structure, and $a_{il}$'s are generated from norm distribution $N(2,1)$. The proposed approach has (ARI, MSE)=(0.948, 1.969), which is superior to the alternatives: (0.533, 395.569) for the FIMR approach, (0.834, 4.230) for the Resp approach, and (0.781, 7.304) for the Resi approach. FIMR has the worst performance.
	
	\begin{table}
		\caption{Data analysis: comparison of grouping results. In each cell, the value of NMI.}
		\begin{center}
			\setlength{\tabcolsep}{2.5pt}
			\renewcommand{\arraystretch}{1.0}
			\begin{tabular}{lcccc}
				\hline
				&  Proposed & FIMR &Resp & Resi \\ \hline
				Proposed &    1     &   0.092  &   0.203      &  0.150  \\
				FIMR         &         &   1       &    0.126     & 0.183 \\
				Resp        &          &          &     1     &  0.336  \\
				Resi        &           &          &          &   1 \\ 
				
				\hline
			\end{tabular}
		\end{center}
		\label{tab:NMI}
	\end{table}
	
	Table \ref{tab:NMI} presents the Normalized Mutual Information (NMI) which measures the similarity between the subgrouping results, with range $[0,1]$ and a larger value indicating a higher degree of similarity. We find that the proposed approach has low similarity with the alternatives because the NMIs are low.


\bibliographystyle{myjmva}
\bibliography{cita2}

\end{document}